\documentclass[aps,10pt,onecolumn,showpacs,citeautoscript,amsmath,amssymb,floatfix,superscriptaddress,pra]{revtex4-2}

\usepackage{eurosym}
\usepackage{amsfonts}
\usepackage{braket}
\usepackage[utf8]{inputenc}
\usepackage[T1]{fontenc}

\usepackage[most]{tcolorbox}

\usepackage[normalem]{ulem}

\makeatletter

\DeclareFontFamily{U}{MnSymbolF}{}

\DeclareFontShape{U}{MnSymbolF}{m}{n}{
    <-6>  MnSymbolF5
   <6-7>  MnSymbolF6
   <7-8>  MnSymbolF7
   <8-9>  MnSymbolF8
   <9-10> MnSymbolF9
  <10-12> MnSymbolF10
  <12->   MnSymbolF12}{}

\DeclareFontShape{U}{MnSymbolF}{b}{n}{
    <-6>  MnSymbolF-Bold5
   <6-7>  MnSymbolF-Bold6
   <7-8>  MnSymbolF-Bold7
   <8-9>  MnSymbolF-Bold8
   <9-10> MnSymbolF-Bold9
  <10-12> MnSymbolF-Bold10
  <12->   MnSymbolF-Bold12}{}

\DeclareSymbolFont{MnSymbolFonly}{U}{MnSymbolF}{m}{n}
\SetSymbolFont{MnSymbolFonly}{bold}{U}{MnSymbolF}{b}{n}

\DeclareMathSymbol{\tsumint}{\mathop}{MnSymbolFonly}{"6E}
\DeclareMathSymbol{\dsumint}{\mathop}{MnSymbolFonly}{"6F}

\def\sumint{\DOTSI\tsumint\ilimits@}

\makeatother

\newcommand{\aidisclosure}{%
   {\footnotesize(\emph{AI disclosure}: Illustration created with the assistance of ChatGPT 5.5.)}
}

\usepackage[colorlinks=true]{hyperref}
\hypersetup{%
  colorlinks = true,
  linkcolor  = red,
  citecolor = red,
  urlcolor = red
}

\usepackage{cleveref}
    \crefname{figure}{Figure}{figures}
\usepackage{graphicx}

\usepackage[dvipsnames]{xcolor}
\usepackage{soul}
\usepackage{amsthm}
\usepackage{cancel}
\usepackage{comment}
\usepackage{dsfont}
\usepackage{tikz}
\usetikzlibrary{calc}
\usetikzlibrary{positioning}

\usepackage{enumitem}

\newtheorem{theorem}{Theorem}
\newtheorem{lemma}[theorem]{Lemma}

\newtheorem{corollary}[theorem]{Corollary}

\newtheorem{definition}[theorem]{Definition}
\newtheorem{example}[theorem]{Example}
\def\be{\begin{equation}}
\def\ee{\end{equation}}
\def\ba{\begin{eqnarray}}
\def\ea{\end{eqnarray}}

\def\Nl{{\mathchoice
{\setbox0=\hbox{$\displaystyle\rm N$}\hbox{\hbox to0pt
{\kern0.4\wd0\vrule height0.9\ht0\hss}\box0}}
{\setbox0=\hbox{$\textstyle\rm N$}\hbox{\hbox to0pt
{\kern0.4\wd0\vrule height0.9\ht0\hss}\box0}}
{\setbox0=\hbox{$\scriptstyle\rm N$}\hbox{\hbox to0pt
{\kern0.4\wd0\vrule height0.9\ht0\hss}\box0}}
{\setbox0=\hbox{$\scriptscriptstyle\rm N$}\hbox{\hbox to0pt
{\kern0.4\wd0\vrule height0.9\ht0\hss}\box0}}}}
\def\Zl{{\mathchoice
{\setbox0=\hbox{$\displaystyle\rm Z$}\hbox{\hbox to0pt
{\kern0.4\wd0\vrule height0.9\ht0\hss}\box0}}
{\setbox0=\hbox{$\textstyle\rm Z$}\hbox{\hbox to0pt
{\kern0.4\wd0\vrule height0.9\ht0\hss}\box0}}
{\setbox0=\hbox{$\scriptstyle\rm Z$}\hbox{\hbox to0pt
{\kern0.4\wd0\vrule height0.9\ht0\hss}\box0}}
{\setbox0=\hbox{$\scriptscriptstyle\rm Z$}\hbox{\hbox to0pt
{\kern0.4\wd0\vrule height0.9\ht0\hss}\box0}}}}
\def\Ql{{\mathchoice
{\setbox0=\hbox{$\displaystyle\rm Q$}\hbox{\hbox to0pt
{\kern0.4\wd0\vrule height0.9\ht0\hss}\box0}}
{\setbox0=\hbox{$\textstyle\rm Q$}\hbox{\hbox to0pt
{\kern0.4\wd0\vrule height0.9\ht0\hss}\box0}}
{\setbox0=\hbox{$\scriptstyle\rm Q$}\hbox{\hbox to0pt
{\kern0.4\wd0\vrule height0.9\ht0\hss}\box0}}
{\setbox0=\hbox{$\scriptscriptstyle\rm Q$}\hbox{\hbox to0pt
{\kern0.4\wd0\vrule height0.9\ht0\hss}\box0}}}}
\def\Rl{{\mathchoice
{\setbox0=\hbox{$\displaystyle\rm R$}\hbox{\hbox to0pt
{\kern0.4\wd0\vrule height0.9\ht0\hss}\box0}}
{\setbox0=\hbox{$\textstyle\rm R$}\hbox{\hbox to0pt
{\kern0.4\wd0\vrule height0.9\ht0\hss}\box0}}
{\setbox0=\hbox{$\scriptstyle\rm R$}\hbox{\hbox to0pt
{\kern0.4\wd0\vrule height0.9\ht0\hss}\box0}}
{\setbox0=\hbox{$\scriptscriptstyle\rm R$}\hbox{\hbox to0pt
{\kern0.4\wd0\vrule height0.9\ht0\hss}\box0}}}}
\def\Cl{{\mathchoice
{\setbox0=\hbox{$\displaystyle\rm C$}\hbox{\hbox to0pt
{\kern0.4\wd0\vrule height0.9\ht0\hss}\box0}}
{\setbox0=\hbox{$\textstyle\rm C$}\hbox{\hbox to0pt
{\kern0.4\wd0\vrule height0.9\ht0\hss}\box0}}
{\setbox0=\hbox{$\scriptstyle\rm C$}\hbox{\hbox to0pt
{\kern0.4\wd0\vrule height0.9\ht0\hss}\box0}}
{\setbox0=\hbox{$\scriptscriptstyle\rm C$}\hbox{\hbox to0pt
{\kern0.4\wd0\vrule height0.9\ht0\hss}\box0}}}}
\def\Hl{{\mathchoice
{\setbox0=\hbox{$\displaystyle\rm H$}\hbox{\hbox to0pt
{\kern0.4\wd0\vrule height0.9\ht0\hss}\box0}}
{\setbox0=\hbox{$\textstyle\rm H$}\hbox{\hbox to0pt
{\kern0.4\wd0\vrule height0.9\ht0\hss}\box0}}
{\setbox0=\hbox{$\scriptstyle\rm H$}\hbox{\hbox to0pt
{\kern0.4\wd0\vrule height0.9\ht0\hss}\box0}}
{\setbox0=\hbox{$\scriptscriptstyle\rm H$}\hbox{\hbox to0pt
{\kern0.4\wd0\vrule height0.9\ht0\hss}\box0}}}}
\def\Ol{{\mathchoice
{\setbox0=\hbox{$\displaystyle\rm O$}\hbox{\hbox to0pt
{\kern0.4\wd0\vrule height0.9\ht0\hss}\box0}}
{\setbox0=\hbox{$\textstyle\rm O$}\hbox{\hbox to0pt
{\kern0.4\wd0\vrule height0.9\ht0\hss}\box0}}
{\setbox0=\hbox{$\scriptstyle\rm O$}\hbox{\hbox to0pt
{\kern0.4\wd0\vrule height0.9\ht0\hss}\box0}}
{\setbox0=\hbox{$\scriptscriptstyle\rm O$}\hbox{\hbox to0pt
{\kern0.4\wd0\vrule height0.9\ht0\hss}\box0}}}}
\newcommand{\cb}{\mathcal B}

\newcommand{\cale}{\mathcal E}

\newcommand{\cg}{\mathcal G}
\newcommand{\ch}{\mathcal H}
\newcommand{\ci}{\mathcal I}

\newcommand{\cl}{\mathcal L}

\newcommand{\cs}{\mathcal S}

\newcommand{\eqa}{\begin{eqnarray}}
\newcommand{\neqa}{\end{eqnarray}}

\definecolor{myblue}{rgb}{0.2,0.2,0.8}

\usepackage{bm}
\usepackage{bbm}

\newcommand{\ketbra}[2] {
	| #1 \rangle \! \langle #2 |}

\def\dim{{\rm dim}}

\def\phys{{\rm phys}}
\def\kin{{\rm kin}}

\definecolor{darkgreen}{rgb}{0.0, 0.5, 0.13}

\begin{document}

\date{July 27, 2026}

\title{Quantum reference frames beyond subsystems: a reconstruction and generalization of the perspective-neutral framework}

\author{Stefan L. Ludescher}
\email{stefanludescher@msn.com}
\affiliation{Institute for Quantum Optics and Quantum Information,
Austrian Academy of Sciences, Boltzmanngasse 3, A-1090 Vienna, Austria}
\affiliation{Vienna Center for Quantum Science and Technology (VCQ),
Faculty of Physics, University of Vienna, Vienna, Austria}

\author{Manuel Mekonnen}
\email{Manuel.Mekonnen@oeaw.ac.at}
\thanks{\newline SLL and MM contributed equally to this work.}
\affiliation{Institute for Quantum Optics and Quantum Information,
Austrian Academy of Sciences, Boltzmanngasse 3, A-1090 Vienna, Austria}
\affiliation{Vienna Center for Quantum Science and Technology (VCQ),
Faculty of Physics, University of Vienna, Vienna, Austria}

\author{Thomas D. Galley}
\affiliation{Institute for Quantum Optics and Quantum Information,
Austrian Academy of Sciences, Boltzmanngasse 3, A-1090 Vienna, Austria}
\affiliation{Vienna Center for Quantum Science and Technology (VCQ),
Faculty of Physics, University of Vienna, Vienna, Austria}
\affiliation{Université Paris-Saclay, Laboratoire des M{\'e}thodes Formelles, 91190 Gif-sur-Yvette, France}

\author{Markus P. M\"uller}
\affiliation{Institute for Quantum Optics and Quantum Information,
Austrian Academy of Sciences, Boltzmanngasse 3, A-1090 Vienna, Austria}
\affiliation{Vienna Center for Quantum Science and Technology (VCQ),
Faculty of Physics, University of Vienna, Vienna, Austria}
\affiliation{Perimeter Institute for Theoretical Physics,
31 Caroline Street North, Waterloo, Ontario N2L 2Y5, Canada}

\begin{abstract}
We generalize the notion of quantum reference frames (QRFs) to cases where the frame does not necessarily correspond to a tensor factor subsystem, but to a covariant quantum instrument. This unlocks a variety of physical applications: ``frames of labeling'' for indistinguishable particles, suggesting explanations for the symmetrization postulate and the absence of parastatistics, and yielding a transparent description of the entanglement of bosons and fermions; and relational clocks reproducing the Schr\"odinger equation exactly even when all subsystems are interacting or when there are frequency superselection sectors. Our work generalizes the perspective-neutral approach to QRFs pioneered by H\"ohn and co-authors, which we reconstruct from a simple operational scenario. We give a resource-theoretic grounding of this framework, and show how the notion of \textit{completely covariant operations} explains the relevance of the charge-zero sector and the pure-state transformation behavior across perspectives. This also suggests operational clarifications of some aspects of constraint quantization, e.g.\ of the meaning of constraint equations such as $C|\psi\rangle=0$. Some of our results, such as our generalization of relationalization maps to instruments, apply more broadly to other QRF frameworks too, and they contribute to bridging the gap between operational quantum information theory and the internal QRF research program.
\end{abstract}

\maketitle

\section{Introduction}

Given the manifold role of symmetries in physics, it is by now widely recognized that physical quantities acquire their meaning only relative to some frame of reference. But while the role of reference frames in classical physics is rather uncontroversial, the question of how to properly define or interpret their quantum counterparts is much less clear: should quantum reference frames (QRFs) be understood as quantum coordinate systems relative to which other systems are \emph{described}~\cite{AharonovKaufherr,Mazzucchi2001,Angelo2012,Angelo,Pereira2015,pienaar_relational_2016,Giacomini,Vanrietvelde2020,HametteGalley,CastroRuiz2025,Carette2025,devuyst2025relationperspectiveneutralalgebraiceffective,garmier2026perspectivesnonidealquantumreference}? Or should they be regarded as resources in information-processing tasks that enable other systems to be \emph{manipulated}~\cite{AharonovSusskind1967a,WWW,Loveridge2018,Bartlett_2006,poulin_dynamics_2007,boileau_quantum_2008,ahmadi_dynamics_2010,ahmadi_way_2013,popescu_quantum_2018,Skotiniotis2017macroscopic,enk_quantifying_2005,bartlett_quantum_2009,faist_continuous_2020,woods_continuous_2020,hayden_error_2021,yang_optimal_2022,GourSpekkens,GourMarvianSpekkens2009,MarvianSpekkens2014}? Since measuring a quantum system will typically disturb it, these two aspects of QRFs will in general lead to very different behaviors. Moreover, what exactly \textit{is} a QRF? The idea of classical ``rods and clocks'' is usually translated into the quantum domain by demanding that QRFs are ``things'', such as particles or fields, that correspond to subsystems, i.e.\ tensor factors or subalgebras. However, there are many ways -- even classically -- to define a reference frame that does not itself correspond to properties of a given physical object, e.g.\ by the apparent relative positions of distant quasars as in the International Celestial Reference Frame~\cite{Charlot2020ICRF3}, or by the center of mass in a collection of particles. So \emph{what is the most general meaningful way to define a QRF?}

In this work, we shed light on all three questions. We begin with the perspective-neutral (PN) framework~\cite{Vanrietvelde2020,Hamette2021}, which is a paradigmatic example of a ``descriptive'' QRF framework, and give an operational derivation of its basic structures which is resource-theoretic in spirit. This provides a partial unification of the aforementioned two aspects of QRFs. Doing so, we arrive at a generalization of this framework where QRFs need not correspond to subsystems, but to covariant instruments that are not in general tied to subalgebras. Moreover, our results help resolve a puzzle that the PN framework shares with Dirac quantization: how can constraint equations such as $C|\psi\rangle=0$ find a justification from quantum information theory (QIT) if QIT disregards the role of state vectors in favor of density matrices? While the restriction to \emph{covariant operations} in QIT is known to lead to the requirement that the density matrices must commute with the symmetry~\cite{MarvianThesis}, we show that the restriction to a compositional version termed \emph{completely covariant operations} leads to constraint equations on state vectors.\\

\textbf{Relation to previous work.} Historically, these two different roles played by QRFs have been studied separately, focusing either on the idea that (i) they are a resource for information processing tasks, or that (ii) they define  internal perspectives relative to which other systems are described. The absence of an external reference frame induces a symmetry on the quantum-mechanical system of interest. This restricts the set of physically allowed states, transformations and measurements and moreover induces multiple physically equivalent descriptions of these objects. The absence of an external frame thereby motivates the introduction of QRFs: internal degrees of freedom relative to which the other physical degrees of freedom are either (i) manipulated or (ii) described.
Broadly speaking, role (i) emphasizes the operational nature of QRFs, whilst role (ii) is primarily about their perspectival character.

The operational role of QRFs goes back to their introduction in the 1960s as a way of  bypassing superselection rules~\cite{AharonovSusskind1967a,WWW},
forbidding the preparation of states in superpositions of different charges. QRFs are used as a resource to prepare states in a superposition of relative charges (whether this constitutes an actual challenge to the fundamental nature of the superselection rule is still debated~\cite{Loveridge2018}).
QRFs have further been shown to be a resource in a number of information-theoretic tasks in the presence of a global symmetry: implementing quantum operations (including measurements)~\cite{Bartlett_2006,poulin_dynamics_2007,boileau_quantum_2008,ahmadi_dynamics_2010,ahmadi_way_2013,popescu_quantum_2018}, quantum state discrimination~\cite{Skotiniotis2017macroscopic}, communication tasks~\cite{enk_quantifying_2005,bartlett_quantum_2009}, and quantum error correction~\cite{faist_continuous_2020,woods_continuous_2020,hayden_error_2021,yang_optimal_2022}, amongst others. The resource-theoretic aspect of QRFs is formally captured in the resource theory of asymmetry~\cite{GourSpekkens,GourMarvianSpekkens2009,MarvianSpekkens2014}.

The perspectival aspect of QRFs was first described in the 1980's~\cite{AharonovKaufherr} via the introduction of explicit changes of QRF. Since then there have been many proposals within which the perspectival aspect of QRFs play an important role~\cite{Toller1997,Mazzucchi2001,Angelo2012,Angelo,Pereira2015,pienaar_relational_2016,Giacomini,Vanrietvelde2020,HametteGalley,CastroRuiz2025,Carette2025,devuyst2025relationperspectiveneutralalgebraiceffective,garmier2026perspectivesnonidealquantumreference}, and it has been shown to be relevant for a wide range of phenomena, such as the relativity of superposition, entanglement and subsystems~\cite{Giacomini,Ahmad2022,hametteObserverDependentEntropyDiagonal2026}, relational time and dynamics~\cite{Trinity,CastroRuiz2020,Hoehn2020,baumannTimeDelocalizationCausality2026}, and quantum equivalence principles~\cite{GiacominiBrukner2020}. One framework which captures this aspect of QRFs and significantly generalizes Ref.~\cite{AharonovKaufherr} as well as the ``purely perspectival'' approach of Refs.~\cite{Giacomini,HametteGalley}
is the perspective-neutral (PN) framework of H{\"o}hn and collaborators~\cite{Vanrietvelde2020,Hamette2021}.

The PN framework is based on constraint quantization where the presence of a gauge symmetry leads to a redundancy in the description of the physical state~\cite{Vanrietvelde2020}. Different choices of QRF correspond to different ways of removing the redundancy, which induce different descriptions of the physical state, mimicking the structure of reference frames in special relativity~\cite{Hamette2021}. This framework has been applied to a wide range of areas, such as relational dynamics~\cite{PhysRevD.104.066001}, subsystem structure~\cite{Ahmad2022}, gauge theories~\cite{araujoregado2025relationalentanglemententropiesquantum},  quantum gravity~\cite{chen2026quantumreferencefieldstransformations,aguilar-gutierrezRelationalPathIntegral2026} and quantum error correction~\cite{carrozza2025correspondencequantumerrorcorrecting,rothlin2026errorcorrectionlatticequantum,lacambra2026gausslawcodesvacuum}, amongst others. 

The theory of special relativity motivates a unification of both roles of QRFs: a reference frame is a set of rods and clocks, i.e.\  measurement devices which give operational meaning to the coordinates (role (i)) associated with the perspective of an observer (role (ii)). As such, a full quantum generalization of reference frames should capture both aspects (i) and (ii). When combining these aspects into a single unified treatment, two natural approaches suggest themselves: take an operational framework for quantum theory and then ``relationalize'' it, or take a perspectival framework for quantum theory and ``operationalize'' it.

A prominent ``relationalization'' of operational quantum theory is the  operational approach to QRFs by Loveridge and collaborators~\cite{Loveridge2017,Loveridge2018,Loveridge2020,glowacki2023quantumreferenceframesfinite,Carette2025,Jorquera_Riera_2025}. Within the quantum information approach~\cite{Bartlett2007} there also exist ``relationalizations'' of quantum theory using QRFs such as Ref.~\cite{Poulin_2006}. Explicit changes of QRF were introduced in Ref.~\cite{Palmer2014}, where perspectives of agents are constrained by the resources available to them.

In the present work, we adopt the opposite approach. We begin from a framework which naturally captures the concept of perspectives and relationalism, namely the PN framework, and we ``operationalize'' it. More specifically, we provide operational arguments of resource-theoretic nature singling out the physical Hilbert space, a generalization of QRFs from coherent state systems to covariant POVMs, and a further generalization of QRFs to covariant instruments, no longer requiring them to be associated to subsystems.

Some existing work has explored the connection between the PN framework and the information-theoretic approach in specific situations, such as Refs.~\cite{HoehnKrummMueller,KrummHoehnMueller}, where a characterization of QRF changes for finite Abelian groups was given in terms of a game between agents. Explicit models of measurements in the PN approach were also studied in Refs.~\cite{Yang2020switchingquantum,Hausmann2025measurementevents}.  However, to the best of our knowledge, the present work is the first proposal of a systematic operational characterization of the PN approach and its associated constraint equations on state vectors.

While our generalization from coherent state systems to arbitrary POVMs lifts the PN framework to the same level of generality as other approaches to QRFs, our further generalization to covariant instruments seems not to have been considered previously in the literature on any approach to QRFs. The only exception known to us is Ref.~\cite{mekonnen_invariance_2026}, where some of us have studied the special case of instrument QRFs for permutation symmetry to rule out parastatistics. As we show in Definition~\ref{DefRel} and the subsequent lemmas, this generalization is not restricted to the PN framework, but can be applied in other approaches to QRFs, too. In this work, we give several physically relevant applications of this generalization of QRFs beyond subsystems, including the description of indistinguishable particles and relational clocks in worlds where all subsystems are interacting. One motivation to study this generalization comes from the idea that realistic scenarios will typically involve symmetries that do not act independently on subsystems (in particular in quantum gravity~\cite{donnelly_observables_2016}), but another one comes from a lesson learnt from the previous QRF literature: if QRF symmetry implies that the very notion of a subsystem is frame-dependent~\cite{Ahmad2022}, then it seems questionable to tie the definition of a QRF to some subsystem in the first place.

\bigskip
\textbf{Our paper is organized as follows.} In Subsection~\ref{SubsecOpMot}, we provide an operational reconstruction of the PN framework. We introduce the notion of a \emph{completely covariant operation}, explore it mathematically, and demonstrate how it explains the appearance of constraint equations and the physical Hilbert space. We show how ``jumping into a perspective'' and QRF transformations appear naturally as \emph{catalytic completely covariant operations}, which can either be interpreted as isometries or as Kraus operators pertaining to a covariant quantum instrument that measures a QRF.

In Subsection~\ref{SubsecPNMathematical}, we give a thorough mathematical rederivation of the PN framework for the case of compact Lie symmetry groups. It generalizes the PN framework to cases where the QRF is described by a covariant quantum instrument, which will not in general correspond to any quantum subsystem. We generalize several results of the previous PN framework to ours, show that coherently controlled gauge transformations are special cases of QRF transformations, and provide a generalization of relational Dirac observables and the relationalization maps $\$$ and $\yen$ to the instrument case. This generalization holds beyond the PN framework. In Section~\ref{SecRelPrevious}, we explain in detail how the previous PN framework (where QRFs pertain to subsystems) can be obtained as a special case of ours. In Section~\ref{SecExamples}, we give several physically relevant example applications of our framework. We begin in Subsection~\ref{SubsecAbelian} in the simple scenario of finite position and momentum spaces, showing how our framework admits e.g.\ to jump from the perspective of the first particle to that of the center of mass, or into perspectives described by QRF measurements of higher rank, and that our QRFs typically end up entangled with the systems they describe.

In Subsection~\ref{SubsecBosonsFermions}, we turn to the description of bosons and fermions within our framework. We show that the (anti)symmetric subspace appears as the physical Hilbert space under permutation symmetry, showing that whatever motivates the latter also motivates the symmetrization postulate. We give examples of ``jumping into a labeling perspective'', and show how this is related to the problem of entanglement of indistinguishable particles. In Subsection~\ref{SubsecPW}, we show that our generalization applies in interesting ways to the problem of relational quantum clocks. First, we show that it can lead to Page-Wootters clocks that reproduce the Schr\"odinger equation exactly even when all subsystems are interacting; second, we show that a previous ad hoc generalization for clock-constraints with degenerate spectra leading to frequency superselection sectors falls naturally into our framework.

Finally, we conclude in Section~\ref{SecConclusions}, commenting also on the conceptual similarities between our notion of complete covariance and quantum information's notion of complete positivity.
\\

\textbf{Notation.} Physical systems, which are described by complex Hilbert spaces $\mathcal{H}_A,\mathcal{H}_B, \mathcal{H}_C$ etc., will be denoted by upper-case letter $A,B,C$ etc. The algebra of linear operators on a Hilbert space $\ch$ will be denoted $\mathcal{L}(\ch)$, which is the same as the algebra of bounded operators because all Hilbert spaces in this paper are finite-dimensional unless stated otherwise. The upper-case letters will not only denote the physical systems of interest, but also the corresponding algebras of operators, i.e.\ $A=\mathcal{L}(\ch_A)$. This means that all algebras that we consider, and that we denote by capital letters, are full matrix algebras (i.e.\ factors), unless stated otherwise. If we have a linear map $M:W\to X$, where $W$ and $X$ are linear spaces (e.g.\ Hilbert spaces), then we use the notation $M\upharpoonright V$ for the restriction of $M$ to $V$, if $V$ is a linear subspace of $W$. Furthermore, for conjugation with a linear map (in particular, a unitary), we use the superoperator notation $\mathcal{U}(\rho):=U\rho U^\dagger$ for the corresponding quantum operation. The letter $\chi$ will be used for a one-dimensional unitary representation of the compact Lie group $\mathcal{G}$, which is a map $\chi:\cg\to\mathbb{C}$ such that $|\chi(g)|=1$ for all $g\in\cg$ and $\chi(gh)=\chi(g)\chi(h)$. In other works, this is sometimes called a ``linear character''.

Finally, we use the abbreviations ``QRF'' for ``quantum reference frame'' and ``PN'' for ``perspective-neutral''.

\section{Generalization of the Perspective-Neutral Framework}

In this section, we first introduce an operational scenario which motivates not only the perspective-neutral framework of H\"ohn and co-authors~\cite{Vanrietvelde2020,Hamette2021}, but also our generalization where QRFs do not necessarily correspond to subsystems. Then we give a thorough mathematical definition and analysis of this framework, extending several of the results of~\cite{Hamette2021} and other works to the more general framework.

To keep the technicalities to a minimum, our operational motivation in Subsection~\ref{SubsecOpMot} assumes that the symmetry group $\mathcal{G}$ is finite. However, our subsequent mathematical development of the framework in Subsection~\ref{SubsecPNMathematical} applies to all compact Lie groups $\mathcal{G}$ (including finite groups). We will always assume that all Hilbert spaces are finite-dimensional. We believe that our results hold more generally, e.g.\ for unimodular locally compact groups $\mathcal{G}$ acting on separable Hilbert spaces as in~\cite{Hamette2021}, but leave the elaboration of this to future work.

\subsection{Operational motivation of the framework}
\label{SubsecOpMot}
Consider some quantum system $B$ on which some finite symmetry group $\mathcal{G}$ acts. Our operations on the system are \textit{symmetry-constrained}: lacking a frame of reference that would break the symmetry, we can only ever perform symmetric quantum operations. This situation has been extensively studied in the context of a so-called \textit{resource theory of asymmetry}~\cite{Bartlett2007,MarvianThesis}, and we will consider a particular version of this scenario, described below. For our discussion, it is not necessary to be familiar with the notion of resource theory, but it provides a useful conceptual background for understanding the context of our construction.

To be in line with established terminology, let us denote the Hilbert space of the quantum system $B$ by $\mathcal{H}_{\rm kin}$, the \textit{kinematical Hilbert space}. It carries a unitary representation $g\mapsto U_g^B$ of the symmetry group $\mathcal{G}$. We do not at this point commit ourselves to a particular interpretation of $\mathcal{G}$, e.g.\ to whether it should be interpreted as a gauge group or as a symmetry group according to the terminology of~\cite{Hamette2021}, or in some other way. Nonetheless, we will call the elements $U_g^B$ \textit{gauge transformations}, anticipating the interpretation that physical predictions will typically be invariant under the application of those transformations, elaborated in more detail below. The most straightforward implementation of a lack of reference frame that would break the $\mathcal{G}$-symmetry amounts to postulating the following:
\begin{tcolorbox}[
  colback=gray!10,
  colframe=black
]
\textbf{Postulate 1 (tentative).} Only covariant operations can be physically implemented on $B$.
\end{tcolorbox}

By ``operations on $B$'', we mean quantum operations that are either maps from $B$ to $B$, or from the trivial system to $B$, i.e.\ state preparations. At this point, we are not considering measurements. Informally, covariant operations are those that can be applied without any reference frame that breaks the $\mathcal{G}$-symmetry. Formally, a covariant quantum channel $T:X\to Y$ is a completely positive, trace-preserving map with $T\circ\mathcal{U}_g^X=\mathcal{U}_g^Y\circ T$, where $\mathcal{U}_g^X=U_g^X\bullet {U_g^X}^\dagger$ is the associated representation of $\mathcal{G}$ on $X$. If $X$ is the trivial system, corresponding to a one-dimensional Hilbert space, then the associated channel becomes a quantum state $\rho$, and covariance becomes invariance: $\rho=\mathcal{U}_g^Y(\rho)$ for all $g\in\mathcal{G}$. Hence, under our postulate above, we can only ever prepare quantum states that commute with all $U_g^B$, and that are hence block-diagonal in the charge sectors of the representation. Consequently, the accessible observables are those of the same block-diagonal form. These form a subalgebra of $B$, sometimes called the \textit{invariant subalgebra $\mathcal{A}_{\rm inv}$}~\cite{KrummHoehnMueller,HoehnKrummMueller,Doat2025}.

Hence, the operational restriction of the resource theory constrains our access to a subalgebra $\mathcal{A}_{\rm inv}$ of the total algebra of observables $B=\mathcal{L}(\mathcal{H}_{\rm kin})$. As discussed in several recent works~\cite{Doat2025,garmier_perspectives_2025}, the question of which observables are accessible is closely related to the question of which framework of quantum reference frames applies to a given scenario. In contrast to other recent proposals, however, we begin the construction of our scenario by postulating a specific symmetry claim, rather than an algebra of accessible observables.

While the use of covariant operations is standard in the context of the resource theory of asymmetry, and more generally in quantum information theory, we will see that the perspective-neutral approach and the definition of constraints such as $C|\psi\rangle=0$ follows effectively a more ``paranoid'' definition of free operations. As we will explain and motivate below, we will commit ourselves to the following postulate:

\begin{tcolorbox}[
  colback=gray!10,
  colframe=black
]
\textbf{Postulate 1 (final).} Only completely covariant operations can be physically implemented on $B$.
\end{tcolorbox}
Let us first define what this means mathematically, and then analyze this notion operationally, which will in turn give us a concrete motivation for this postulate. To define completely covariant quantum operations, we introduce the notion of an \textit{extension} of a quantum operation. If $T:X\to Y$ is any quantum operation, i.e.\ completely positive, trace-preserving linear map, then an \textit{extension} is a quantum operation $T':X\to YE$, with $E$ some other quantum system, such that $T(\rho)={\rm Tr}_E \left(\strut T'(\rho)\right)$. The well-known Stinespring dilation theorem~\cite{paulsen_completely_2003} guarantees that all quantum operations have isometric extensions. That is, for every quantum operation $T:X\to Y$, there is some quantum system $E$ and an isometry $V:\mathcal{H}_X\to\mathcal{H}_Y\otimes\mathcal{H}_E$ such that $T(\rho)={\rm Tr}_E(V\rho V^\dagger)$. Let us call any extension $V\bullet V^\dagger$, where $V$ is an isometry, a \textit{purification}~\cite{chiribella_theoretical_2009,chiribella_probabilistic_2010}.

There is a specialized version of the dilation theorem for covariant quantum operations~\cite[Theorem 3.3]{keyl_fundamentals_2002}: if $T$ is covariant, then there is a quantum system $E$ with some representation $U_g^E$ of $\mathcal{G}$ and an isometry $V:X\to YE$ such that $T(\rho)={\rm Tr}_E(V\rho V^\dagger)$, and $V U_g^X=(U_g^Y\otimes U_g^E) V$ -- that is, there exists some purification which is itself covariant.

Let us now turn to a stricter requirement. To this end, consider only quantum systems $E$ that carry the trivial representation of $\mathcal{G}$, i.e.\ $U_g^E=\mathds{1}$ for all $g\in\mathcal{G}$. We will call such quantum systems \textit{external} --- these are systems that are not acted on at all by the group $\mathcal{G}$. For example, if $X$ is the spin degree of freedom of an electron (a spin-$1/2$ system), then the group $\mathcal{G}={\rm SU}(2)$ will act nontrivially on it. If $E$ is now a photon (a spin-$1$ system), then this group will \textit{also} act on it non-trivially, and it is hence not external in our terminology. However, if $E$ is some error-corrected logical qubit in a quantum computer, then spatial rotations will typically act trivially on it, which means that $E$ is external. Intuitively, external quantum systems are ones of a completely different type, such that they are unaffected by the symmetries under consideration.

This allows us to define complete covariance:
\begin{definition}
A quantum operation $T:X\to Y$ is completely covariant if every extension $T':X\to YE$ to every external system $E$ is covariant.
\end{definition}
Completely covariant quantum operations are covariant, but the converse is not in general true. In Appendix~\ref{AppProofLemma1}, we show the following characterization:
\begin{lemma}
\label{LemCompleteCovariance}
Let $T:X\to Y$ be a quantum operation. Then the following statements are all equivalent:
\begin{itemize}
    \item[(i)] $T$ is completely covariant, i.e.\ all extensions of $T$ on external systems are covariant.
    \item[(ii)] All purifications of $T$ on external systems are covariant.
    \item[(iii)] There exists a purification of $T$ on an external system which is covariant.
    \item[(iv)] There exists a Kraus representation
    \[
       T(\rho)=\sum_j K_j \rho K_j^\dagger
    \]
    such that all $K_j$ are intertwiners up to a phase $\chi(g)$ given by a one-dimensional representation $\chi$ of $\mathcal{G}$, i.e.\ $K_j U_g^X=\chi(g)U_g^Y K_j$ for all $ g\in\mathcal{G}$ and all $j$.
    \item[(v)] All Kraus representations of $T$ are of the form given in (iv).
\end{itemize}
\end{lemma}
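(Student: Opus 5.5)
I will prove the equivalences via the chain (i)$\Rightarrow$(ii)$\Rightarrow$(iii)$\Rightarrow$(iv)$\Rightarrow$(v)$\Rightarrow$(i). The first two arrows are immediate. Purifications on external systems are in particular extensions on external systems, so (i)$\Rightarrow$(ii). Every channel has a Stinespring purification $V:\mathcal{H}_X\to\mathcal{H}_Y\otimes\mathcal{H}_E$, and we may regard $E$ as carrying the trivial representation, i.e.\ as external; hence (ii)$\Rightarrow$(iii).

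For (iii)$\Rightarrow$(iv), take an external $E$ and an isometry $V$ such that $V\bullet V^\dagger$ is covariant. Covariance of the channel means $\mathcal{V}\circ\mathcal{U}^X_g=(\mathcal{U}^Y_g\otimes\mathrm{id}_E)\circ\mathcal{V}$. Since both sides are conjugations by single operators, $VU^X_g$ and $(U^Y_g\otimes\mathds{1})V$ have the same rank-one Kraus decomposition, so they agree up to a phase: $VU_g^X=\chi(g)(U^Y_g\otimes\mathds{1})V$. Using $V^\dagger V=\mathds{1}$, we get $\chi(g)=\mathrm{tr}\bigl(V^\dagger (U^Y_g\otimes\mathds 1)^\dagger V U^X_g\bigr)/\dim X$, which is continuous in $g$. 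Comparing $VU^X_{gh}$ computed in two ways gives $\chi(gh)=\chi(g)\chi(h)$, so $\chi$ is a one-dimensional representation. Now expand in an orthonormal basis $\{|j\rangle\}$ of $E$ and set $K_j=(\mathds{1}\otimes\langle j|)V$. Because $E$ is external, these satisfy $K_jU^X_g=\chi(g)U^Y_gK_j$, which gives (iv).

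For (iv)$\Rightarrow$(v), I will use the unitary freedom of Kraus representations. Any other Kraus family (padded with zeros) has the form $L_k=\sum_j u_{kj}K_j$ with $u$ an isometry. This freedom does not touch the group action, so each $L_k$ inherits $L_kU_g^X=\chi(g)U^Y_gL_k$ with the same $\chi$.

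For (v)$\Rightarrow$(i), I expect the main obstacle, because "every extension" includes non-isometric extensions $T':X\to YE$ with $E$ external. Given such a $T'$, I will take any Kraus decomposition $T'(\rho)=\sum_k M_k\rho M_k^\dagger$, where $M_k:\mathcal{H}_X\to\mathcal{H}_Y\otimes\mathcal{H}_E$. Fix a basis $\{|e\rangle\}$ of $E$ and form $N_{k,e}=(\mathds 1\otimes\langle e|)M_k$. Then $\{N_{k,e}\}$ is a Kraus representation of $\mathrm{Tr}_E\circ T'=T$, so by (v) each satisfies $N_{k,e}U^X_g=\chi(g)U^Y_gN_{k,e}$. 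Reassembling over $e$ gives $M_kU^X_g=\chi(g)(U^Y_g\otimes\mathds{1}_E)M_k$, and the phase $\chi(g)$ cancels in $M_k\rho M_k^\dagger$, so $T'$ is covariant. The point to check carefully is that (v) applies to all Kraus families, including ones with linearly dependent or zero operators, which the unitary-freedom argument handles. A secondary point is that the $\chi$ in (iv) is necessarily the same for all Kraus operators. This holds because they all come from one isometry, and in (iv)$\Rightarrow$(v) it is preserved by the unitary mixing.
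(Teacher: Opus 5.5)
Your proof is correct and follows the paper's argument essentially step by step: the same cycle of implications, Kraus operators $(\mathds{1}\otimes\langle j|)V$ built from a covariant Stinespring isometry for (iii)$\Rightarrow$(iv), the unitary freedom of Kraus representations for (iv)$\Rightarrow$(v), and slicing the extension's Kraus operators along a basis of $E$ and then reassembling them for (v)$\Rightarrow$(i). The only differences are minor: you apply (v) directly to the sliced family $\{N_{k,e}\}$ rather than going through a minimal Kraus representation as the paper does, and you also check continuity of $\chi$, which the paper leaves implicit.
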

From item (iv), it follows immediately that the composition of two completely covariant operations is again completely covariant. Moreover, if $T$ is an isometry (e.g.\ unitary), then covariance is the same as complete covariance, and the inverse $T^{-1}$ of any covariant isometry $T$ is completely covariant too (see Lemma~\ref{app:CompInvLemma} in Appendix~\ref{AppProofLemma1} for details).

Why are we postulating that only completely covariant operations are allowed on our system $B$ of interest? This requirement will be well-motivated if we have a situation where we think that the symmetry is a strict ``law of nature'': local symmetries $U_g^B$ will preserve all physical predictions on $B$, \textit{including all its correlations with external systems}. That is, ``nothing in the world'' is allowed to change under such transformations --- unless we specifically consider other systems that also carry a representation of $\mathcal{G}$ and co-transform with $B$, i.e.\ that are not external in our sense.

In other words, Postulate 1 is enforced from the following assumptions:
\begin{tcolorbox}[
  colback=gray!10,
  colframe=black
]
\textbf{Motivating assumptions for Postulate 1.} Suppose that
\begin{itemize}
    \item every physically implementable operation on $B$ admits a purification on some external system $E$, and
    \item our ``covariance due to lack of reference frame'' assumption extends to the composite system $BE$.
\end{itemize}
Then Lemma~\ref{LemCompleteCovariance} implies complete covariance of all physically implementable operations on $B$.
\end{tcolorbox}

\begin{figure}[hbt]
\includegraphics[width=\columnwidth]{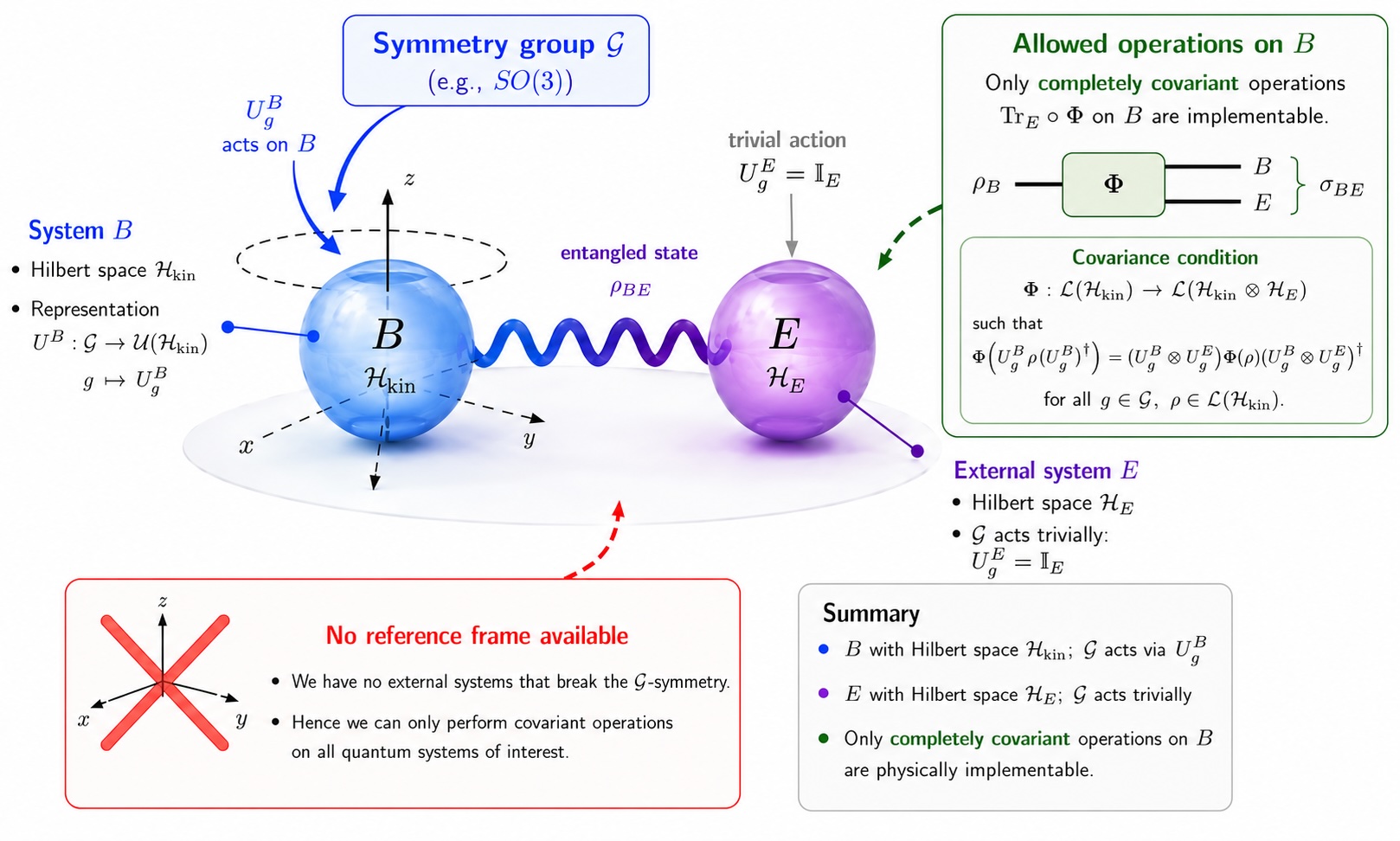}
\caption{First version of our scenario that motivates the PN framework. Due to the purifications of the operations on our system $B$ of interest that are potentially held by some external system $E$, carrying the trivial representation of the symmetry group, the global absence of a frame of reference implies complete covariance for all operations on $B$. \aidisclosure}
\label{fig:pn1}
\end{figure}
Our notion of complete covariance (as a stronger ``compositional'' version of covariance) is inspired by the notion of complete positivity (as a stronger ``compositional'' version of positivity) in quantum information theory. In Outlook and Conclusions (Section~\ref{SecConclusions}), we discuss this analogy in more detail.

Let us now consider the special case of preparation procedures on $B$, i.e.\ quantum operations from the trivial system to $B$. Such operations are quantum states $\rho$, and the notion of complete covariance becomes what we call \textit{complete invariance}: Since $\dim\, \mathcal{H}_A=1$ and $A$ carries the trivial representation, item (iv) of Lemma~\ref{LemCompleteCovariance} implies
\[
   \rho=\sum_j p_j |\psi_j\rangle\langle\psi_j|,\qquad \bar\chi(g)|\psi_j\rangle=U_g^B|\psi_j\rangle \mbox{ for all }g\in\mathcal{G}\mbox{ and all }j.
\]
This means that $\rho$ must have full support on a charge sector that corresponds to a one-dimensional representation $g\mapsto\bar\chi(g)$ of $\mathcal{G}$. This reproves a lemma of~\cite{mekonnen_invariance_2026}, where some of us have used this condition to rule out statistics beyond bosons and fermions for $\mathcal{G}$ the permutation group.
By redefining the representation as $\tilde U_g^B:=\chi(g)U_g^B$ (and omitting the tilde), we can hence restrict our attention to what has been called the \textit{physical Hilbert space}
\[
   \mathcal{H}_{\rm phys}=\{|\psi\rangle\in\mathcal{H}_{\rm kin}\,\,|\,\, U_g^B|\psi\rangle=|\psi\rangle\mbox{ for all }g\in\mathcal{G}\},
\]
and conclude that all physically preparable states have full support on this subspace.

In the special case of the translation group $\mathcal{G}=\mathbb{R}$ (ignoring its non-compactness for the moment), the condition $U_g^B|\psi\rangle=|\psi\rangle$ can be rewritten as $e^{-ia\hat P}|\psi\rangle=|\psi\rangle$ for all $a\in\mathbb{R}$, or as $\hat P|\psi\rangle=0$ for the momentum operator $\hat P$, leading to the interpretation of $\ch_{\rm phys}$ as the ``charge zero sector''. However, simply adapting the representation in a projectively equivalent way as indicated above, via $\tilde U_g^B:=e^{i g P_0} U_g^B$, redefines $\ch_{\rm phys}$ as the subspace of total momentum $P_0$, showing that the PN framework applies to arbitrary total momenta. This simple fact has also been observed in~\cite{hamette_quantum_2025} (see also~\cite{Doat2025}), and it generalizes to all compact Lie groups $\cg$ considered in this work. This does not mean that all values of the total charge would be \emph{physically} equivalent, but it means that they can all be treated \emph{mathematically} with the above definition of $\ch_{\rm phys}$ in the PN framework.

Our notion of complete covariance has been discussed in the literature in a very different context and under a different name: as \textit{strong symmetry}, $U_g\rho=e^{i\theta} \rho$ (as opposed to \textit{weak symmetry} corresponding to the standard notion of covariance; for density matrices, $U_g\rho U_g^\dagger=\rho$). In particular, this notion has appeared in discussions of so-called ``strong-to-weak spontaneous symmetry breaking'' in mixed states of quantum many body systems. In~\cite{lessa_strong--weak_2025}, the intuition behind this definition is described as follows: \textit{``Physically, strong symmetry arises when the system does not exchange charges with the environment [...]. These concepts are deeply rooted in equilibrium statistical mechanics: canonical ensembles have ``strong'' particle-number conservation symmetry, while grand canonical ensembles are only weakly symmetric.''} This intuition resembles a motivation expressed by H\"ohn and co-authors~\cite{HoehnKotechaMele}: that weakly symmetric states are, in a specific sense, \textit{``not fully external frame-independent''}. Our motivation above can be interpreted as a substantial clarification of this statement, saying via Lemma~\ref{LemCompleteCovariance} exactly in what sense frame-dependence would remain if we chose the weak rather than the strong formulation of symmetry.

Let us denote the quantum system that is associated to the physical Hilbert space by $A$; i.e., formally, $A=\mathcal{L}(\mathcal{H}_{\rm phys})$. We have now identified several systems of interest: the ``kinematical'' quantum system $B$, the ``physical'' quantum system $A$ (whose Hilbert space is a subspace of that of $B$), and a hypothetical external system $E$. Our symmetry constraint of Postulate 1 restricts our states and operations to be supported on the physical Hilbert space $\ch_{\rm phys}$. However, we would like to implement the main idea of the internal QRF research program: that it can be interesting for various reasons to redescribe the physical states in a way that formally breaks the symmetry. For example, in the case of translation symmetry of $N$ particles, we might be interested in picking a ``quantum coordinate system'' such that the first particle is located at the origin, and ask what the state relative to this choice of ``gauge fixing'' looks like. Informally, the prescription to do so can be described as follows:

\textbf{Colloquial prescription:} \emph{``Pick some internal structure, and postulate that it points in direction $g$. Work out what the quantum state or observable looks like relative to this convention.''}

Formally, this means that we have maps $M_g$ from the physical Hilbert space to the kinematical Hilbert space, mapping the invariant description of the state or observable to some other, asymmetric description. These maps leave the inner product invariant such that all expectation values are preserved when applied to both states and observables:
\[
   {\rm Tr}(\rho X)={\rm Tr}[(M_g\rho M_g^\dagger) (M_g O M_g^\dagger)]\qquad (\rho,O\in A).
\]
This is the case if and only if $M_g:\ch_{\rm phys}\to\ch_{\rm kin}$ is an isometry, i.e.\ $M_g^\dagger M_g=\mathds{1}_{\ch_{\rm phys}}$. But which isometries $M_g$ should be admissible, and how are the $M_g$ for the different $g\in\mathcal{G}$ related? Moreover, should we think of the $M_g$ as being defined only up to a global phase $M_g\mapsto e^{i\theta_g} M_g$, as the above seems to suggest?

To answer these questions, we have to define what it means that ``some internal structure of $B$ points in direction $g$''. For our scenario described above and depicted in Figure~\ref{fig:pn1}, this notion is undefined: suppose we claim that something on $B$ points in direction $g$, and we apply a gauge transformation $\mathcal{U}_h^B$. Then we should conclude that the structure now points in direction $hg$. However, we have set up our scenario such that gauge transformations do not change any physical predictions; in particular, they leave all physical states invariant. This means that we need to add a further element to our scenario: a reference frame $C$ that allows us to break the symmetry.

We will not think of $C$ as an actually physically realized quantum system. It will be a mere calculation tool, allowing us to determine how we should define a redescription of a physical state $|\psi\rangle\in\ch_{\rm phys}$ on a piece of paper. But we would not like the corresponding map $|\psi\rangle\mapsto M_g|\psi\rangle$ to be an \emph{arbitrary} conventional redefinition, but one that implements our colloquial prescription above. For this reason, we will keep the scenario above fully intact, including its restriction to completely covariant operations. Our minimal change will be to replace the system $B$ by a \textit{composite} system $BC$, where $C$ is now a reference frame that is allowed to start out in an \emph{asymmetric} state. That is, we keep our initial scenario, but supplement it with an additional reference frame $C$.

Thinking of the $M_g$ as mere redescriptions, i.e.\ isometries, suggests that these maps should be invertible via allowed operations in our extended scenario. Similarly, anticipating that ``QRF changes'' between these different perspectives should be physically reversible (unitary), as is indeed the case in the previous PN framework of Ref.~\cite{Hamette2021}, tells us that the reference frame $C$ should not degrade when it is used in an operation.

This leaves us essentially no choice but to define $C$ as an ideal reference frame, with Hilbert space $\mathcal{H}_C:=L^2(\mathcal{G})$. Since $\mathcal{G}$ is finite in this subsection, this is nothing but $\mathbb{C}^{|\mathcal{G}|}$, i.e.\ finite-dimensional and with orthonormal basis $\{|g\rangle\}_{g\in\mathcal{G}}$ such that $\langle g|h\rangle=\delta_{g,h}$. It carries the so-called left-regular representation $U_g^C|h\rangle=|gh\rangle$, and the total action of $\mathcal{G}$ on $BC$ is $U_g^B\otimes U_g^C$. Assuming that it starts in some eigenstate $|g_0\rangle$ allows us to define what it even means that ``some internal structure of $B$ points in direction $g_0$'' -- $C$'s presence gives us now a way to make this claim meaningful. And we assume that this exhausts the role played by $C$: it starts in state $|g_0\rangle$, thereby enabling symmetry-breaking operations on $B$, but it also \emph{ends} in state $|g_0\rangle$, i.e.\ does not take part in the desired operations in any other way.

Requiring that $C$ starts and ends in some fixed state $|g_0\rangle$ means that $C$ behaves like a \emph{catalyst}~\cite{jonathan_entanglement-assisted_1999,lipka-bartosik_catalysis_2024}, yielding a second motivation for the choice of the letter $C$. In the context of resource theories, catalysts are additional quantum systems that are not ``free'' (i.e., in the resource theory of asymmetry, these might be quantum systems in asymmetric states) and that take part in the operations, but that are in the end returned exactly in their initial states, similarly as catalysts in chemistry. Hence, none of their resource content is actually being ``used up''.

In summary, to formalize our colloquial prescription above, and to obtain a physically meaningful definition of ``an internal structure of $B$ pointing in direction $g$'' under our scenario's symmetry claims, we postulate the following:

\begin{figure}[hbt]
\includegraphics[width=\columnwidth]{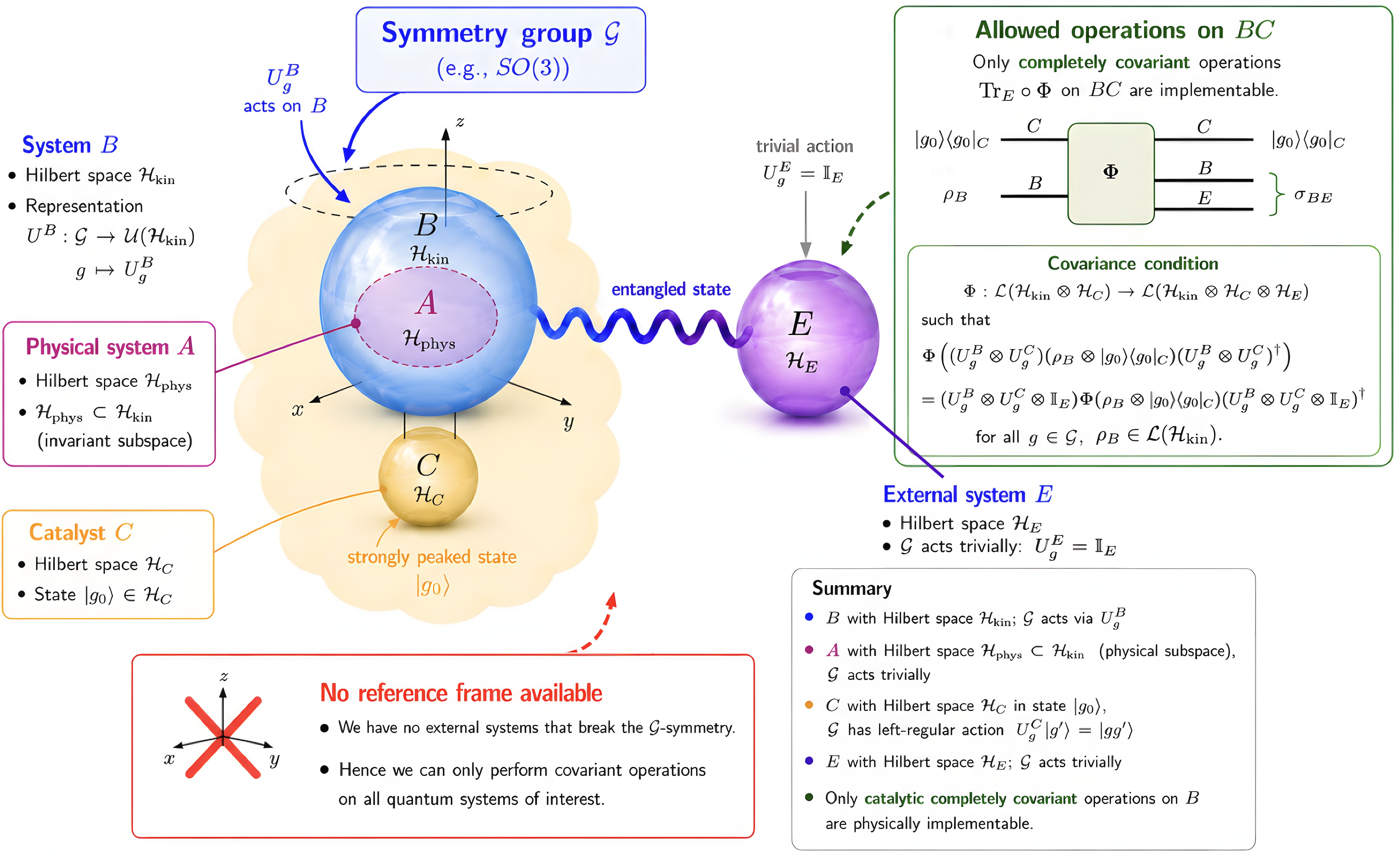}
\caption{Extension of the scenario of Figure~\ref{fig:pn1} for the reconstruction of reduction maps and QRF changes of the PN framework. The only change is that $B$ is replaced by the composite system $BC$, where $C$ is an ideal reference frame. The mere role of $C$ is to define what it means that some structure of $B$ ``points in some direction'', i.e.\ to break the symmetry while catalytically preserving its own local pure state. This allows for the implementation of arbitrary isometries $M_g:\ch_{\rm phys}\to\ch_{\rm kin}$ which play the role of reduction maps, and complete covariance constrains the transformation behavior across perspectives, in particular that of $M_g$ over the different ``directions'' $g\in\cg$. \aidisclosure}
\label{fig:pnextended}
\end{figure}

\begin{tcolorbox}[
  colback=gray!10,
  colframe=black
]
\textbf{Postulate 2.}
{To obtain redescriptions of physical states of $A$ and maps between them, only completely covariant operations from $B'C$ to $BC$ are allowed, where $B'$ is some subspace of $B$. Moreover, they must preserve the reference frame $C$ locally in its pure state $|g_0\rangle$.
}
\end{tcolorbox}

This postulate does not completely determine the mathematical formulation of our colloquial prescription above, but it constrains it, as we will see. We would like to implement an isometry $M_{g_0}:A\to B$ via some completely covariant operation from $AC$ to $BC$. Complete covariance requires that our operation has a covariant purification on some external system; the simplest possibility to ensure this is to have an operation that is itself an isometry. Let us hence attempt to construct an implementation of the form
\begin{equation}
   U\left(|\psi\rangle_A\otimes |g_0\rangle_C\right)=\left(M_{g_0} |\psi\rangle\right)_B\otimes |g_0\rangle_C,
   \label{EqCatalysis}
\end{equation}
where $U:AC\to BC$ is an isometry, i.e.\ $U^\dagger U=\mathds{1}_{AC}$. For $U$ to be covariant, we must have
\[
   U(U_h^A\otimes U_h^C)=\chi(h)(U_h^B\otimes U_h^C)U
\]
for some one-dimensional representation $\chi$ and all $h\in\mathcal{G}$ (note that $U_h^A=\mathds{1}$ for all $h\in\cg$). It is easy to see that the last two equations imply that, for all $h\in\cg$,
\begin{equation}
   U(|\psi\rangle_A\otimes |h g_0\rangle_C)=(M_{h g_0}|\psi\rangle)_B\otimes |h g_0\rangle_C,\quad\mbox{where }M_{hg_0}:=\chi(h)U_h^B M_{g_0}.
   \label{eqFromg0tohg0}
\end{equation}
Therefore, we have
\begin{equation}
   U=\sum_{g\in\mathcal{G}} M_g\otimes |g\rangle\langle g|,
   \label{DefU}
\end{equation}
where every $M_g:A\to B$ is an isometry since $M_{g_0}$ is. Moreover, by the definition of the $M_g$, we have
\begin{equation}
   \chi(h)U_h^B M_g = \chi(h)U_h^B M_{g g_0^{-1}g_0}=\chi(h)U_h^B\chi(g g_0^{-1})U_{g g_0^{-1}}^B M_{g_0}=\chi(hg g_0^{-1})U_{hg g_0^{-1}}^B M_{g_0}=M_{hg} \mbox{ for all }g,h\in\cg.
   \label{eqPhasesMatter}
\end{equation}
This transformation behavior admits a consistent interpretation in the following way. If $C$ is in state $|g_0\rangle$, it enforces the application of $M_{g_0}$ to $|\psi\rangle_A$, yielding a description $|\psi(g_0)\rangle_B:=M_{g_0}|\psi\rangle_A$ of that state such that some internal degree of freedom of $B$ will point in direction $g_0$ in this state description. If $C$ were instead prepared in the state $|hg_0\rangle$, it would enforce the application of $M_{h g_0}$ to $|\psi\rangle_A$. Up to a phase, this is the same as $U_h^B|\psi(g_0)\rangle_B$, meaning that the internal degree of freedom of interest (whatever it was -- it need not correspond to a subsystem) will now (up to some phase factor) point in direction $h g_0$. In other words, the state $|g\rangle$ of $C$ controls the direction into which this internal degree of freedom will point, after the completely covariant isometry $U$ has been applied.

Hence, every $M_g$ can be interpreted as an isometry that yields a state description such that ``some internal structure is pointing in direction $g$'', implementing our colloquial prescription of above. The requirement of complete covariance, which has led us to assume that $U$ is itself a covariant isometry, relates the $M_g$ over different $g\in\cg$, and furthermore tells us that the phases ultimately \emph{do} matter: we cannot redefine $M_g\mapsto e^{i\theta_g} M_g$ arbitrarily without in general breaking the transformation law of Eq.~(\ref{eqPhasesMatter}). Again, requiring complete covariance rather than covariance implies restrictions on the level of state vectors, not only of density matrices.

Indeed, it is easy to see that covariance alone is insufficient to conclude anything about the phases of the $M_g$. For example, we can implement an analogous construction with the covariant channel $T:AC\to BC$ which measures $C$ in its $|g\rangle$-eigenbasis, and depending on the outcome $g$ performs the channel $M_g\bullet M_g^\dagger$. This channel is
\[
   T(\rho_{AC}):=\sum_{g\in\cg} \left(M_g\otimes |g\rangle\langle g|_C\right)\rho_{AC}\left(M_g^\dagger\otimes |g\rangle\langle g|_C\right).
\]
It is easy to check that $T$ is covariant, i.e.\ $T\circ\left(\mathcal{U}_h^A\otimes \mathcal{U}_h^C\right)=\left(\mathcal{U}_h^B\otimes\mathcal{U}_h^C\right)\circ T$, but $T$ is not completely covariant because it violates condition (v) of Lemma~\ref{LemCompleteCovariance}. Now, $T$ is also a covariant channel that yields a redescription of states on $\ch_{\rm phys}$ ``depending on some internal structure pointing in direction $g$'', if $C$ is prepared in the state $|g\rangle\langle g|$. However, this channel is insensitive to the choice of phases: redefining $\tilde M_g:=e^{i\theta_g} M_h$ with arbitrary phases $\theta_g$ preserves $T$. Complete covariance is necessary to obtain restrictions on the phases.

One might be worried that the construction above, together with Postulate 2, allows us only to implement $M_{g_0}$ on $A$, because $C$ is never allowed to be prepared in a state $|g\rangle\neq |g_0\rangle$. However, this can easily be resolved by setting $g_0:=e$ (the unit element of the group), and implementing the completely covariant isometries $U(h):=\sum_{g\in\cg} M_{gh}\otimes |g\rangle\langle g|_C$ instead of $U$. Those maps implement $M_h$ on $A$, for any $h\in\cg$.

So far, we have interpreted the $M_g$ as isometries from $\mathcal{H}_{\rm phys}$ to $\mathcal{H}_{\rm kin}$. However, we can also interpret them as operators that define a \textit{covariant instrument}~\cite{Carmeli_2009}: define the completely positive maps $\mathcal{I}_g:A\to B$ by
\[
   \mathcal{I}_g(\rho):=\frac 1 {|\mathcal{G}|} M_g\rho M_g^\dagger.
\]
then the associated map $\mathcal{I}(\rho):=\sum_{g\in\mathcal{G}} \mathcal{I}_g(\rho)$ is a channel. Hence this defines an instrument: we have a measurement on $A$ that yields some outcome $g\in\mathcal{G}$, and produces the corresponding post-measurement state on $B$,
\[
   \rho'(g):=\frac{\mathcal{I}_g(\rho)}{{\rm Tr}(\mathcal{I}_g(\rho))}=|\mathcal{G}|\,\mathcal{I}_g(\rho)=M_g\rho M_g^\dagger.
\]
All outcomes $g$ have the same probability ${\rm Tr}(\mathcal{I}_g(\rho))=\frac 1 {|\mathcal{G}|}$, and this special property makes the assignment that maps $\rho$ to the post-measurement state $\rho'(g)$, including its renormalization, linear. Moreover, the instrument is covariant, because for all $g,h$, we have
\[
   \mathcal{I}_{gh}(\rho)=\frac 1 {|\mathcal{G}|} M_{gh} \rho M_{gh}^\dagger =\frac 1 {|\mathcal{G}|} U_g^B M_h \rho M_h^\dagger {U_g^B}^\dagger=U_g^B \mathcal{I}_h({U_g^A}^\dagger \rho U_g^A){U_g^B}^\dagger.
\]
This holds because $U_g^A=\mathds{1}$ for all $g$. We can hence interpret this instrument as \textit{asking whether some internal degree of freedom of $A$ ``points in direction $g$''}, and -- depending on the answer -- yielding the post-measurement state $M_g\rho M_g^\dagger$, i.e.\ implementing the isometry $M_g$. Hence, $M_g$ can also be interpreted as ``jumping into a particular internal perspective on $A$''.

For example, if we have several particles and if $\mathcal{G}$ is the translation group (or a finite discrete version thereof, as in~\cite{KrummHoehnMueller,HoehnKrummMueller}), then we can ask: ``at which position $g$ is the first particle located?'' If we find outcome $g=0$, then the resulting post-measurement state describes the positions of all remaining particles relative to the first particle -- we have ``jumped into the perspective which describes everything in relation to the first particle''.

If we interpret the isometries $M_g$ as \textit{jumping into a particular perspective}, and if $M'_{g'}$ describes the \textit{jump into another perspective} of this kind, then we can compose these maps to obtain  a corresponding \textit{QRF transformation} from one perspective to the other, via $M'_{g'}M^\dagger_g$, restricted to the image of $M_g$. This is depicted in Fig.~\ref{fig:qrf-change} below, see also Definition~\ref{DefQRFTrafo}. Since compositions of completely covariant operations are completely covariant, this means that we can understand these QRF transformations as \textit{catalytic completely covariant operations}, which is again the reason for obtaining maps on the level of state vectors rather than density matrices.

The fact that all $g$ have the same outcome probability may seem puzzling at first: it seems as if nothing is really measured, and a completely random outcome is produced. This is the case because we have here restricted our attention to $\ch_{\rm phys}$ where all states are invariant. It is sometimes more natural to define the instrument on all of the kinematical Hilbert space where the interpretation as a meaningful quantum measurement is more obvious, and to obtain $\mathcal{I}$ as its restriction to the physical Hilbert space. Indeed, in Definition~\ref{DefRel} and the subsequent lemmas, we will consider more general covariant instruments on all of $\ch_{\rm kin}$ (as other approaches to QRFs would), and show how they can be interpreted as QRFs yielding a corresponding relationalization map with all the properties that we would naturally expect it to have.

Calling $C$ a ``catalyst'' is true in a technical sense, but should not be misinterpreted as a deep resource-theoretic statement. Indeed, $C$ is an ideal reference frame, and it is well-known that ideal frames in perfect eigenstates $|g\rangle$ allow us to implement \textit{arbitrary} local quantum channels in a globally covariant way. Intuitively, $C$ contains an infinite amount of asymmetry, and any finite amount that is used up in a process does not degrade this reference frame. In the case of connected (rather than finite) Lie groups $\cg$, well-known no-go theorems~\cite{janzing_quasi-order_2003,lostaglio_coherence_2019,marvian_no-broadcasting_2019} show that the sort of catalysis that we have constructed above is impossible at least if $C$ is finite-dimensional. However, it can be achieved approximately~\cite{aberg_catalytic_2014}, and it is this kind of approximate implementation that we have in mind when generalizing the construction of this subsection to continuous groups, such as compact Lie groups.

Having shown that essential elements of the PN approach such as the physical Hilbert space, reduction maps, and their transformation behavior can be obtained from an operational scenario that is resource-theoretic in spirit, we now turn to the formal mathematical construction of the framework. None of the above has assumed that QRFs are associated to subsystems, and our definitions and results will hence not rely on this assumption either, generalizing the PN framework to QRFs which are covariant instruments.

\subsection{Mathematical construction of the framework}
\label{SubsecPNMathematical}
From now on, we will assume that the underlying symmetry group $\cg$ is a compact Lie group unless stated otherwise. Furthermore, we will assume that $g\mapsto U_g^B$ is a continuous unitary representation of $\cg$ on $\ch_\kin$ (recall that we use the notation $B=\mathcal{L}(\mathcal{H}_{\rm kin})$) and that all Hilbert space are finite-dimensional. We define the physical Hilbert space as motivated and introduced for finite groups in the previous subsection:
\begin{definition}
The physical Hilbert space $\mathcal{H}_{\rm phys}$ is the subspace of $\mathcal{H}_{\rm kin}$ defined by
\[
   \mathcal{H}_{\rm phys}=\{|\psi\rangle\in\mathcal{H}_{\rm kin}\,\,|\,\, U_g^B|\psi\rangle=|\psi\rangle\mbox{ for all }g\in\mathcal{G}\}.
\]
The set of linear operators on this space will be denoted $A:=\mathcal{L}(\mathcal{H}_{\rm phys})$ (in contrast to $B=\mathcal{L}(\mathcal{H}_{\rm kin})$).
\end{definition}
Before we introduce and define our generalized notion of QRFs, we recall the definition of a covariant quantum instrument~\cite{carmeli2009}. To this end, let $\Omega$ be a compact topological space which is Hausdorff and satisfies the second countability axiom. Moreover, we assume that $\cg$ acts continuously and transitively on $\Omega$. We will denote the Borel $\sigma$-algebra of $\Omega$ by $\cb(\Omega)$. Furthermore, we will fix the notation $gX:=\{gx|x\in X\}$ for $X\in\cb(\Omega)$ and $g\in\cg$.

 We can think of $\Omega$ as the classical outcome space of the quantum instrument, and every set $X\in\cb(\Omega)$ is a classical output of the instrument. In addition to the classical output $X$, the instrument maps the input state $\rho$ to an (in general subnormalized) post-measurement state $\mathcal{I}_X(\rho)$. This is in more detail formalized in the following definition~\cite{chiribella2009}, which denotes the set of completely positive linear maps from $\mathcal{H}_{\rm in}$ to $\mathcal{H}_{\rm out}$ by ${\rm CP}(\mathcal{H}_{\rm in},\mathcal{H}_{\rm out})$ (for an alternative equivalent definition of quantum instruments, see e.g.~\cite{carmeli2009}):
\begin{definition}
A map $\mathcal{I}:\mathcal{B}(\Omega)\to {\rm CP}(\mathcal{H}_{\rm in},\mathcal{H}_{\rm out})$ is a \emph{quantum instrument} if it satisfies the properties
\begin{enumerate}
    \item ${\rm Tr}\left( \mathcal{I}_\Omega(\rho)\right)={\rm Tr}(\rho)$ for all $\rho\in\mathcal{L}(\mathcal{H}_{\rm in})$,
    \item $\mathcal{I}_{\bigcup_{i=1}^\infty B_i}=\sum_{i=1}^\infty \mathcal{I}_{B_i}$ for all sequences of pairwise disjoint $B_i\in\mathcal{B}(\Omega)$.
\end{enumerate}
\end{definition}
We will refer to $\ch_{\rm in}$ and $\ch_{\rm out}$ as the input and output Hilbert spaces, and we consider two continuous unitary representations $U_g^{\rm in}$ and $U_g^{\rm out}$ acting on the input and output Hilbert spaces, respectively. Let us recall the definition of a covariant instrument~\cite{carmeli2009}:
\begin{definition}
An instrument $\ci$ is called covariant if 
\begin{equation}
\mathcal{I}_{gX}=\mathcal{U}_g^{\rm out}\circ \mathcal{I}_X\circ \mathcal{U}_{g^{-1}}^{\rm in}
\end{equation}
for every $g\in\cg$ and every $X\in\cb(\Omega)$.
\end{definition}
Every covariant instrument comes with a covariant positive operator-valued measure (POVM).  We will denote the set of effects on a Hilbert space, i.e.\ the set of operators $E$ with $0\leq E \leq\mathds{1}$, by $\cale(\ch)$, and recall the definition of general POVMs~\cite{ziman2012}:
\begin{definition}
    The map $E:\cb(\Omega)\rightarrow\cale(\ch)$ is called a positive operator-valued measure (POVM) if it satisfies the following properties:
    \begin{enumerate}
        \item $E(\varnothing)=0$;
        \item $E(\Omega)=\mathds{1}$;
        \item $E(\bigcup_i X_i)=\sum_i E(X_i)$ for any sequence of pairwise disjoint $X_i\in\cb(\Omega)$.
    \end{enumerate}
\end{definition}
Covariance for POVMs is defined in a similar fashion as for instruments:
\begin{definition}
    Let $E$ be a POVM with values in $\Omega$ and let $g\mapsto U_g$ be a continuous unitary representation of $\cg$. The POVM $E$ is covariant if 
    \begin{equation}
     E(gX)=\mathcal{U}_g\circ E(X)\equiv U_gE(X)U_g^\dagger
    \end{equation}
    for every $g\in\cg$ and every $X\in\cb(\Omega)$.
\end{definition}

To see that every covariant instrument $\ci$ comes with a covariant POVM acting on $\ch_{\rm in}$, we define \begin{equation}
    E(X):=\ci_X^\dagger(\mathds{1})
\end{equation}
for every $X\in\cb(\Omega)$. It is easy to see that $E$ is a POVM. To check that it is covariant, we write
\begin{align}
   (E(gX),\rho)_{\ch\cs}=& (\ci^\dagger_{gX}(\mathds{1}),\rho)_{\ch\cs}= (\mathds{1},\ci_{gX}(\rho))_{\ch\cs}=(\mathds{1},U_g^{\rm out}\ci_{X}((U_g^{\rm in})^\dagger \rho U_g^{\rm in})(U_g^{\rm out})^\dagger)_{\ch\cs}\nonumber\\
   =& ((U^{\rm out}_g)^\dagger\mathds{1}U^{\rm out}_g,\ci_{X}((U_g^{\rm in})^\dagger \rho U_g^{\rm in}))_{\ch\cs}=(\ci_X^\dagger(\mathds{1}),(U_g^{\rm in})^\dagger \rho U_g^{\rm in})_{\ch\cs}\nonumber\\
   =&(U_g^{\rm in}E(X)(U_g^{\rm in})^\dagger,\rho)_{\ch\cs},
   \label{covofpovmpr}
\end{align}
where $(X,Y)_{\ch\cs}={\rm Tr}(X^\dagger Y)$ denotes the Hilbert-Schmidt inner product. Eq.~(\ref{covofpovmpr}) holds for every $g\in \cg$, every $X\in\cb(\Omega)$ and every $\rho\in\cl(\ch_{\mbox{\tiny{in}}})$, and thus $E$ is covariant.

Covariant instruments as well as covariant POVMs can be generated from so-called seeds. Recall that we have assumed that the group action of $\cg$ on $\Omega$ acts transitively. Hence, fixing some $x_0\in\Omega$, the continuous map $x$: $g\mapsto gx_0$ is surjective, and we will write $x_g=x(g)=gx_0$. The stabilizer subgroup $\cg_0$ is defined as\begin{equation}
    \cg_0=\{g\in\cg\,\,|\,\,gx_0=x_0\}.
\end{equation}  
As an abstract group, $\mathcal{G}_0$ is independent of the choice of $x_0$: concretely, fixing $x_0'=:h x_0$ rather than $x_0$ yields the isomorphic group $\mathcal{G}'_0=h\mathcal{G}_0 h^{-1}$ as the stabilizer group. Under our assumptions, $\cg_0$ is always closed and hence compact.

Let $\mu$ be the normalized Haar measure of $\cg$, i.e.\ $\mu$ is invariant and $\mu(\cg)=1$. In the case that $\cg_0$ is compact (which for us is always true), it is possible to construct a $\cg$-invariant measure $\nu$ on $\Omega$ from the Haar measure $\mu$~\cite{folland_course_2015}, where $\cg$-invariance means that the measure satisfies
\begin{equation}
    \nu(gX)=\nu(X)
\end{equation}
for every $X\in\cb(\Omega)$ and $g\in\cg$.  
Explicitly,
\begin{equation}
    \nu(X):=\mu(x^{-1}(X)),
\end{equation}
where $x^{-1}(X):=\{g\,\,|\,\,gx_0\in X\}$ is the pre-image of $X\in\cb(\Omega)$. Written as integrals, this means that\begin{equation}
    \int_\cg f(gx_0)d\mu(g)=\int_{\Omega}f(x)d\nu(x)
\end{equation}
holds for every integrable function $f$ on $\Omega$.

Every covariant instrument can be written in terms of a density in the following way:
\begin{lemma}[Theorem 2 of~\cite{chiribella2009}]
Let $\ci$ be a covariant instrument. Then there exists a completely positive map $I_{x_0}$ (the seed), which generates a density 
\begin{equation}
   I_{g x_0}=\mathcal{U}_g^{\rm out}\circ I_{x_0}\circ \mathcal{U}_g^{\rm in}
   \label{DefFromSeed}
\end{equation}
with respect to $\nu$, such that
\begin{equation}
    \ci_X=\int_X I_x d\nu(x)
    \label{EqInstrumentDensity}
\end{equation}
for every $X\in\cb(\Omega)$.
\end{lemma}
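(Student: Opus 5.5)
We prove the density representation in three steps: obtain a density by Radon–Nikodym, average it to make it exactly covariant, and then show it is generated from its value at $x_0$. (In Eq.~(\ref{DefFromSeed}) the inner map should be read as $\mathcal{U}_{g^{-1}}^{\rm in}$, so that it matches the covariance definition.)

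\textbf{Step 1: existence of a density.} First we show that $\ci$ is absolutely continuous with respect to $\nu$. Fix $\rho\geq 0$. The scalar measure $X\mapsto{\rm Tr}(\ci_X(\rho))={\rm Tr}(E(X)\rho)$ is controlled by the trace measure $\tau(X):={\rm Tr}\,E(X)$, which is finite. So it suffices to show $\tau\ll\nu$.

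Covariance of $E$ gives $\tau(gX)=\tau(X)$, so $\tau$ is a finite $\cg$-invariant Borel measure on $\Omega$. On the homogeneous space $\Omega\cong\cg/\cg_0$, an invariant finite measure is unique up to scaling. One way to see this is to pull it back to $\cg$ by averaging over the compact group $\cg_0$, and then use uniqueness of Haar measure. Hence $\tau=\tau(\Omega)\,\nu$ with $\tau(\Omega)=\dim\ch_{\rm in}$.

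Next, choose a basis of $\mathcal{L}(\ch_{\rm in})$ and a basis of $\mathcal{L}(\ch_{\rm out})$. Each matrix element $X\mapsto\langle i|\ci_X(|k\rangle\langle l|)|j\rangle$ is a complex measure dominated (via Cauchy–Schwarz and positivity) by a multiple of $\tau$. The Radon–Nikodym theorem then yields an $L^1(\nu)$ density for each entry.

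Assembling these entries gives a measurable map $x\mapsto J_x$ into linear maps with $\ci_X=\int_X J_x\,d\nu(x)$. Complete positivity of $J_x$ for $\nu$-almost every $x$ follows from positivity of $\ci_X\otimes{\rm id}$ on every Borel $X$. Apply this to a countable dense set of test operators and use Lebesgue differentiation, or simply the fact that a function with nonnegative integrals over all Borel sets is a.e.\ nonnegative. Then redefine $J$ on a null set.

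\textbf{Step 2: make the density exactly covariant.} Covariance of $\ci$ together with invariance of $\nu$ gives
\[
\int_X \mathcal{U}_g^{\rm out}\circ J_{g^{-1}x}\circ\mathcal{U}_{g^{-1}}^{\rm in}\,d\nu(x)=\int_X J_x\,d\nu(x)
\]
for all $X$. Hence $J_{gx}=\mathcal{U}_g^{\rm out}\circ J_x\circ\mathcal{U}_{g^{-1}}^{\rm in}$ holds for $\nu$-a.e.\ $x$, for each fixed $g$.

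The difficulty is that the exceptional null set depends on $g$, so we cannot simply evaluate at $x_0$. This is the main obstacle. To resolve it, replace $J$ by the averaged density
\[
I_x:=\int_\cg \mathcal{U}_{h}^{\rm out}\circ J_{h^{-1}x}\circ\mathcal{U}_{h^{-1}}^{\rm in}\,d\mu(h).
\]
By Fubini, using that $(h,x)\mapsto h^{-1}x$ preserves the product measure, $I_x=J_x$ for $\nu$-a.e.\ $x$. Therefore $I$ is still a completely positive density of $\ci$.

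\textbf{Step 3: $I$ is generated from a seed.} A change of variables $h\mapsto gh$ in the Haar integral shows that $I_{gx}=\mathcal{U}_g^{\rm out}\circ I_x\circ\mathcal{U}_{g^{-1}}^{\rm in}$ holds for every $x$ and every $g$, with no exceptions. Setting $x=x_0$ gives Eq.~(\ref{DefFromSeed}) with seed $I_{x_0}$, and $I_{x_0}$ is completely positive as an average of completely positive maps.

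One technicality remains: $I_x$ might be infinite for some individual $x$ if $J$ is only integrable. We handle this by checking that the integral defining $I_x$ is finite. Using Tonelli and the identity $\int_\cg f(h^{-1}x)\,d\mu(h)=\int_\Omega f\,d\nu$, which follows from transitivity and invariance, we get ${\rm Tr}\,I_x(\mathds{1})=\int_\Omega{\rm Tr}\,J_y(\mathds{1})\,d\nu(y)=\dim\ch_{\rm in}<\infty$. Thus $I_x$ is finite for every $x$.
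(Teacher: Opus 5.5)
Your proof is correct. You are also right that the inner map in Eq.~(\ref{DefFromSeed}) has to be $\mathcal{U}^{\rm in}_{g^{-1}}$, as the paper's own definition of covariance requires. The paper does not prove this lemma; it imports it from Chiribella et al., so your argument is a genuinely different, self-contained route. In effect you rerun the proof of the covariant-POVM density theorem (Lemma~\ref{LemPOVM}) directly at the level of CP maps: the finite trace measure is invariant, so $\tau={\rm dim}\,\mathcal{H}_{\rm in}\,\nu$; Radon--Nikodym gives an a.e.\ CP density; Haar averaging makes it covariant everywhere. If one is content to cite, the paper's own toolbox offers a shorter reduction. The Choi operators $C_X$ of $\mathcal{I}_X$ form a positive operator measure covariant under $W_g=U^{\rm out}_g\otimes\overline{U^{\rm in}_g}$. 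Since $C_\Omega$ commutes with all $W_g$ and $0\le C_X\le C_\Omega$, the map $X\mapsto C_\Omega^{-1/2}C_XC_\Omega^{-1/2}$ on the support of $C_\Omega$ is a covariant POVM. Lemma~\ref{LemPOVM} then gives it a covariant density, and conjugating back with $C_\Omega^{1/2}$ yields a positive covariant Choi density, i.e.\ a CP seed. That route is quicker but hides all the measure theory inside the cited theorem. Yours shows exactly where compactness and finite dimension enter. It also gives for free the $\mathcal{G}_0$-invariance $I_{x_0}=I_{hx_0}=\mathcal{U}^{\rm out}_h\circ I_{x_0}\circ\mathcal{U}^{\rm in}_{h^{-1}}$ ($h\in\mathcal{G}_0$) that the paper asserts after the lemma. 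Your argument can also be trimmed in two places. First, Step~1 already gives ${\rm Tr}\,J_x(\mathds{1})={\rm dim}\,\mathcal{H}_{\rm in}$ for $\nu$-a.e.\ $x$, so after one more null-set modification $J$ is bounded and the final finiteness check is superfluous. Second, Fubini together with invariance of $\nu$ and covariance of $\mathcal{I}$ gives $\int_X I_x\,d\nu(x)=\int_{\mathcal{G}}\mathcal{U}^{\rm out}_h\circ\mathcal{I}_{h^{-1}X}\circ\mathcal{U}^{\rm in}_{h^{-1}}\,d\mu(h)=\mathcal{I}_X$ directly. So the $g$-dependent a.e.\ identity of Step~2, and the null-set bookkeeping it requires, is not needed.
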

To obtain $I_x$, we can write $x=g x_0$ and use Eq.~(\ref{DefFromSeed}). It is part of the claim above that $I_x$ does not depend on the choice of $g$, i.e.\ if $x=g' x_0$ for some $g'\neq g$, then $\mathcal{U}_g^{\rm out}\circ I_{x_0}\circ \mathcal{U}_g^{\rm in}=\mathcal{U}_{g'}^{\rm out}\circ I_{x_0}\circ \mathcal{U}_{g'}^{\rm in}$. In particular, $\mathcal{U}_h^{\rm out}\circ I_{x_0}\circ \mathcal{U}_h^{\rm in}=I_{x_0}$ for all $h\in\mathcal{G}_0$, which is also shown in~\cite{chiribella2009}.

A similar result is known to hold for POVMs:
\begin{lemma}[Theorem 4.2.3 of~\cite{holevo2011}]
\label{LemPOVM}
    Let $E_{x_0}\in \cl(\ch_{\rm in})$ be a positive operator such that $[E_{x_0},U_g^{\rm in}]=0$ for every $g\in\cg_0$, satisfying
    \begin{equation}
        \int_{\mathcal{G}} \mathcal{U}_g^{\rm in}(E_{x_0})\, d\mu(g)=\mathds{1}.
    \end{equation} 
    By setting $E_{g x_0}:=\mathcal{U}_g^{\rm in}(E_{x_0})$, we obtain a covariant POVM $E$ defined by
    \begin{equation}
        E(X):=\int_{X}E_x d\nu(x).
    \end{equation}
    Conversely, for every covariant POVM $E$, there exists a unique operator $E_{x_0}$ satisfying the above conditions, generating a POVM $E$ in the fashion just described. 
\end{lemma}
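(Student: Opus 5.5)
Both directions reduce to computations with the Haar measure $\mu$ and the invariant measure $\nu(X)=\mu(x^{-1}(X))$ on $\Omega$. Throughout I use the integration formula $\int_\cg f(gx_0)\,d\mu(g)=\int_\Omega f(x)\,d\nu(x)$.

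\emph{Existence (seed $\Rightarrow$ POVM).} The first step is to check that $x\mapsto E_x$ is well defined. If $gx_0=g'x_0$, then $g^{-1}g'\in\cg_0$. Since $E_{x_0}$ commutes with $U_h^{\rm in}$ for $h\in\cg_0$, we get $\mathcal{U}^{\rm in}_{g'}(E_{x_0})=\mathcal{U}^{\rm in}_g(E_{x_0})$. The map $g\mapsto \mathcal U_g^{\rm in}(E_{x_0})$ is continuous and constant on cosets, so $x\mapsto E_x$ is continuous on $\Omega\cong\cg/\cg_0$. The integral $E(X)=\int_X E_x\,d\nu(x)$ is therefore defined entrywise.

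The POVM axioms then follow directly:
\begin{itemize}
\item $E(\varnothing)=0$ is trivial.
\item $E(X)\ge 0$ because the integrand is positive.
\item Countable additivity follows from dominated convergence, since $\|E_x\|=\|E_{x_0}\|$ is bounded.
\item Normalization follows from the integration formula: $E(\Omega)=\int_\Omega E_x\,d\nu=\int_\cg \mathcal U_g^{\rm in}(E_{x_0})\,d\mu(g)=\mathds 1$. This also gives $E(X)\le E(\Omega)=\mathds 1$.
\end{itemize}
For covariance, I substitute $x=gy$ and use $\nu$-invariance together with $E_{gy}=\mathcal U_g^{\rm in}(E_y)$:
\[
E(gX)=\int_{gX}E_x\,d\nu(x)=\int_X E_{gy}\,d\nu(y)=\mathcal U^{\rm in}_g(E(X)).
\]

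\emph{Converse (POVM $\Rightarrow$ seed).} Given a covariant POVM $E$, I would follow the paper's instrument discussion and obtain a density by Radon--Nikodym. For every state $\rho$, the map $X\mapsto{\rm Tr}(\rho E(X))$ is a finite measure. To show it is absolutely continuous with respect to $\nu$, I pick any state $\rho_0$ and average it as $\bar\rho:=\int_\cg\mathcal U_g^{\rm in}(\rho_0)\,d\mu(g)$, then compute ${\rm Tr}(\bar\rho E(X))=\int_\cg{\rm Tr}(\rho_0 E(g^{-1}X))\,d\mu(g)$. The plan is to show that $\nu(X)=0$ forces this to vanish: $\nu(X)=0$ means $\{g: g^{-1}x\in X\}$ is $\mu$-null for every $x$, and a Fubini argument gives the vanishing. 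Hence ${\rm Tr}(\bar\rho E(X))=0$.

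This handles $\bar\rho$, but it does not directly handle an arbitrary $\rho$. Since $\mathds 1$ itself is invariant, the cleanest route is to take $\rho_0=\mathds 1/d$, so that ${\rm Tr}(E(X))=0$ whenever $\nu(X)=0$. Positivity of $E(X)$ then gives $E(X)=0$, so $E\ll\nu$ in operator sense. Finite dimensionality lets me apply Radon--Nikodym entrywise in a fixed basis. This yields a measurable $x\mapsto E_x$, positive $\nu$-a.e., with $E(X)=\int_X E_x\,d\nu$.

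\emph{Main obstacle: upgrading the a.e.\ density to an exact orbit form.} Covariance together with uniqueness of Radon--Nikodym derivatives gives $E_{gx}=\mathcal U_g^{\rm in}(E_x)$ for $\nu$-a.e.\ $x$, but only for each fixed $g$. To get an exact statement, I define
\[
\tilde E_x:=\int_\cg \mathcal U_g^{\rm in}(E_{g^{-1}x})\,d\mu(g).
\]
By the a.e.\ covariance and Fubini, $\tilde E=E$ $\nu$-a.e., so $\tilde E$ still represents $E$. It is also exactly covariant everywhere, by invariance of $\mu$.

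I then set $E_{x_0}:=\tilde E_{x_0}$. Applying exact covariance with $h\in\cg_0$ gives $\mathcal U^{\rm in}_h(E_{x_0})=E_{hx_0}=E_{x_0}$, i.e.\ $[E_{x_0},U_h^{\rm in}]=0$. The normalization $\int_\cg\mathcal U_g^{\rm in}(E_{x_0})\,d\mu=E(\Omega)=\mathds 1$ holds as in the existence part.

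Uniqueness uses continuity. Two exactly covariant densities $E_x=\mathcal U_g(E_{x_0})$ and $E'_x=\mathcal U_g(E'_{x_0})$ are both continuous in $x$. If they agree $\nu$-a.e., they agree on the support of $\nu$, which is all of $\Omega$ because $\nu$ is invariant and $\cg$ acts transitively. Hence $E_{x_0}=E'_{x_0}$.
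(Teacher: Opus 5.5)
The paper gives no proof of this lemma; it imports it from Holevo. Your proposal is therefore a self-contained reconstruction rather than a variant of an in-paper argument, and it is essentially correct. The existence direction is complete as written.

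In the converse, your Fubini computation yields more than you use. For every $y=hx_0$ one has $\mu(\{g:gy\in X\})=\mu(x^{-1}(X)h^{-1})=\nu(X)$. (This, rather than $\{g:g^{-1}y\in X\}$, is the set that appears after swapping the integrals; both are null anyway.) Hence the invariant measure $m(X):={\rm Tr}\,E(X)$ satisfies $m(X)=\int_\cg m(g^{-1}X)\,d\mu(g)=\int_\Omega\mu(\{g:gy\in X\})\,dm(y)=d\,\nu(X)$ exactly, with $d=\dim\ch_{\rm in}$. In other words, you have reproved uniqueness of the invariant probability measure on $\Omega$.

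Combined with $0\le E(X)\le{\rm Tr}(E(X))\,\mathds{1}=d\,\nu(X)\,\mathds{1}$, this lets you choose the density, after modification on a $\nu$-null set, with $0\le E_x\le d\,\mathds{1}$ for every $x$. That settles two points your write-up leaves implicit:
\begin{itemize}
\item that $\tilde E_x=\int_\cg\mathcal{U}_g^{\rm in}(E_{g^{-1}x})\,d\mu(g)$ exists at every point, in particular at $x_0$, since an a priori merely $L^1(\nu)$ density says nothing at a single point;
\item that $\tilde E_{x_0}\ge 0$.
\end{itemize}
Both also follow from the fact that $g\mapsto g^{-1}x$ pushes $\mu$ forward to $\nu$ for each fixed $x$.

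Finally, uniqueness needs neither continuity nor the support of $\nu$. Two exactly covariant densities representing the same $E$ agree $\nu$-a.e., hence at some point $gx_0$. Then $\mathcal{U}_g^{\rm in}(E_{x_0})=\mathcal{U}_g^{\rm in}(E'_{x_0})$ already forces $E_{x_0}=E'_{x_0}$.
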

Covariant instruments were shown to be closely related to the notion of a quantum reference frame (QRF) in Subsection~\ref{SubsecOpMot}: essentially, $I_x$ asks whether an internal degree of freedom is ``oriented in direction $x$'', and the corresponding state update yields a description of the state relative to this convention. To obtain a state update that yields a QRF perspective in this sense, we need a covariant instrument of the following form:
\begin{definition}
\label{DefQRF}
    A quantum reference frame (QRF) is given by
\begin{itemize}
    \item a covariant instrument $\mathcal{I}:\mathcal{B}(\Omega)\to {\rm CP}(\mathcal{H}_{\rm phys},\mathcal{H}_{\rm kin})$ such that the density seed of the instrument is of the form\begin{equation}
        I_{x_0}(\rho)=M_{x_0}\rho M_{x_0}^\dagger,\label{krausrank1}
    \end{equation}
    that is, $I_{x_0}$ has Kraus rank 1; and
    \item a choice of Kraus operators $M_x:A\to B$ such that $I_x(\rho)=M_x\rho M_x^\dagger$ and
\begin{equation}
   U_g^B M_x=\chi(g)M_{gx} \qquad \mbox{for all }x\in\Omega,g\in\mathcal{G},
   \label{EqTransformM}
\end{equation}
where $\chi$ is some one-dimensional representation of $\mathcal{G}$.
\end{itemize}
\end{definition}
Note that Eq.~(\ref{EqTransformM}) implies that the map of Eq.~(\ref{krausrank1}) is a covariant instrument if only one further condition is satisfied: the normalization condition, $\mathcal{I}_\Omega^\dagger(\mathds{1})=\mathds{1}_{\ch_{\rm phys}}$. This is equivalent to $\int_\Omega M_x^\dagger M_x\, d\nu(x)=\mathds{1}_{\ch_{\rm phys}}$.

For simplicity, we will often denote the QRF by $\mathcal{I}$ (i.e.\ the corresponding covariant instrument), but recall that it always comes with an implicit choice of (the phases of) the associated isometries $\{M_x\}_{x\in\Omega}$.

Let us first see that the standard approach of ``QRFs as subsystems'' is covered by this definition as a special case. In this standard scenario, one has $\ch_{\kin} \cong \ch_R \otimes \ch_S$, transforming under a product unitary representation $g \mapsto U_g^{R} \otimes U_g^S$ of $\cg$. The QRF is then associated with the subsystem $R$, and it is assumed that there is a distinguished covariant POVM with density $E_x^R$ for the representation $g \mapsto U_R(g)$ on $\ch_R$ (also known as a system of imprimitivity) which defines how the ``orientation'' of the QRF should be understood.
\begin{lemma}[Subsystem QRFs as a special case]
\label{LemSubsystemQRFs}
In the scenario just described, define the Kraus operators as $M_x^R:=\sqrt{E_x^R}$ and $M_x := M_x^R \otimes \mathds{1}_S\upharpoonright \ch_\phys$. Then this defines a QRF in the sense of Definition~\ref{DefQRF}.
\end{lemma}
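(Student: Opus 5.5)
We have to check three things for the operators $M_x:=\sqrt{E^R_x}\otimes\mathds{1}_S$ restricted to $\ch_\phys$: the transformation law~(\ref{EqTransformM}), the normalization $\int_\Omega M_x^\dagger M_x\,d\nu(x)=\mathds{1}_{\ch_\phys}$, and that the resulting instrument is covariant with a rank-one seed. By the remark after Definition~\ref{DefQRF}, the last point follows from the first two. So the plan is to verify the transformation law with the trivial character $\chi\equiv 1$, and then the normalization.

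\textbf{Transformation law.} Because $E^R$ is a covariant POVM density, $E^R_{gx}=U^R_g E^R_x {U^R_g}^\dagger$. The map $X\mapsto U^R_g X {U^R_g}^\dagger$ is a $*$-automorphism, and it preserves positivity. The positive square root is unique, and conjugating a positive square root by a unitary gives a positive square root of the conjugated operator. Hence $\sqrt{E^R_{gx}}=U^R_g\sqrt{E^R_x}\,{U^R_g}^\dagger$. For $|\psi\rangle\in\ch_\phys$ we have $|\psi\rangle=(U^R_g\otimes U^S_g)^\dagger|\psi\rangle$, so
\[
U^B_g M_x|\psi\rangle=(U^R_g\sqrt{E^R_x}\otimes U^S_g)(U^R_g\otimes U^S_g)^\dagger|\psi\rangle=(\sqrt{E^R_{gx}}\otimes\mathds{1}_S)|\psi\rangle=M_{gx}|\psi\rangle .
\]
This is exactly~(\ref{EqTransformM}) with $\chi\equiv1$. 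The step that needs care is using invariance of $|\psi\rangle$ to absorb the $U^S_g$ factor. This is the only place where the restriction to $\ch_\phys$ enters, and it is also why $M_x$ is not a Kraus operator with this property on all of $\ch_\kin$.

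\textbf{Normalization.} $M_x^\dagger M_x=\Pi_\phys (E^R_x\otimes\mathds{1}_S)\Pi_\phys$, viewed as an operator on $\ch_\phys$, where $\Pi_\phys$ is the projector onto $\ch_\phys$. Integrating and using $\int_\Omega E^R_x\,d\nu(x)=E^R(\Omega)=\mathds{1}_R$ (Lemma~\ref{LemPOVM}) gives $\Pi_\phys(\mathds{1}_R\otimes\mathds{1}_S)\Pi_\phys=\mathds{1}_{\ch_\phys}$. Interchanging the integral with the fixed linear maps is harmless in finite dimensions.

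\textbf{Instrument structure.} Define $I_x(\rho):=M_x\rho M_x^\dagger$ and $\ci_X:=\int_X I_x\,d\nu(x)$. Each $I_x$ is completely positive with Kraus rank one. Countable additivity follows from additivity of the integral, and trace preservation of $\ci_\Omega$ is the normalization above. Covariance follows from the transformation law:
\[
\ci_{gX}=\int_X I_{gx}\,d\nu(x)=\mathcal U^B_g\circ\ci_X ,
\]
where we use $\nu$-invariance to substitute $x\mapsto gx$, and $U^A_g=\mathds{1}$. Finally, $I_{x_0}$ is the seed, with the $\cg_0$-invariance inherited from $[E^R_{x_0},U^R_h]=0$ for $h\in\cg_0$. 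The main obstacle, such as it is, is the square-root covariance together with the restriction to $\ch_\phys$; everything else is routine bookkeeping.
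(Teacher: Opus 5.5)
Your proposal is correct and follows essentially the same route as the paper. It establishes covariance of the positive square root, $\sqrt{E^R_{gx}}=U^R_g\sqrt{E^R_x}\,{U^R_g}^\dagger$, then absorbs the $U^S_g$ factor using invariance of vectors in $\ch_\phys$ to obtain Eq.~(\ref{EqTransformM}) with trivial $\chi$, and gets normalization from $E^R(\Omega)=\mathds{1}_R$; your uniqueness-of-square-root justification and explicit instrument-covariance check only fill in details the paper leaves to ``taking the square root'' and to the remark after Definition~\ref{DefQRF}.
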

\begin{proof}
By the definition of a covariant POVM, and by Lemma~\ref{LemPOVM}, we have $E_{gx}^R=U_g^R E_x^R {U_g^R}^\dagger$, and taking the square root implies $M_{gx}^R=U_g^R M_x^R {U_g^R}^\dagger$. Hence
\begin{eqnarray*}
M_{gx}&=& M_{gx}^R\otimes\mathds{1}\upharpoonright\ch_{\rm phys}=(U_g^R\otimes U_g^S)(M_x^R\otimes\mathds{1}_S)(U_g^R\otimes Ug^S)\upharpoonright\ch_{\rm phys}=(U_g^R\otimes U_g^S)(M_x^R\otimes\mathds{1}_S)\upharpoonright\ch_{\rm phys}\\
&=&(U_g^R\otimes U_g^S)M_x,
\end{eqnarray*}
since $U_g^B:=U_g^R\otimes U_g^S$ acts as the identity on $\ch_{\rm phys}$. This proves Eq.~(\ref{EqTransformM}) of Definition~\ref{DefQRF}, for $\chi=\mathbf{1}$ the trivial one-dimensional representation. To also show Eq.~(\ref{krausrank1}), we only have to check normalization, since covariance of the associated instrument follows from the condition we have just proved. This is shown as follows:
\begin{eqnarray*}
\mathcal{I}_\Omega^\dagger(\mathds{1})&=&\int_\Omega M_x^\dagger M_x d\nu(x)=\int_{\Omega} E_x^R\otimes\mathds{1}_S\, d\nu(x) \upharpoonright \ch_{\rm phys} = E^R(\Omega)\otimes\mathds{1}_S\upharpoonright\ch_{\rm phys}=\mathds{1}_{\ch_{\rm phys}}.
\end{eqnarray*}
This completes the proof that the special case of subsystem QRFs is covered by Definition~\ref{DefQRF}. What to do with it, for example how to define reduction maps, will be discussed in the remainder of this section.

\end{proof}

This constitutes already a substantial generalization of the earlier PN framework~\cite{Hamette2021}, where the measurements on $R$ were restricted to being proportional to coherent state systems. Now they may correspond to general POVMs on $R$. This lifts the PN framework to the same level of generality as other QRF frameworks, and Definition~\ref{DefQRF} generalizes this even further.

For any $\rho$ that is supported on the physical Hilbert space, i.e.\ $\rho\in A=\mathcal{L}(\mathcal{H}_{\rm phys})$, we interpret $\rho$ as the ``perspective-neutral'' description of this state, and $I_{x}(\rho)$ as its description within the particular perspective that the internal degree of freedom of $A$, defined by $\mathcal{I}$, ``points in direction $x$''. Hence, a QRF is here not defined to be a physical subsystem that appears in a tensor product factorization of $\mathcal{H}_{\rm kin}$, but a more general structure associated to a covariant instrument. We will soon show that the $M_x$ are isometries from $\mathcal{H}_{\rm phys}$ into $\mathcal{H}_{\rm kin}$, and hence, ``jumping into a perspective'' is a reversible map, which ultimately also admits the definition of reversible maps \textit{between perspectives}, i.e.\ QRF changes.

The condition~(\ref{krausrank1}) is a purity condition, in the sense that it assures that $I_{x_0}$ maps pure states to pure states. We will see later that this leads to the fact that we can define reversible QRF transformations, and hence, giving up this property would make changes of perspective genuinely irreversible.

This pragmatic justification notwithstanding, the main motivation for the form~(\ref{krausrank1}) comes from Subsection~\ref{SubsecOpMot} above: in the case where $\mathcal{G}$ is a finite group and $\mathcal{G}_0$ is trivial, i.e.\ $\mathcal{G}_0\simeq\{\mathds{1}\}$, the instrument density agrees up to a factor with the corresponding element of the discrete instrument of Subsection~\ref{SubsecOpMot}, i.e.\ $I_g=|\mathcal{G}|\cdot\mathcal{I}_g$. This discrete instrument was constructed as a reinterpretation of a completely covariant isometry from $AC$ to $BC$, i.e.\ as a redescription of the state relative to an ideal reference frame, and complete covariance was enforcing the transformation behavior of Eq.~(\ref{EqTransformM}).

If we only demand that $I_x(\rho)=M_x \rho M_x^\dagger$, then the covariance of the instrument implies that $U_g^B M_x=\omega(g,x) M_{gx}$ for some complex phases $\omega(g,x)\in\mathbb{C}$, $|\omega(g,x)|=1$. These phases are arbitrary in the following sense: for any given choice of Kraus operators $\{M_x\}_{x\in\Omega}$ that generate the instrument $\mathcal{I}$, multiplying them by arbitrary phases, $M'_x:=\theta(x)M_x$ with $|\theta(x)|=1$, gives another set of Kraus operators that also generates $\mathcal{I}$. But if we demand that the $M_x$ come directly from the completely covariant transformation~(\ref{DefU}), then these phases must also transform under $\mathcal{G}$, i.e.\ they must be one-dimensional representations.

It is easy to see that for a given QRF $\ci$, the corresponding covariant POVM is given by \begin{equation}
    E(X):=\int_X M_x^\dagger M_xd\nu(x)
\end{equation}
for every $X\in\cb(\Omega)$. We will write \begin{equation}
    E_x=M^\dagger_xM_x
\end{equation}
and for $x=gx_0$, we have
\begin{equation}
    E_x=E_{gx_0}=M_{g x_0}^\dagger M_{g x_0}=\left(M_{x_0}^\dagger {U_g^B}^\dagger\right)\left(U_g^B M_{x_0}\right)=M_{x_0}^\dagger M_{x_0}=E_{x_0}.
\end{equation}
Hence, the POVM density is constant, and equal to the unit element of $A$, i.e.\ the projector onto $\ch_{\rm phys}$:
\begin{equation}
\mathds{1}_A=E(\Omega)=\int_\Omega E_x\, dx=E_{x_0}.
\end{equation}
Hence, in some sense, the POVM that is associated to the covariant instrument is completely uninformative: it yields a uniformly random outcome. However, recall from Subsection~\ref{SubsecOpMot} that the Kraus operators have an alternative interpretation: they are isometries that yield asymmetric redescriptions of invariant states in the physical Hilbert space. This follows directly from $M_x^\dagger M_x=E_x=\mathds{1}_A$:
\begin{lemma}
\label{LemKrausIsometries}
    Let $\ci$ be a QRF. Then every associated Kraus operator $M_{x}:\mathcal{H}_{\rm phys}\to\mathcal{H}_{\rm kin}$ is an isometry.
\end{lemma}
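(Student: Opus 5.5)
The plan is to show $M_x^\dagger M_x=\mathds{1}_{\ch_{\rm phys}}$ for every $x\in\Omega$, which is exactly the statement that $M_x:\ch_{\rm phys}\to\ch_{\rm kin}$ is an isometry. The argument has three steps: constancy of the POVM density along the orbit, normalization of the instrument, and a check that the pointwise identity holds at every $x$ rather than merely $\nu$-almost everywhere.

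\emph{Constancy.} Fix $x\in\Omega$. By transitivity, write $x=gx_0$ for some $g\in\cg$. The transformation law (\ref{EqTransformM}) of Definition~\ref{DefQRF} gives $M_{gx_0}=\overline{\chi(g)}\,U_g^B M_{x_0}$. Since $|\chi(g)|=1$ and $U_g^B$ is unitary on $\ch_{\rm kin}$,
\[
E_x=M_x^\dagger M_x=M_{x_0}^\dagger {U_g^B}^\dagger U_g^B M_{x_0}=M_{x_0}^\dagger M_{x_0}=E_{x_0}.
\]
The phase cancels in this product, so the computation does not depend on which $g$ we choose with $x=gx_0$.

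\emph{Normalization.} A covariant instrument is trace preserving, so $\mathcal{I}_\Omega^\dagger(\mathds{1})=\mathds{1}_{\ch_{\rm phys}}$. Using the density representation (\ref{EqInstrumentDensity}), this reads $\int_\Omega M_x^\dagger M_x\,d\nu(x)=\mathds{1}_{\ch_{\rm phys}}$. Because the integrand is the constant $E_{x_0}$ and $\nu$ is normalized ($\nu(\Omega)=\mu(\cg)=1$), the integral equals $E_{x_0}$. Hence $E_{x_0}=\mathds{1}_{\ch_{\rm phys}}$, and by the first step $M_x^\dagger M_x=\mathds{1}_{\ch_{\rm phys}}$ for every $x$.

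\emph{Everywhere, not almost everywhere.} The only point that needs care is that the density $I_x$, and hence the Kraus operators $M_x$, are determined only $\nu$-almost everywhere, and the normalization is an integral identity. Both issues are resolved by Definition~\ref{DefQRF} itself. The $M_x$ are specified for \emph{every} $x$ and are tied to the seed $M_{x_0}$ through (\ref{EqTransformM}), so the integrand is literally constant, not just almost everywhere. Moreover, $\int_\Omega M_x^\dagger M_x\,d\nu(x)$ equals $\mathcal{I}_\Omega^\dagger(\mathds{1})$ directly from the density representation, so no pointwise information is lost in passing to the integral.
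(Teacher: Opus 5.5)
Your proof is correct and follows the paper's argument: the transformation law (\ref{EqTransformM}) makes $E_x=M_x^\dagger M_x$ constant along the orbit, and the normalization $\int_\Omega E_x\,d\nu(x)=\mathds{1}_{\ch_{\rm phys}}$ with $\nu(\Omega)=1$ then forces $E_{x_0}=\mathds{1}_{\ch_{\rm phys}}$. Your remark that the identity holds at every $x$, not just $\nu$-almost everywhere, is a harmless refinement that the paper leaves implicit.
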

We can hence use every $M_x$ to ``jump into the perspective where the QRF $\mathcal{I}$ points in direction $x$'', mapping every perspective-neutral state vector $|\psi\rangle\in\ch_{\rm phys}$ to a normalized state vector $M_x|\psi\rangle\in\ch_{\rm kin}$. The result corresponds to the unique description of the state relative to the choice of convention that $\mathcal{I}$ is considered to point in direction $x$.

Clearly, we can use the maps $I_x:=M_x\bullet M_x^\dagger$ to map density matrices $\rho$ and observables $Y$ from $A=\mathcal{L}(\ch_{\rm phys})$ into $B=\mathcal{L}(\ch_{\rm kin})$, preserving all expectation values via ${\rm Tr}(\rho Y)={\rm Tr}(I_x(\rho)I_x(Y))$. This preserves the algebraic structure:
\begin{corollary}
For a QRF $\mathcal{I}$, every associated $I_x$ is a $^*$-homomorphism from $A$ to $B$.
\end{corollary}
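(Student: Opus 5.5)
The plan is to reduce everything to Lemma~\ref{LemKrausIsometries}, which says that each Kraus operator $M_x:\ch_{\rm phys}\to\ch_{\rm kin}$ is an isometry, i.e.\ $M_x^\dagger M_x=\mathds{1}_A$. The map in question is $I_x(Y)=M_xYM_x^\dagger$ for $Y\in A=\cl(\ch_{\rm phys})$. Being a $^*$-homomorphism amounts to three properties: linearity, preservation of the adjoint, and multiplicativity. I will verify them in that order.

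Linearity is immediate from the definition. Adjoint preservation is a one-line computation: $I_x(Y)^\dagger=(M_xYM_x^\dagger)^\dagger=M_xY^\dagger M_x^\dagger=I_x(Y^\dagger)$. For multiplicativity, take $Y,Z\in A$ and insert the isometry relation:
\[
I_x(Y)I_x(Z)=M_xY\,(M_x^\dagger M_x)\,ZM_x^\dagger=M_x Y\mathds{1}_A ZM_x^\dagger=I_x(YZ),
\]
where $\mathds{1}_A Z=Z$ because $Z$ acts on $\ch_{\rm phys}$. This uses Lemma~\ref{LemKrausIsometries} exactly once.

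I will also note that $I_x$ is in general not unital. The image of the identity is $I_x(\mathds{1}_A)=M_xM_x^\dagger$, which is the projector onto the range of $M_x$, not $\mathds{1}_B$. So the statement is the non-unital one: $I_x$ is an injective $^*$-homomorphism onto the corner algebra $P_x B P_x$, with $P_x=M_xM_x^\dagger$. Injectivity follows because $M_x^\dagger I_x(Y)M_x=Y$. As a byproduct, expectation values are preserved: ${\rm Tr}(I_x(\rho)I_x(Y))={\rm Tr}(I_x(\rho Y))={\rm Tr}(M_x^\dagger M_x\rho Y)={\rm Tr}(\rho Y)$.

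There is no real obstacle. The only point needing care is the domain and codomain bookkeeping. $M_x^\dagger M_x$ equals the identity on $\ch_{\rm phys}$, not on $\ch_{\rm kin}$, so the cancellation must be placed between the two elements of $A$, as in the display above, rather than attempted on the $B$ side.
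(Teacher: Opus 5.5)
Your proposal is correct and follows the paper's route: the paper simply remarks that the claim ``follows directly from $I_x=M_x\bullet M_x^\dagger$'' with $M_x$ an isometry (Lemma~\ref{LemKrausIsometries}), and your computation inserting $M_x^\dagger M_x=\mathds{1}_A$ between the two factors is exactly that remark spelled out. Your additional observations are also correct: the map is non-unital, with $I_x(\mathds{1}_A)=\Pi^{\mathcal{I}}_{\rm phys}(x)$; it is injective onto the corner algebra; and it preserves expectation values.
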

That is, $I_x$ preserves the adjoint, sums, scalar multiples, and products, i.e.\ $I_x(X^\dagger)=I_x(X)^\dagger$, $I_x(a X+bY)=a I_x(X)+b I_x(Y)$ for $a,b\in\mathbb{C}$, and $I_x(XY)=I_x(X)I_x(Y)$. This follows directly from $I_x=M_x\bullet M_x^\dagger$.

This property is analogous to that of the ``Schr\"odinger reduction map'' of~\cite{Hamette2021}; in Section~\ref{SecRelPrevious}, we will explain in more detail in what sense the framework of~\cite{Hamette2021} is a special case of ours.

There is also a converse of Lemma~\ref{LemKrausIsometries}:
\begin{lemma}
Let $\{M_x\}_{x\in\Omega}$ be any set of operators that satisfies Eq.~(\ref{EqTransformM}), and suppose that one of the $M_x$ (and thus all of them) is an isometry. Then $I_x:=M_x\bullet M_x^\dagger$ is a valid density for a covariant instrument.
\end{lemma}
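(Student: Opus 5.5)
Taking the operators $\{M_x\}_{x\in\Omega}$ with Eq.~(\ref{EqTransformM}) as given, we have to show three things. First, that ``one isometry implies all'' holds. Second, that $I_x:=M_x\bullet M_x^\dagger$ is a well-defined family of completely positive maps of the seed form~(\ref{DefFromSeed}). Third, that $\ci_X:=\int_X I_x\,d\nu(x)$ is a covariant instrument, in particular trace-preserving on $\Omega$.

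For the first point, use transitivity. Any $x\in\Omega$ can be written as $x=gx_0$, and Eq.~(\ref{EqTransformM}) gives $M_{gx_0}=\overline{\chi(g)}\,U_g^B M_{x_0}$. Hence
\[
M_{gx_0}^\dagger M_{gx_0}=|\chi(g)|^2 M_{x_0}^\dagger {U_g^B}^\dagger U_g^B M_{x_0}=M_{x_0}^\dagger M_{x_0}.
\]
So if one $M_x$ is an isometry, then $M_{x_0}$ is one (apply the same identity with roles swapped), and therefore all of them are.

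For the second point, the same relation gives
\[
I_{gx}(\rho)=M_{gx}\rho M_{gx}^\dagger=|\chi(g)|^2\,U_g^B M_x\rho M_x^\dagger {U_g^B}^\dagger=\mathcal{U}_g^B\circ I_x(\rho).
\]
The phase drops out because $|\chi|=1$. Since the input representation on $\ch_{\rm phys}$ acts trivially ($U_g^B\upharpoonright\ch_{\rm phys}=\mathds{1}$), this is precisely $I_{gx}=\mathcal{U}_g^{\rm out}\circ I_x\circ\mathcal{U}_{g^{-1}}^{\rm in}$. Each $I_x$ is manifestly completely positive with Kraus rank one. Well-definedness on $\Omega$ is automatic, because $M_x$ is indexed by $x$ itself rather than by $g$.

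For the third point, I would define $\ci_X(\rho):=\int_X M_x\rho M_x^\dagger\,d\nu(x)$ and check the instrument axioms in turn. Complete positivity holds because $\ci_X$ is an integral of CP maps. Countable additivity follows from countable additivity of the integral, by dominated convergence in finite dimensions. Trace preservation follows because $\ci_\Omega^\dagger(\mathds{1})=\int_\Omega M_x^\dagger M_x\,d\nu=\mathds{1}_{\ch_{\rm phys}}\,\nu(\Omega)=\mathds{1}_{\ch_{\rm phys}}$, using the isometry property and $\nu(\Omega)=\mu(\cg)=1$. Covariance follows from the invariance of $\nu$ together with the second step: $\ci_{gX}=\int_{gX}I_y\,d\nu(y)=\int_X I_{gx}\,d\nu(x)=\mathcal{U}_g^B\circ\ci_X$.

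The only genuinely delicate point is measurability of $x\mapsto M_x$, which is needed for the integrals to make sense. I would handle it via the continuous map $g\mapsto \overline{\chi(g)}U_g^BM_{x_0}$. On $\cg_0$ this map is constant: $U_h^BM_{x_0}=\chi(h)M_{x_0}$ for $h\in\cg_0$ by Eq.~(\ref{EqTransformM}). Hence it descends to a continuous map on $\cg/\cg_0\cong\Omega$. This quotient identification holds since $\cg$ is compact and $\Omega$ is Hausdorff. Finally, $I_x$ is continuous even without this descent argument, since it is phase-insensitive.
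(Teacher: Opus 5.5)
Your proof is correct and takes essentially the paper's approach. The paper likewise reduces the lemma to the normalization $\int_\Omega M_x^\dagger M_x\,d\nu(x)=\mathds{1}_{\ch_{\rm phys}}$, using $M_x^\dagger M_x=\mathds{1}$ and $\nu(\Omega)=1$, and calls covariance a straightforward consequence of Eq.~(\ref{EqTransformM}); your extra steps (one isometry implies all, $I_{gx}=\mathcal{U}_g^B\circ I_x$ with trivial input action, countable additivity, invariance of $\nu$, measurability) make explicit what the paper leaves implicit. On measurability, only $x\mapsto I_x$ matters, and your closing remark that $I_{gx_0}=\mathcal{U}_g^B\circ I_{x_0}$ is continuous regardless of phases is the better argument; the continuity of $g\mapsto\overline{\chi(g)}U_g^B M_{x_0}$ would additionally need $\chi$ to be continuous, which the paper does not explicitly assume.
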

\begin{proof}
We only have to check normalization. Using that $M_x^\dagger M_x=\mathds{1}$, we get
\[
   {\rm Tr}(\mathcal{I}_\Omega(\rho))={\rm Tr}\int_\Omega M_x \rho M_x^\dagger d\nu(x)=\int_\Omega {\rm Tr}(\rho M_x^\dagger M_x)\,d\nu(x)={\rm Tr}(\rho).
\]
It is a straightforward calculation that the transformation behavior is the required one, too.
\end{proof}

Let us now analyze in some more detail the images of the isometries $M_x:\mathcal{H}_{\rm phys}\to\mathcal{H}_{\rm kin}$.
\begin{definition}
\label{DefRepSpaces}
For every QRF $\mathcal{I}$ and every $x\in\Omega$, we define the representation space
\[
   \mathcal{H}_{\rm phys}^{\mathcal{I}}(x):={\rm Im}(M_x)\subseteq\mathcal{H}_{\rm kin},
\]
and the orthogonal projector $\Pi_{\rm phys}^{\mathcal{I}}(x):\mathcal{H}_{\rm kin}\to \mathcal{H}_{\rm kin}$ as
\[
   \Pi_{\rm phys}^{\mathcal{I}}(x):=M_x M_x^\dagger.
\]
\end{definition}
From simple linear algebra, if $V$ is an isometry, then $V V^\dagger$ is the orthogonal projector into the image of $V$. Hence, since every $M_x$ is an isometry, $\Pi_{\rm phys}^{\mathcal{I}}(x)$ is the orthogonal projector onto $\mathcal{H}_{\rm phys}^{\mathcal{I}}(x)$. This Hilbert space is the collection of all state descriptions that we can obtain if we ``jump into the perspective defined by QRF $\mathcal{I}$ and outcome $x$''. Its dimension equals that of $\mathcal{H}_{\rm phys}$.

Since ``jumping into a QRF perspective'' is a reversible transformation (mathematically, an isometry), we can continue in analogy with the earlier PN framework~\cite{Hamette2021,Vanrietvelde2020} and define reversible \textit{QRF transformations} as changes between different perspectives:
\begin{definition}
\label{DefQRFTrafo}
    Let $x,x'\in\Omega$, and let $\mathcal{I}$ and $\mathcal{I}'$ be two QRFs, associated to sets of Kraus operators $\{M_x\}_{x\in\Omega}$ and $\{M'_x\}_{x\in\Omega}$, respectively. The \emph{QRF transformation} $U^{\mathcal{I},\mathcal{I}'}_{x,x'}:\ch_{\rm phys}^{\mathcal{I}}(x)\to \ch_{\rm phys}^{\mathcal{I}'}(x')$ is the unitary map
    \begin{equation}
    \label{eq:QuantCoordChanges}
       U^{\mathcal{I},\mathcal{I}'}_{x,x'}:=M'_{x'} M_x^\dagger\upharpoonright{\mathcal{H}_{\rm phys}^{\mathcal{I}}(x)}.
    \end{equation}
\end{definition}

\begin{figure}[hbt]
\includegraphics[width=.7\columnwidth]{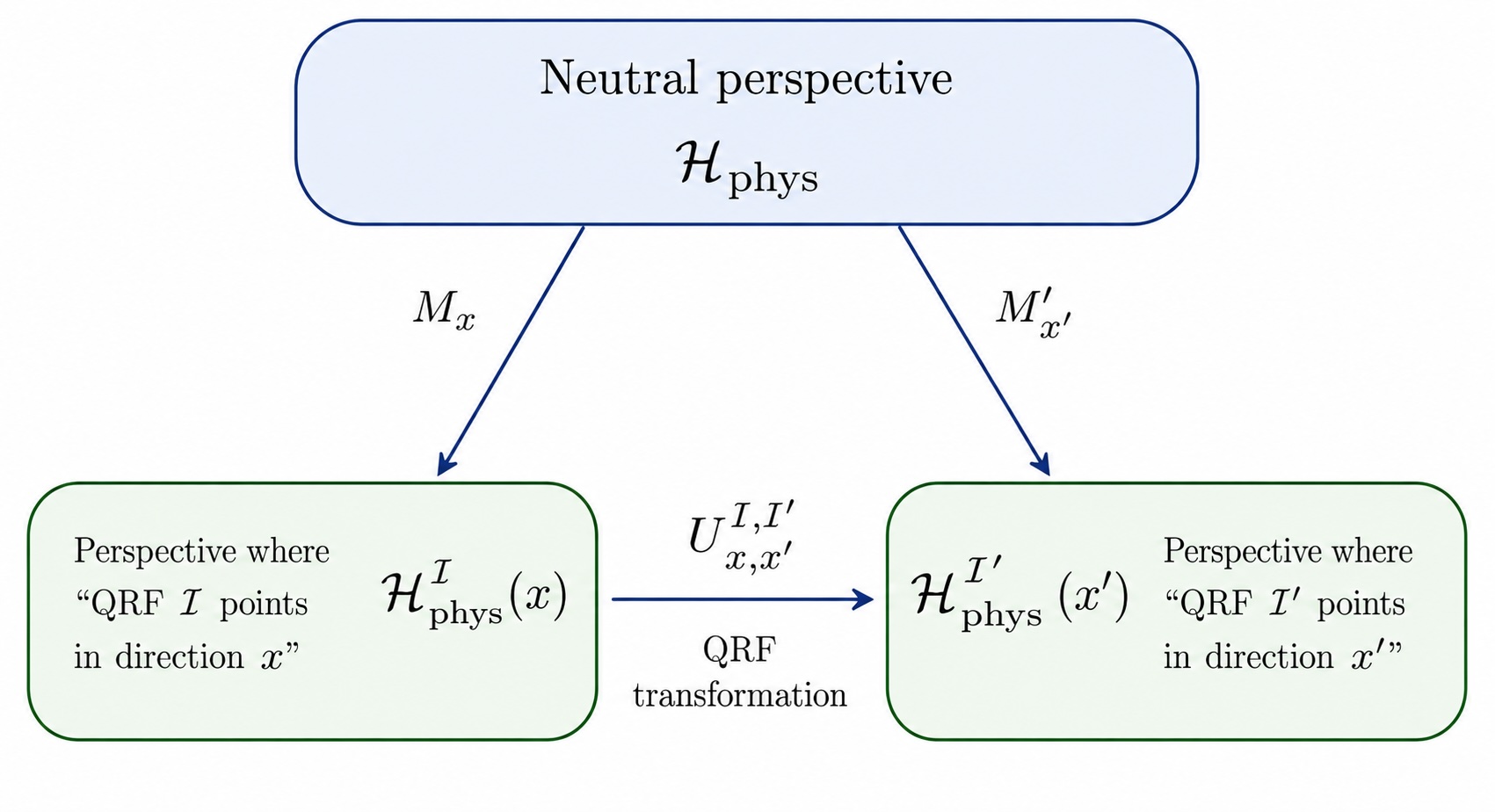}
\caption{QRF transformations as changes of perspective, similarly as in earlier formulations of the PN framework~\cite{Hamette2021,Vanrietvelde2020}. Since $M_x$ and $M'_{x'}$ are isometries, the QRF transformations $U_{x,x'}^{\mathcal{I},\mathcal{I}'}$ are unitaries that can be obtained by composing and restricting them to the relevant domain, as in Definition~\ref{DefQRFTrafo}. \aidisclosure}
\label{fig:qrf-change}
\end{figure}
The gauge transformations $U_g^B$ restricted to the physical Hilbert space $\ch_\phys^\ci(x)$ encoding the perspective of a QRF $\ci$ are special cases of QRF transformations, changing between two perspectives associated to the \textit{same} QRF:
\begin{lemma}
\label{LemQRFExGauge}
Let $\mathcal{I}$ be a QRF, with associated Kraus operators $\{M_x\}_{x\in\Omega}$ transforming via $\chi$. Moreover, let $x,x'\in\Omega$. Then, for every $g\in\mathcal{G}$ with $x'=gx$ (if $\mathcal{G}_0\not\simeq\{\mathds{1}\}$ there is more than one), we have
\begin{equation}
   U^{\mathcal{I},\mathcal{I}}_{x,x'}=\overline{\chi(g)} \,U_g^B\upharpoonright {\mathcal{H}_{\rm phys}^{\mathcal{I}}(x)}.
   \label{eqTrafoSame}
\end{equation}
That is, up to a phase and to a restriction of the domain, this QRF transformation is the gauge transformation $U_g^B$.
\end{lemma}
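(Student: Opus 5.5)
The plan is to compute $U^{\mathcal{I},\mathcal{I}}_{x,x'}$ directly from Definition~\ref{DefQRFTrafo} and rewrite $M_{x'}$ using the transformation law Eq.~(\ref{EqTransformM}). Fix $g\in\mathcal{G}$ with $x'=gx$. Eq.~(\ref{EqTransformM}) gives $U_g^B M_x=\chi(g)M_{gx}=\chi(g)M_{x'}$. Since $|\chi(g)|=1$, this becomes
\[
M_{x'}=\overline{\chi(g)}\,U_g^B M_x .
\]
This holds for every $g$ with $gx=x'$. The left-hand side does not depend on which such $g$ is chosen, so the formula is automatically well defined even when $\mathcal{G}_0$ is nontrivial. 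No separate argument about the stabilizer is needed, because Eq.~(\ref{EqTransformM}) is postulated for all $g$ and $x$.

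Next, I substitute this into the definition:
\[
U^{\mathcal{I},\mathcal{I}}_{x,x'}=M_{x'}M_x^\dagger\upharpoonright\mathcal{H}^{\mathcal{I}}_{\rm phys}(x)=\overline{\chi(g)}\,U_g^B M_xM_x^\dagger\upharpoonright\mathcal{H}^{\mathcal{I}}_{\rm phys}(x).
\]
By Lemma~\ref{LemKrausIsometries}, $M_x$ is an isometry. So by Definition~\ref{DefRepSpaces}, $M_xM_x^\dagger=\Pi^{\mathcal{I}}_{\rm phys}(x)$ is the orthogonal projector onto $\mathcal{H}^{\mathcal{I}}_{\rm phys}(x)={\rm Im}(M_x)$. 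It therefore acts as the identity on that subspace, and the expression reduces to $\overline{\chi(g)}\,U_g^B\upharpoonright\mathcal{H}^{\mathcal{I}}_{\rm phys}(x)$, which is Eq.~(\ref{eqTrafoSame}).

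As a consistency check, I would also verify that the codomain is correct. Using $U_g^BM_x=\chi(g)M_{x'}$, the operator $U_g^B$ maps ${\rm Im}(M_x)$ onto ${\rm Im}(M_{x'})=\mathcal{H}^{\mathcal{I}}_{\rm phys}(x')$. Hence the restricted gauge transformation is indeed a unitary between the two representation spaces, as Definition~\ref{DefQRFTrafo} requires.

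No step here is a real obstacle. The only place needing care is the handling of the phase: one must take the complex conjugate correctly, using $\chi(g)^{-1}=\overline{\chi(g)}$. One must also note that the independence of the choice of $g$ comes for free from the fact that $M_{x'}$ is defined independently of $g$.
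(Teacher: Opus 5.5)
Your proposal is correct and follows the paper's proof essentially step for step. You rewrite $M_{x'}=\overline{\chi(g)}\,U_g^B M_x$ via Eq.~(\ref{EqTransformM}), substitute into $M_{x'}M_x^\dagger$, and use that $\Pi^{\mathcal{I}}_{\rm phys}(x)=M_xM_x^\dagger$ acts as the identity on its own image; your extra remarks on independence from the choice of $g$ and on the codomain are correct additions.
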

\begin{proof}
A simple calculation yields
\[
   U^{\mathcal{I},\mathcal{I}'}_{x,x'}=M_{x'}M_x^\dagger\upharpoonright {\mathcal{H}_{\rm phys}^{\mathcal{I}}(x)} = \overline{\chi(g)}\, U_g^B M_x M_x^\dagger \upharpoonright {\mathcal{H}_{\rm phys}^{\mathcal{I}}(x)}=\overline{\chi(g)}\, U_g^B \Pi_{\rm phys}^{\mathcal{I}}(x)\upharpoonright {\mathcal{H}_{\rm phys}^{\mathcal{I}}(x)},
\]
and the orthogonal projection $\Pi_{\rm phys}^{\mathcal{I}}(x)$ acts on its own image as the identity.
\end{proof}

Before generalizing this, let us note a reformulation of Lemma~\ref{LemQRFExGauge}. Rather than considering this as a transformation of a \textit{single} QRF that is regarded as pointing in two \textit{different} directions $x$ and $x'$, we can also view it as transforming between two \textit{different} QRFs that point in the \textit{same} direction -- at least in some special case.

To this end, recall our definition of the stabilizer group $\mathcal{G}_0=\{g\in\mathcal{G}\,\,|\,\, gx_0=x_0\}$. As an \textit{abstract} group, this does not depend on the choice of $x_0$, but as a concrete subgroup of $\mathcal{G}$, it does. To emphasize this, let us denote the dependence on $x_0$ by $\mathcal{G}_0(x_0)$. Then, if $x_0'=h x_0$, we have $\mathcal{G}_0(x'_0)=h\mathcal{G}_0(x_0) h^{-1}$. Thus, we have the following equivalence:
\[
   \mathcal{G}_0\mbox{ is independent of the choice of }x_0\quad\Leftrightarrow \quad\mathcal{G}_0\trianglelefteq \mathcal{G},\mbox{ i.e. }\mathcal{G}_0\mbox{ is a normal subgroup of }\mathcal{G}.
\]
This is automatically the case, for example, if $\mathcal{G}_0=\{\mathds{1}\}$ or if $\mathcal{G}$ is Abelian. It implies that $\Omega=\mathcal{G}/\mathcal{G}_0$ is itself a group, the quotient group.

Now, similarly as in Lemma~\ref{LemQRFExGauge}, pick some fixed elements $x,x'\in\Omega$, and let us construct a QRF $\mathcal{I}'$ that is supposed to satisfy the following:
\[
   \mbox{The new QRF }\mathcal{I}'\mbox{ points in direction }x\mbox{ if and only if the old QRF }\mathcal{I}\mbox{ points in direction }x'.
\]
That this is possible, and that it uniquely defines a new QRF, is the content of the following lemma:
\begin{lemma}
\label{LemReinterpretQRF}
Suppose that $\mathcal{G}_0\trianglelefteq \mathcal{G}$, and let $\mathcal{I}$ be a QRF with associated Kraus operators $\{M_y\}_{y\in\Omega}$ transforming via one-dimensional representation $\chi$. Let $x,x'\in\Omega$ be fixed. Then there is a unique QRF $\mathcal{I}'$ with associated Kraus operators $\{M'_y\}_{y\in\Omega}$ transforming via $\chi$ such that $M'_x=M_{x'}$.
\end{lemma}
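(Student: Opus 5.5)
The plan is to build the new Kraus operators explicitly by transporting $M_{x'}$ along the group action. Because the action is transitive, every $y\in\Omega$ can be written as $y=gx$ for some $g\in\cg$. We then set
\[
   M'_{gx}:=\overline{\chi(g)}\,U_g^B M_{x'}.
\]
By Eq.~(\ref{EqTransformM}) this equals $M_{gx'}$. So, informally, $\mathcal{I}'$ in direction $gx$ is the old QRF $\mathcal{I}$ in direction $gx'$. Taking $g=e$ gives $M'_x=M_{x'}$, as required.

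The first and main step is well-definedness, i.e.\ independence of the choice of $g$. This is the only place where $\cg_0\trianglelefteq\cg$ enters. Suppose $gx=\tilde g x$. Then $h:=g^{-1}\tilde g$ lies in the stabilizer of $x$. By normality, this stabilizer equals $\cg_0$, and every element of $\cg_0$ fixes every point of $\Omega$, in particular $x'$. Applying Eq.~(\ref{EqTransformM}) to $h$ gives $U_h^B M_{x'}=\chi(h)M_{hx'}=\chi(h)M_{x'}$. Hence
\[
   \overline{\chi(\tilde g)}\,U_{\tilde g}^B M_{x'}=\overline{\chi(g)}\,\overline{\chi(h)}\,U_g^B U_h^B M_{x'}=\overline{\chi(g)}\,U_g^B M_{x'},
\]
so the definition does not depend on the representative. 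Without normality, the stabilizer of $x$ need not fix $x'$, and the construction would fail.

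Next we verify that $\{M'_y\}$ defines a QRF transforming via $\chi$. For the transformation law, take $k\in\cg$. Then
\[
   U_k^B M'_{gx}=\overline{\chi(g)}\,U_{kg}^B M_{x'}=\chi(k)\,\overline{\chi(kg)}\,U_{kg}^B M_{x'}=\chi(k)\,M'_{kgx},
\]
using multiplicativity of $\chi$. Each $M'_y$ is a unitary times the isometry $M_{x'}$ (Lemma~\ref{LemKrausIsometries}), hence an isometry. The converse lemma following Lemma~\ref{LemKrausIsometries} then shows that $M'_y\bullet M_y'^\dagger$ is a valid covariant instrument density. Its seed at $x_0$ has Kraus rank one by construction, so Definition~\ref{DefQRF} is satisfied.

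For measurability and continuity in $y$, we use the identity $M'_{gx}=M_{gx'}$. This exhibits $y\mapsto M'_y$ as $M$ composed with the map $gx\mapsto gx'$. That map is well defined by the argument above, and it is continuous because, when $\cg_0$ is normal, it is right multiplication by a fixed element of the quotient group $\Omega=\cg/\cg_0$.

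Uniqueness is straightforward. Suppose $\{M''_y\}$ also transforms via $\chi$ and satisfies $M''_x=M_{x'}$. Then for any $y=gx$, Eq.~(\ref{EqTransformM}) forces
\[
   M''_y=\overline{\chi(g)}\,U_g^B M''_x=\overline{\chi(g)}\,U_g^B M_{x'}=M'_y.
\]
Hence $M''=M'$, and in particular the induced instrument coincides with $\mathcal{I}'$.
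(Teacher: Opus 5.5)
Your proof is correct and follows essentially the same route as the paper. You define $M'_{gx}:=\overline{\chi(g)}\,U_g^B M_{x'}$ (which equals $M_{gx'}$), use normality of $\mathcal{G}_0$ to make this independent of the representative $g$, and verify the transformation law via multiplicativity of $\chi$. The explicit uniqueness argument, the appeal to the converse of Lemma~\ref{LemKrausIsometries} for normalization, and the continuity remark via $\Omega\cong\mathcal{G}/\mathcal{G}_0$ only spell out points the paper's proof leaves implicit.
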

\begin{proof}
Set $M'_x:=M_{x'}$. For every $y\in\Omega$, we have to define $M'_y$ via $\overline{\chi(g)}\,U_g^B M'_x$, where $g\in\mathcal{G}$ is some group element such that $y=gx$. However, we have to check that this is independent of the choice of $g$. So suppose that $g'$ is a potentially different group element with $y=g' x$ too, then
\[
   gx=g'x\Leftrightarrow g^{-1} g'\in\mathcal{G}_0(x)\Leftrightarrow g^{-1}g'\in \mathcal{G}_0(x')\Leftrightarrow gx'=g'x'.
\]
Therefore, we have
\[
   \overline{\chi(g')}\, U_{g'}^B M'_x = \overline{\chi(g')}\, U_{g'}^B M_{x'}=M_{g'x'}=M_{gx'}=\overline{\chi(g)}\, U_g^B M_{x'}=\overline{\chi(g)}\, U_g^B M'_{x}.
\]
So the $\{M'_y\}_{y\in\Omega}$ are well-defined, and they are clearly isometries because $M_{x'}$ is. We finally have to check that they satisfy Eq.~(\ref{EqTransformM}), i.e.\ the condition to generate a QRF. Let $g\in\mathcal{G}$ and $y\in\Omega$ be arbitrary elements, and fix some $h\in\mathcal{G}$ such that $hx=y$. Then
\begin{equation}
   \overline{\chi(g)}\, U_g^B M'_y =\overline{\chi(g)}\, U_g^B \overline{\chi(h)}\, U_h^B M'_x=\overline{\chi(gh)}U_{gh}^B M'_x = M'_{ghx}=M'_{gy}.
   \label{EqTransformsNicely}
\end{equation}
Hence, we obtain a unique QRF with the required properties.
\end{proof}
What the above says is that we can obtain a new QRF $\mathcal{I}'$ by ``readjusting'' a given QRF $\mathcal{I}$, in a similar way as we can get a new clock by setting a given clock to a new time: \textit{``The Lisbon clock is the clock which shows noon exactly when the Viennese clock shows 1pm''.}

Now consider the idea of adjusting a clock ``in superposition'':
\begin{lemma}[Coherently controlled adjustment of a QRF]
\label{LemCoherentlyControlled}
Suppose that $\mathcal{G}_0\trianglelefteq\mathcal{G}$. Let $\{\Pi_j\}_j\subset B$ be a projective measurement on $\mathcal{H}_{\rm kin}$ which is $\mathcal{G}$-symmetric, i.e.\ $[\Pi_j,U_g^B]=0$ for all $j$ and all $g\in\mathcal{G}$. Let $x\in\Omega$, and let $\mathcal{I}$ be a QRF with associated Kraus operators $\{M_y\}_{y\in\Omega}$ transforming via one-dimensional representation $\chi$. Let $\{g_j\}_{j\in J}\subset\mathcal{G}$ be arbitrary group elements, and $\{\theta_j\}_{j\in J}\subset\mathbb{R}$ arbitrary phases. Then there is a unique QRF $\mathcal{I}'$ with Kraus operators $\{M'_y\}_{y\in\Omega}$, also transforming via $\chi$, with the property
\[
   M'_x=\sum_{j\in J} e^{i\theta_j} \Pi_j M_{g_j x}.
\]
\end{lemma}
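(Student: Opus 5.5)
The plan is to reduce the statement to Lemma~\ref{LemReinterpretQRF}'s construction, with $M_{x'}$ replaced by the operator $N:=\sum_{j\in J} e^{i\theta_j}\Pi_j M_{g_j x}$. Concretely, I will define $M'_y:=\overline{\chi(g)}\,U_g^B N$ for any $g\in\mathcal{G}$ with $y=gx$. Then I will check three things: that this is well defined, that it satisfies Eq.~(\ref{EqTransformM}), and that $N$ is an isometry. Once these hold, the converse lemma after the Corollary (any family satisfying Eq.~(\ref{EqTransformM}) with one isometric member gives a valid covariant instrument density) shows that the $M'_y$ define a QRF. Uniqueness is immediate: Eq.~(\ref{EqTransformM}) and transitivity force $M'_{gx}=\overline{\chi(g)}U_g^B M'_x$, so every $M'_y$ is determined by $M'_x=N$.

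\textbf{Well-definedness.} Suppose $gx=g'x$, so that $k:=g^{-1}g'\in\mathcal{G}_0(x)$. It suffices to show $\overline{\chi(k)}U_k^B N=N$. Because each $\Pi_j$ commutes with $U_k^B$, we get
\[
\overline{\chi(k)}U_k^B N=\sum_j e^{i\theta_j}\Pi_j\,\overline{\chi(k)}U_k^B M_{g_jx}=\sum_j e^{i\theta_j}\Pi_j M_{kg_jx}.
\]
Normality gives $g_j^{-1}kg_j\in\mathcal{G}_0(x)$, hence $kg_jx=g_jx$, and the sum is $N$ again. This is the only place where $\mathcal{G}_0\trianglelefteq\mathcal{G}$ is needed, in the same way as in Lemma~\ref{LemReinterpretQRF}.

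\textbf{Transformation law.} This is the computation of Eq.~(\ref{EqTransformsNicely}) verbatim, since that computation uses only the definition $M'_{hx}=\overline{\chi(h)}U_h^BM'_x$ and the fact that $\chi$ is a homomorphism.

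\textbf{Isometry, the main obstacle.} We need $N^\dagger N=\sum_{j,k}e^{i(\theta_j-\theta_k)}M_{g_kx}^\dagger\Pi_k\Pi_jM_{g_jx}=\sum_j M_{g_jx}^\dagger\Pi_jM_{g_jx}$, using orthogonality of the projectors. I would write $M_{g_jx}=\overline{\chi(g_j)}U_{g_j}^BM_x$. The phases cancel, and $\Pi_j$ commutes with $U_{g_j}^B$, so each term becomes $M_x^\dagger U_{g_j}^{B\dagger}\Pi_jU_{g_j}^BM_x=M_x^\dagger\Pi_jM_x$. Summing gives $M_x^\dagger\big(\sum_j\Pi_j\big)M_x=M_x^\dagger M_x=\mathds{1}_{\ch_{\rm phys}}$ by Lemma~\ref{LemKrausIsometries}, provided $\sum_{j\in J}\Pi_j=\mathds{1}$. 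I read ``projective measurement'' as guaranteeing completeness, or else take $J$ to index all projectors, assigning $g_j=e$ and $\theta_j=0$ to any unused ones. The points to verify carefully are this completeness assumption and the cancellation of the $\chi$-phases.
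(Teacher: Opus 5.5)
Your proposal is correct and follows essentially the same route as the paper. You define $M'_y:=\overline{\chi(g)}\,U_g^B M'_x$ for $y=gx$, use normality of $\mathcal{G}_0$ to show this is well defined (the paper phrases this as $gx=g'x\Rightarrow gz=g'z$ for all $z$), and reuse the computation of Eq.~(\ref{EqTransformsNicely}) for the transformation law.

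The only difference is cosmetic. The paper gets the isometry property by writing $M'_x=VM_x$ with the unitary $V=\sum_j e^{i\theta_j}\Pi_j\overline{\chi(g_j)}\,U_{g_j}^B$, whereas you compute $N^\dagger N$ directly. Both arguments implicitly need the completeness $\sum_j\Pi_j=\mathds{1}$, which you correctly flag.
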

The projectors $\Pi_j$ can be thought of pertaining to some observable on $B$ which is itself $\mathcal{G}$-invariant, i.e.\ commutes with all $U_g^B$. Depending on the value of this observable, the QRF is readjusted, as in the following example: \textit{Consider an auxiliary spin-$1/2$ particle. If its spin is up, then the clock is the Vienna clock; if its spin is down, then the clock is the Lisbon clock.} This defines a new clock that measures time in a world which also contains the auxiliary particle. See Figure~\ref{fig:cc} for an illustration.
\begin{proof}
Define the linear map $V(\{(g_j,\theta_j)\}_{j\in J}):\mathcal{H}_{\rm kin}\to\mathcal{H}_{\rm kin}$ as
\[
   V(\{(g_j,\theta_j)\}_{j\in J}):=\sum_{j\in J} e^{i\theta_j}\Pi_j\overline{\chi(g_j)}\, U_{g_j}^B.
\]
A straightforward calculation shows that this map is unitary, and hence $M'_x=V(\{(g_j,\theta_j)\}_{j\in J}) M_x$ is an isometry. If $y\neq x$, find some $g\in\mathcal{G}$ such that $y=gx$ and set $M'_y:=\overline{\chi(g)}\, U_g^B M'_x$. We first have to show that this definition makes sense: choosing some other $g'$ with $y=g' x$ should yield the same result. As in Lemma~\ref{LemReinterpretQRF}, $gx=g'x$ implies that $gz=g'z$ for all $z\in\Omega$. This implies that we can replace $g'$ by $g$ in the final expression of the following equation:
\begin{eqnarray*}
\overline{\chi(g')} U_{g'}^B M'_x &=& \sum_{j\in J} e^{i\theta_j}\overline{\chi(g')} U_{g'}^B \Pi_j \overline{\chi(g_j)} U_{g_j}^B M_x=\sum_{j\in J}e^{i\theta_j} \overline{\chi(g' g_j)} \Pi_j U_{g' g_j}^B M_x=\sum_{j\in J}e^{i\theta_j} \Pi_j M_{g'g_j x}.
\end{eqnarray*}
Hence the result is the same for $g$ and $g'$, and the $M'_y$ are well-defined operators. To check that they satisfy Eq.~(\ref{EqTransformM}), i.e.\ the condition to generate a QRF, we simply repeat the calculation of Eq.~(\ref{EqTransformsNicely}) of Lemma~\ref{LemReinterpretQRF}.
\end{proof}

\begin{figure}[hbt]
\includegraphics[width=\columnwidth]{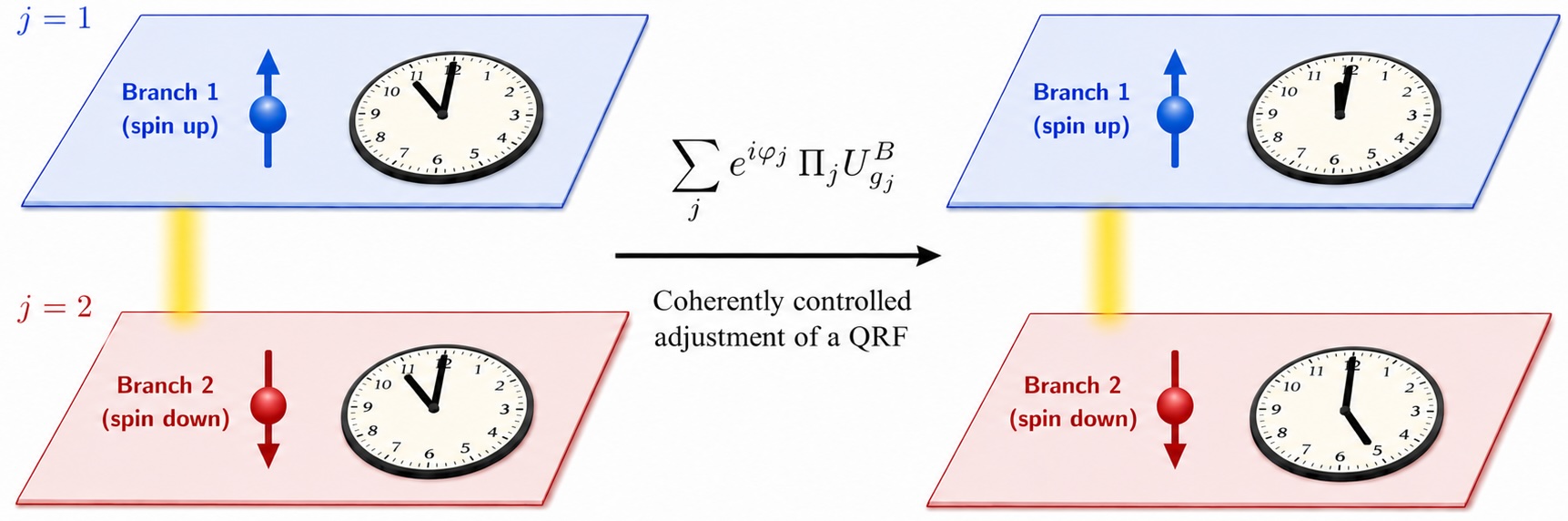}
\caption{As shown in Lemma~\ref{LemCC}, coherently controlled adjustments of a QRF are valid QRF transformations. A $\mathcal{G}$-invariant observable $X\in B$ with eigenprojectors $\Pi_j$ (here, the spin of a spin-$1/2$ particle) defines the branches; depending on the value of $X$, i.e.\ on the branch $j$, the QRF is adjusted differently, described by group elements $g_j$. At the same time, the relative phases between the branches can be altered. The special case that all $g_j$ (and all $\varphi_j$) are equal covers the case of the original gauge transformations $U_g^B$, treated in Lemma~\ref{LemQRFExGauge}. \aidisclosure}
\label{fig:cc}
\end{figure}

\begin{lemma}[Coherently controlled gauge transformations as QRF transformations]
\label{LemCC}
Suppose that $\mathcal{G}_0\trianglelefteq \mathcal{G}$, and let $x\in\Omega$. Let $\{\Pi_j\}_{j\in J}$ be a projective measurement such that $[\Pi_j,U_g^B]=0$ for all $j$ and all $g\in\mathcal{G}$, and let $\{(g_j,\varphi_j)\}_{j\in J}$ be arbitrary group elements and real numbers. Then, for every QRF $\mathcal{I}$, the coherently controlled application of $g_j$ with phases $\varphi_j$ defines a valid QRF transformation from $\mathcal{I}$ to a new ``coherently adjusted'' QRF $\mathcal{I}'$, i.e.
\[
   U^{\mathcal{I},\mathcal{I}'}_{x,x}= \sum_{j\in J} e^{i\varphi_j} \Pi_j U_{g_j}^B \upharpoonright {\ch_{\rm phys}^{\mathcal{I}}(x)}.
\]
\end{lemma}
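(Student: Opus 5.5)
The plan is to reduce the statement to Lemma~\ref{LemCoherentlyControlled}, by choosing its phases $\theta_j$ so that the resulting Kraus operator $M'_x$ is exactly the coherently controlled gauge transformation applied to $M_x$. After that, the claim follows from Definition~\ref{DefQRFTrafo} by the same one-line computation as in Lemma~\ref{LemQRFExGauge}.

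\textbf{Step 1: choosing the phases.} Let $\chi$ be the one-dimensional representation via which the Kraus operators $\{M_y\}_{y\in\Omega}$ of $\mathcal{I}$ transform. By Eq.~(\ref{EqTransformM}) we have $U_{g_j}^B M_x=\chi(g_j)M_{g_jx}$. Hence
\[
   \sum_{j\in J} e^{i\varphi_j}\Pi_j U_{g_j}^B M_x=\sum_{j\in J} e^{i\theta_j}\Pi_j M_{g_jx},\qquad\mbox{where }e^{i\theta_j}:=e^{i\varphi_j}\chi(g_j).
\]
This is well-defined since $|\chi(g_j)|=1$. Applying Lemma~\ref{LemCoherentlyControlled} with these $\theta_j$ and the given $g_j$ and $\Pi_j$ yields a unique QRF $\mathcal{I}'$, also transforming via $\chi$, with
\[
   M'_x=V M_x,\qquad V:=\sum_{j\in J} e^{i\varphi_j}\Pi_j U_{g_j}^B.
\]

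\textbf{Step 2: evaluating the QRF transformation.} By Definition~\ref{DefQRFTrafo},
\[
   U^{\mathcal{I},\mathcal{I}'}_{x,x}=M'_xM_x^\dagger\upharpoonright{\mathcal{H}^{\mathcal{I}}_{\rm phys}(x)}=V\,\Pi^{\mathcal{I}}_{\rm phys}(x)\upharpoonright{\mathcal{H}^{\mathcal{I}}_{\rm phys}(x)}=V\upharpoonright{\mathcal{H}^{\mathcal{I}}_{\rm phys}(x)}.
\]
The last equality holds because the projector $\Pi^{\mathcal{I}}_{\rm phys}(x)=M_xM_x^\dagger$ acts as the identity on its own image, as used in Lemma~\ref{LemQRFExGauge}. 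This is precisely the claimed formula.

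It is also worth checking that $V$ is unitary, since the QRF transformation is supposed to be a unitary onto $\mathcal{H}^{\mathcal{I}'}_{\rm phys}(x)=V\mathcal{H}^{\mathcal{I}}_{\rm phys}(x)$. This uses $[\Pi_j,U_g^B]=0$ and $\Pi_j\Pi_k=\delta_{jk}\Pi_j$:
\[
   V^\dagger V=\sum_{j,k}e^{i(\varphi_k-\varphi_j)}U_{g_j}^{B\dagger}\Pi_j\Pi_kU_{g_k}^B=\sum_j\Pi_j=\mathds{1}.
\]
The normality assumption $\mathcal{G}_0\trianglelefteq\mathcal{G}$ enters only through Lemma~\ref{LemCoherentlyControlled}, which needs it for the $M'_y$ with $y\neq x$ to be well-defined.

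\textbf{Main obstacle.} There is no real difficulty here; the only point needing care is the bookkeeping of phases. One has to absorb the character $\chi(g_j)$ into $\theta_j$ with the correct sign, since $U_g^BM_x=\chi(g)M_{gx}$, not $\overline{\chi(g)}M_{gx}$. Only then does the $M'_x$ produced by Lemma~\ref{LemCoherentlyControlled} coincide with $VM_x$, with no stray phase factors in the final expression.
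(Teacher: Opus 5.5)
Your proposal is correct and matches the paper's proof: the paper also sets $e^{i\theta_j}=e^{i\varphi_j}\chi(g_j)$, applies Lemma~\ref{LemCoherentlyControlled} to get $M'_x=VM_x$, and then uses that $M_xM_x^\dagger=\Pi^{\mathcal{I}}_{\rm phys}(x)$ acts as the identity on $\mathcal{H}^{\mathcal{I}}_{\rm phys}(x)$. Your explicit check that $V$ is unitary spells out the ``straightforward calculation'' the paper leaves implicit in its proof of Lemma~\ref{LemCoherentlyControlled}.
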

Colloquially, this is a ``branchwise adjustment'' of the QRF: in branch $j$, we adjust by $g_j$, and doing so, we can also adjust the relative phases arbitrarily.
\begin{proof}
Define the phases $\theta_j$ such that $e^{i\theta_j}=e^{i\varphi_j}\chi(g_j)$, and consider the QRF $\mathcal{I}'$ of Lemma~\ref{LemCoherentlyControlled}. According to Definition~\ref{DefQRFTrafo}, the associated QRF transformation is
\[
   U^{\mathcal{I},\mathcal{I}'}_{x,x}=M'_x M_x^\dagger\upharpoonright {\ch_{\rm phys}^{\mathcal{I}}(x)}=V(\{g_j,\theta_j\}_{j\in J})\underbrace{M_x M_x^\dagger}_{\Pi_{\rm phys}^{\mathcal{I}}(x)} \upharpoonright {\ch_{\rm phys}^{\mathcal{I}}(x)}=\sum_{j\in J} e^{i\varphi_j} \Pi_j U_{g_j}^B \upharpoonright {\ch_{\rm phys}^{\mathcal{I}}}.
\]
This completes the proof.
\end{proof}
QRF transformations of the form of Lemma~\ref{LemCC}, i.e.\ ``coherent superpositions of symmetry transformations'', have previously been studied e.g.\ in~\cite{Giacomini,HametteGalley} and follow-up works, where it was suggested that covariance under some of those maps (``quantum covariance principles'') may allow us to obtain predictions in some regimes of quantum gravity~\cite{QRFIndefiniteMetric,GiacominiBrukner2022,GiacominiBrukner2020,hardy_implementation_2020}. In~\cite{mekonnen_invariance_2026}, some of us have shown that invariance under a group of QRFs of this form allows us to understand the absence of parastatistics in nature, while pointing out that the correct choice of phases $\varphi_j$ (typically set to zero in earlier works) is crucial for doing so.

To see an example of how further elements of the earlier PN framework extend to our setup, let us generalize the notion of relational Dirac observables as defined in~\cite{Hamette2021} to our framework. The observables will be the results of applying a \textit{relativization map} of an observable $O$ relative to some QRF $\mathcal{I}$ ``pointing in direction $x$''. However, we will begin by considering more general observables $O\in B=\mathcal{L}(\mathcal{H}_{\rm kin})$, rather than on $A=\mathcal{L}(\mathcal{H}_\mathrm{phys})$ alone; and we will consider instruments $\mathcal{I}:\mathcal{B}(\Omega)\to{\rm CP}(\mathcal{H}_{\rm kin},\mathcal{H}_{\rm kin})$ that describe measurements on general input states $\rho\in B$ (rather than $A$).
While we have defined instruments and their instrument densities $I_x$ to act on states by convention, the associated action on observables is the Hilbert-Schmidt adjoint CP map $I_x^\dagger$, which is employed in the following definition.

\begin{definition}[Relationalization of observables]
\label{DefRel}
Let $\{I_x\}_{x\in\Omega}\subset B$ be an instrument density that generates a covariant instrument $\mathcal{I}:\mathcal{B}(\Omega)\to{\rm CP}(\mathcal{H}_{\rm kin},\mathcal{H}_{\rm kin})$.
We define the relationalization map $\text{\euro}_x^{\mathcal{I}}:B\to B$ as
\[
   \text{\euro}_x^{\mathcal{I}}(O):=\int_\mathcal{G} d\mu(g)\, U_g^B I_x^\dagger(O) {U_g^B}^\dagger.
\]
\end{definition}

As shown at the end of this section, this generalizes the construction of the $\$$ map~\cite{Bartlett2007} and the $\yen$ map~\cite{Loveridge2018,Loveridge2017,Carette2025,glowacki_quantum_2024,fewster_quantum_2024,glowacki_w-algebraic_2026} (a different generalization of the $\yen$ map to instruments in the operational framework of~\cite{Carette2025} has been proposed in~\cite{JorqueraRiera2026,JorqueraRiera2026b}) and has the following properties:
\begin{lemma}
\label{LemRel}
The relationalization map is unital, i.e.\ $\text{\euro}_x^{\mathcal{I}}(\mathds{1})=\mathds{1}$, and completely positive, i.e.\ a valid quantum channel acting on observables. Furthermore, the relational observable $\text{\euro}_x^{\mathcal{I}}(O)$ is a ``Dirac observable'', i.e.\ $[\text{\euro}_x^{\mathcal{I}}(O),U_g^B]=0$ for all $g\in\mathcal{G}$ and $O\in B$, and
\begin{equation}
   \text{\euro}_{gx}^{\mathcal{I}}=\text{\euro}_x^{\mathcal{I}}\circ\mathcal{U}_{g^{-1}}^B.
   \label{EqContra}
\end{equation}
\end{lemma}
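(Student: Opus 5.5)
The plan is to verify each property directly from Definition~\ref{DefRel}, using the covariance of the instrument density together with the invariance of the Haar measure. I would first record how the density itself transforms. From Eq.~(\ref{DefFromSeed}), read with the intended inverse on the input side so that it matches the covariance of $\mathcal{I}$, we have
\[
   I_{gx}=\mathcal{U}_g^B\circ I_x\circ\mathcal{U}_{g^{-1}}^B
   \qquad\mbox{for all } g\in\mathcal{G},\ x\in\Omega.
\]
Taking Hilbert--Schmidt adjoints, and using $(\mathcal{U}_g^B)^\dagger=\mathcal{U}_{g^{-1}}^B$, this gives
\[
   I_{gx}^\dagger=\mathcal{U}_g^B\circ I_x^\dagger\circ\mathcal{U}_{g^{-1}}^B .
\]
The relationalization map can then be written compactly as $\text{\euro}_x^{\mathcal{I}}=\mathcal{T}\circ I_x^\dagger$, where $\mathcal{T}:=\int_\mathcal{G}\mathcal{U}_g^B\,d\mu(g)$ is the group twirl.

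\textbf{Complete positivity and the Dirac property.} Complete positivity follows because $I_x^\dagger$ is CP (it is the adjoint of a CP map), each $\mathcal{U}_g^B$ is CP, and a Haar average of CP maps is CP; one sees this by checking positivity of $\mathrm{id}_n\otimes(\cdot)$ inside the integral. For the Dirac property, conjugate by $U_h^B$ and substitute $g\mapsto h^{-1}g$. Left invariance of $\mu$ then gives $\mathcal{U}_h^B\circ\mathcal{T}=\mathcal{T}$, hence $[\text{\euro}_x^{\mathcal{I}}(O),U_h^B]=0$.

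\textbf{Contravariance, Eq.~(\ref{EqContra}).} Using the adjoint transformation law above,
\[
   \text{\euro}_{gx}^{\mathcal{I}}
   =\mathcal{T}\circ\mathcal{U}_g^B\circ I_x^\dagger\circ\mathcal{U}_{g^{-1}}^B .
\]
Right invariance of $\mu$ gives $\mathcal{T}\circ\mathcal{U}_g^B=\mathcal{T}$; for compact groups the Haar measure is bi-invariant, so this is available. It follows that $\text{\euro}_{gx}^{\mathcal{I}}=\text{\euro}_x^{\mathcal{I}}\circ\mathcal{U}_{g^{-1}}^B$.

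\textbf{Unitality.} This is the step I expect to need the most care, because $I_x^\dagger(\mathds{1})=E_x$ is not the identity in general. Only the integrated POVM is normalized, $\int_\Omega E_x\,d\nu(x)=\mathds{1}$. First, the covariance of $E$ (Eq.~(\ref{covofpovmpr}) at density level, or Lemma~\ref{LemPOVM}) gives $E_{gx}=\mathcal{U}_g^B(E_x)$. Then
\[
   \text{\euro}_x^{\mathcal{I}}(\mathds{1})
   =\int_\mathcal{G}\mathcal{U}_g^B(E_x)\,d\mu(g)
   =\int_\mathcal{G}E_{gx}\,d\mu(g).
\]
Writing $x=hx_0$ and substituting $g\mapsto gh^{-1}$, this becomes $\int_\mathcal{G}E_{gx_0}\,d\mu(g)$. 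By the pushforward identity $\int_\mathcal{G}f(gx_0)\,d\mu(g)=\int_\Omega f\,d\nu$, this equals $\int_\Omega E_y\,d\nu(y)=E(\Omega)=\mathds{1}$, using the normalization of the instrument.

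The only subtle points are the well-definedness of densities, which is guaranteed by the cited theorem of~\cite{chiribella2009} (in particular, $E_x$ does not depend on the choice of $g$ with $x=gx_0$), and ensuring that the bi-invariance of the Haar measure is invoked correctly.
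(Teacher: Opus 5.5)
Your proposal is correct and follows essentially the same route as the paper: complete positivity from the form of the map, the Dirac property from left invariance of $\mu$, Eq.~(\ref{EqContra}) from the transformation law of $I_x^\dagger$ together with right invariance, and unitality by rewriting $\int_{\mathcal{G}} I_{gx}^\dagger(\mathds{1})\,d\mu(g)$ as $\int_\Omega I_y^\dagger(\mathds{1})\,d\nu(y)=\mathcal{I}_\Omega^\dagger(\mathds{1})=\mathds{1}$. Three of your steps only spell out what the paper uses tacitly: packaging the map via the twirl $\mathcal{T}$, making the reduction from $x$ to $x_0$ explicit, and reading Eq.~(\ref{DefFromSeed}) with $\mathcal{U}_{g^{-1}}$ on the input side (which is what the paper's own computation of Eq.~(\ref{EqContra}) relies on).
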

\begin{proof}
Complete positivity is immediate from the form of the map; unitality follows from
\begin{eqnarray*}
   \text{\euro}_x^{\mathcal{I}}(\mathds{1})&=& \int_{\mathcal{G}}d\mu(g)\, U_g^B I_x^\dagger(\mathds{1}) U_g^B=\int_{\mathcal{G}}d\mu(g)\, I_{gx}^\dagger(\mathds{1})=\int_\Omega I_y^\dagger(\mathds{1}) d\nu(y)=\mathcal{I}_\Omega^\dagger(\mathds{1})=\mathds{1}.
\end{eqnarray*}
Furthermore, we have
\begin{eqnarray*}
U_h^B \text{\euro}_x^{\mathcal{I}}(O) {U_h^B}^\dagger &=&\int_\mathcal{G} d\mu(g)\, U_{hg}^B I_x^\dagger(O){U_{hg}^B}^\dagger=\text{\euro}_x^{\mathcal{I}}(O)
\end{eqnarray*}
by the invariance of the Haar measure under $g\mapsto hg$. Similarly,
\[
\text{\euro}_{hx}^{\mathcal{I}}(O)=\int_\mathcal{G} d\mu(g)\, U_g^B I_{hx}^\dagger(O) {U_g^B}^\dagger=\int_\mathcal{G} d\mu(g)\, U_{gh}^B I_x^\dagger \left({U_h^B}^\dagger O U_h^B\right){U_{gh}^B}^\dagger=\text{\euro}_x^{\mathcal{I}}\left({U_h^B}^\dagger O U_h^B\right)
\]
by the invariance of the Haar measure under $g\mapsto gh$.
\end{proof}
Eq.~(\ref{EqContra}) says that the relationalization of an observable, assuming that the QRF points in direction $gx$, is the same as the relationalization of the $(g^{-1})$-transformed observable, assuming that the QRF points in direction $x$.

This notion of relational Dirac observables is not specific to the PN framework -- the above simply demonstrates that the move from subsystems to covariant instruments as reference frames can generally be applied to the relationalization map. The physical Hilbert space or other PN-specific structure does not appear in the above, and hence, the result also applies to the quantum information approach to QRFs, for example. The next result, however, will be PN-specific.

\begin{lemma}
\label{LemOurTheorem1}
Let $\{I_x\}_{x\in\Omega}\subset B$ be an instrument density that generates a covariant instrument $\mathcal{I}:\mathcal{B}(\Omega)\to{\rm CP}(\mathcal{H}_{\rm kin},\mathcal{H}_{\rm kin})$. Moreover, suppose that $I_x$ has Kraus rank $1$, i.e.\ $I_x=M_x\bullet M_x^\dagger$ for matrices $M_x\in B$; by convention, we can choose phases for these matrices such that $M_{gx}=\overline{\chi(g)}U_g^B M_x {U_g^B}^\dagger$, where $\chi$ is some one-dimensional representation of $\mathcal{G}$. Then the operators $M'_x:= M_x\upharpoonright\mathcal{H}_{\rm phys}$ define a QRF in the sense of Definition~\ref{DefQRF} (which we will call $\mathcal{I}'$), and it has representation spaces $\mathcal{H}_{\rm phys}^{\mathcal{I}'}(x)$ in the sense of Definition~\ref{DefRepSpaces}. We have
\[
   \text{\euro}_x^{\mathcal{I}}(O)\in\mathcal{L}(\ch_{\rm phys})\mbox{ and }
   I'_x(\text{\euro}_x^{\mathcal{I}}(O))=O \mbox{ for all }O\in \mathcal{L}(\mathcal{H}_{\rm phys}^{\mathcal{I}'}(x)).
\]
In particular, the relationalization map is physically invertible on this subalgebra, i.e.\ $\text{\euro}_x^{\mathcal{I}}\upharpoonright \mathcal{L}(\mathcal{H}_{\rm phys}^{\mathcal{I}'}(x))={I'_x}^\dagger={I'_x}^{-1}$.
\end{lemma}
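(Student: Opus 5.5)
The plan is to reduce everything to a handful of identities about the restricted operators $M'_x=M_xP$, where $P$ denotes the orthogonal projector onto $\ch_{\rm phys}$. There are three parts: (a) $\{M'_x\}$ is a QRF, (b) $I'_x$ inverts the compressed relationalization, and (c) $\text{\euro}_x^{\mathcal{I}}(O)$ is supported on $\ch_{\rm phys}$.

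\textbf{Step 1: $\{M'_x\}$ is a QRF.} Restrict the assumed transformation law $M_{gx}=\overline{\chi(g)}\,U_g^B M_x{U_g^B}^\dagger$ to $\ch_{\rm phys}$. There ${U_g^B}^\dagger$ acts as the identity, so
\[
M'_{gx}=\overline{\chi(g)}\,U_g^B M'_x,
\]
which is exactly Eq.~(\ref{EqTransformM}). For normalization, compress $\int_\Omega M_x^\dagger M_x\,d\nu(x)=\mathcal{I}_\Omega^\dagger(\mathds{1})=\mathds{1}$ with $P$. This gives $\int_\Omega {M'_x}^\dagger M'_x\,d\nu(x)=\mathds{1}_{\ch_{\rm phys}}$. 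By the remark after Definition~\ref{DefQRF}, this yields a QRF $\mathcal{I}'$. Lemma~\ref{LemKrausIsometries} then tells us that every $M'_x$ is an isometry, so $\Pi:=\Pi^{\mathcal{I}'}_{\rm phys}(x)=M'_x{M'_x}^\dagger$ is the projector onto $\ch^{\mathcal{I}'}_{\rm phys}(x)$.

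\textbf{Step 2: the inversion identity.} Take $O\in\mathcal{L}(\ch^{\mathcal{I}'}_{\rm phys}(x))$, so $O=\Pi O\Pi$. Since $P$ commutes with every $U_g^B$ and $PU_g^B=P$, the Haar average disappears under compression:
\[
P\,\text{\euro}_x^{\mathcal{I}}(O)\,P=\int d\mu(g)\,P\,M_x^\dagger O M_x\,P=P M_x^\dagger O M_x P={M'_x}^\dagger O M'_x .
\]
Hence $I'_x(\text{\euro}_x^{\mathcal{I}}(O))=M'_x\,P\,\text{\euro}_x^{\mathcal{I}}(O)\,P\,{M'_x}^\dagger=\Pi O\Pi=O$. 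Conversely, ${I'_x}^\dagger(O)={M'_x}^\dagger OM'_x$ and ${I'_x}^\dagger\circ I'_x=\mathrm{id}_A$ by the isometry property. So once Step 3 gives $\text{\euro}_x^{\mathcal{I}}(O)=P\,\text{\euro}_x^{\mathcal{I}}(O)\,P$, we get $\text{\euro}_x^{\mathcal{I}}\upharpoonright\mathcal{L}(\ch^{\mathcal{I}'}_{\rm phys}(x))={I'_x}^\dagger={I'_x}^{-1}$.

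\textbf{Step 3 (main obstacle): $\text{\euro}_x^{\mathcal{I}}(O)\in\mathcal{L}(\ch_{\rm phys})$.} I expect this to be the hard step. Write $E_y:=M_y^\dagger M_y\ge 0$ and $\Pi=M_xPM_x^\dagger$. Then
\[
M_x^\dagger O M_x=(E_xP)\,M_x^\dagger OM_x\,(PE_x).
\]
Using $U_g^BM_x^\dagger{U_g^B}^\dagger=\chi(g)M_{gx}^\dagger$ together with $U_g^BP=P$, the twirled integrand becomes
\[
U_g^BE_xP\,(\cdots)\,PE_x{U_g^B}^\dagger=E_{gx}P\,(\cdots)\,PE_{gx}.
\]
It would therefore suffice to show $E_yP=P$ for all $y$. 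From Step 1 we already have $PE_yP=P$ for every $y$, so what remains is the off-diagonal block $(\mathds{1}-P)E_yP=0$.

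My first attempt would be a positivity argument. For unit $\psi\in\ch_{\rm phys}$ we have $\langle\psi|E_y|\psi\rangle=1$ pointwise and $\int_\Omega E_y\,d\nu(y)=\mathds{1}$. One would like to conclude $\|E_y\psi\|=1$, hence that $\psi$ is an eigenvector of $E_y$, for instance by comparing $\int\|E_y\psi\|^2\,d\nu(y)$ with $\langle\psi|\int E_y^2\,d\nu(y)|\psi\rangle$. This is where a pointwise bound such as $E_y\le\mathds{1}$ is needed. If that bound cannot be extracted from covariance and Kraus rank one alone, I would settle for the conclusion of Step 2 in its compressed form, $I'_x(P\,\text{\euro}_x^{\mathcal{I}}(O)\,P)=O$, and interpret "$\in\mathcal{L}(\ch_{\rm phys})$" as acting on $\ch_{\rm phys}$ after restriction.
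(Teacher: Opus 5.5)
Your Steps 1 and 2 are correct, and you have found the right crux: the off-diagonal block $(\mathds{1}-\Pi_{\rm phys})E_y\Pi_{\rm phys}$. But Step 3 cannot be completed, because it is false in general. Neither the claim $\text{\euro}_x^{\mathcal{I}}(O)\in\mathcal{L}(\ch_{\rm phys})$ (read as an operator supported on $\ch_{\rm phys}$) nor the operator identity $\text{\euro}_x^{\mathcal{I}}\upharpoonright\mathcal{L}(\ch^{\mathcal{I}'}_{\rm phys}(x))={I'_x}^\dagger$ holds. Take the paper's own ideal-frame example with trivial $S$: $\cg$ finite with $|\cg|\ge 2$, $\ch_{\rm kin}=\mathbb{C}^{|\cg|}$ carrying the left-regular representation, $M_g=\sqrt{|\cg|}\,|g\rangle\langle g|$, and $\chi$ trivial. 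Then $\ch_{\rm phys}=\mathbb{C}|\Sigma\rangle$ with $|\Sigma\rangle=|\cg|^{-1/2}\sum_g|g\rangle$, and $M'_e=|e\rangle\langle\Sigma|$. The operator $O=|e\rangle\langle e|$ lies in $\mathcal{L}(\ch^{\mathcal{I}'}_{\rm phys}(e))$, yet
\[
\text{\euro}_e^{\mathcal{I}}(O)=\frac{1}{|\cg|}\sum_{g\in\cg}U_g^B\,\bigl(|\cg|\,|e\rangle\langle e|\bigr)\,{U_g^B}^\dagger=\mathds{1}\neq|\Sigma\rangle\langle\Sigma|={I'_e}^\dagger(O).
\]
With nontrivial $S$, this is the computation after Lemma~\ref{LemEuroYen} with $O_S=\mathds{1}_S$: $\text{\euro}_e^{\mathcal{I}}(|e\rangle\langle e|_R\otimes\mathds{1}_S)=\mathds{1}_{RS}$, even though $|e\rangle\langle e|_R\otimes\mathds{1}_S$ is the unit of $\mathcal{L}(\ch^{\mathcal{I}'}_{\rm phys}(e))$. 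Your positivity route also fails for a structural reason. All $E_y$ are unitarily conjugate, so taking the trace of $\int_\Omega E_y\,d\nu(y)=\mathds{1}$ gives ${\rm Tr}\,E_y=\dim\ch_{\rm kin}$. Hence $E_y\le\mathds{1}$ would force $E_y=\mathds{1}$, i.e.\ an uninformative POVM with unitary $M_y$, which is the only case where your argument goes through.

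The paper's proof does not close this gap either. Inside the twirl it writes ${M'_x}^\dagger OM'_x=\Pi_{\rm phys}M_x^\dagger OM_x\Pi_{\rm phys}$ in place of $I_x^\dagger(O)=M_x^\dagger OM_x$. That substitution silently discards exactly the block you isolated, so what the paper actually establishes is your compressed identity from Step 2. Your fallback is therefore the correct content of the lemma, and you can make it precise without any Step 3. By Lemma~\ref{LemRel}, $\text{\euro}_x^{\mathcal{I}}(O)$ commutes with every $U_g^B$, hence with $\Pi_{\rm phys}=\int_{\cg}U_g^B\,d\mu(g)$. It is therefore block diagonal and leaves $\ch_{\rm phys}$ invariant, so restriction equals compression, and Step 2 gives
\[
\text{\euro}_x^{\mathcal{I}}(O)\upharpoonright\ch_{\rm phys}={I'_x}^\dagger(O),\qquad I'_x\bigl(\text{\euro}_x^{\mathcal{I}}(O)\upharpoonright\ch_{\rm phys}\bigr)=O .
\]
Invertibility thus holds weakly, on the physical Hilbert space. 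The block $(\mathds{1}-\Pi_{\rm phys})\,\text{\euro}_x^{\mathcal{I}}(O)\,(\mathds{1}-\Pi_{\rm phys})$ is in general nonzero. With this restricted reading of the statement, your Steps 1 and 2 prove the lemma completely.
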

That is, on this representation subspace, the map $\text{\euro}_x^ {\mathcal{I}}$ ``jumps from a perspective back to the neutral description'', and this is a physically invertible map. This invertibility even in cases where the QRF is not ideal is a speciality of the PN framework as compared to other approaches to QRFs. This generalizes Theorem 1 of~\cite{Hamette2021} to our framework.
\begin{proof}
We use the notation $M'_x:=M_x\upharpoonright\ch_{\rm phys}$, then $M'_x:\ch_{\rm phys}\to\ch_{\rm kin}$, and, by simple linear algebra, ${M'_x}^\dagger=\Pi'_{\rm phys}M_x^\dagger$, where $\Pi'_{\rm phys}$ is the same as $\Pi_{\rm phys}$, but defined as a map from $\ch_{\rm kin}$ to $\ch_{\rm phys}$. We have
\[
   M'_{gx}=M_{gx}\upharpoonright \ch_{\rm phys}=\overline{\chi(g)} U_g^B M_x U_g^B \upharpoonright \ch_{\rm phys}=\overline{\chi(g)} U_g^B M_x \upharpoonright \ch_{\rm phys} =\overline{\chi(g)} U_g^B M'_x,
\]
and so the $\{M'_x\}_{x\in\Omega}$ transform as required from a QRF by Definition~\ref{DefQRF}. Defining $\mathcal{I}_X':=\int_X I'_x\,d\nu(x)$ via the density $I'_x:= M'_x\bullet {M'_x}^\dagger$, gives us $\mathcal{I}'_X(\rho)=\mathcal{I}_X(\rho)$ for all $\rho\in A=\mathcal{L}(\ch_{\rm phys})$ and all $X\in\mathcal{B}(\Omega)$. In particular, ${\rm Tr}(\mathcal{I}'_\Omega(\rho))={\rm Tr}(\rho)$, and so $\mathcal{I}'$ is a valid covariant instrument. Since its Kraus operators transform as required, $\mathcal{I}'$ is a QRF according to Definition~\ref{DefQRF}; in particular, due to Lemma~\ref{LemKrausIsometries}, every $M'_x:\mathcal{H}_{\rm phys}\to\mathcal{H}_{\rm kin}$ is an isometry.

Now let $O\in\mathcal{L}(\ch_{\rm phys}^{\mathcal{I}'}(x))$. By definition, there is some $O'\in\mathcal{L}(\ch_{\rm phys})$ such that $O=M'_x O' {M'_x}^\dagger$. Hence
\[
   \text{\euro}_x^{\mathcal{I}}(O)=\int_{\mathcal{G}}d\mu(g)\, U_g^B {M'_x}^\dagger O M'_x U_g^B=\int_{\mathcal{G}}d\mu(g)\, U_g^B ({M'_x}^\dagger M'_x) O' ({M'_x}^\dagger M'_x) U_g^B=\int_{\mathcal{G}}d\mu(g)\, U_g^B O' {U_g^B}^\dagger=O'.
\]
Moreover, $I'_x(O')=M'_x O' {M'_x}^\dagger=O$, which completes the proof.
\end{proof}
It is well-known that relationalization maps cannot be physically invertible if the QRF is not ideal, even if they are only applied to observables on subsystems (as in Lemma~\ref{LemEuroYen} below). The $\text{\euro}$ map, however, \textit{is} invertible on the image of the physical Hilbert space, even for non-ideal QRFs. This is a distinctive feature of the PN approach.

As a special case, we consider relational Dirac observables between two subsystems, where one subsystem serves as the reference system. We associate commuting subalgebras $S,R\subset\cl(\ch_\kin)$ with the subsystems, with $R$ corresponding to the reference system. We assume that for every $M_x\in R$, its adjoint also satisfies $M_x^\dagger\in R$ (i.e.\ $R$ is a $^*$-subalgebra), and that $O\in S$. It then follows immediately that the associated effect density $E_x=M_x^\dagger M_x$ also belongs to $R$. Intuitively, the measurement of this POVM on the subsystem $R$ gives us an immediate notion of conditional state on $S$, regardless of how the measurement on $R$ is implemented. Formally, it allows us to rewrite the definition of $\text{\euro}_x^{\mathcal{I}}$ in terms of the effect density $E_x$:
\begin{align}\label{eq:rel_S_effect}
    \text{\euro}_x^{\mathcal{I}}(O)=&\int_\mathcal{G} d\mu(g)\, U_g^B I_x^\dagger(O) {U_g^B}^\dagger=\int_\mathcal{G} d\mu(g)\, U_g^B M_x^\dagger O M_x {U_g^B}^\dagger=\int_\mathcal{G} d\mu(g)\, U_g^B E_x O {U_g^B}^\dagger \nonumber\\
    =& \int_\mathcal{G} d\mu(g)\, U_g^B E_x {U^B_g}^\dagger U^B_g O {U_g^B}^\dagger=\int_\mathcal{G} d\mu(g)\,  E_{gx}  U^B_g O {U_g^B}^\dagger.
\end{align}

Let us return to the standard setting of QRFs where $\ch_\kin \cong \ch_R \otimes \ch_S$, $\Omega=\cg$ with $x_0=e$, and $U_g^B \cong U_R(g) \otimes U_S(g)$, and see that the $\text{\euro}$ map recovers the $\yen$ map. 

Given a covariant POVM $E_R$ acting on $\ch_R$ and an observable $O_\mathcal{S} \in \cl(\ch_S)$, the $\yen$ map is defined as~\cite{Loveridge2018}
\begin{align}
    \yen^R(O_\mathcal{S}) := \int_\cg U_S(g) O_S U_S(g)^\dagger \otimes E_R(dg).
\end{align}
\begin{lemma}[The \euro\enspace map generalizes the \yen\enspace map]\label{LemEuroYen}
In the standard setting just defined, set
$d\mu(g)\, E_g=E_R(dg)\otimes\mathds{1}_S$,
$O = \mathds{1}_R \otimes O_S$ and $x =e$ (the identity element of the group). Then ${\text{\euro}^\ci_e(O)} = \yen^R(O_S)$.
\end{lemma}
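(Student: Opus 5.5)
The plan is to start from the effect-density form of the $\text{\euro}$ map derived in Eq.~(\ref{eq:rel_S_effect}), which applies here because the reference subsystem $R$ and the system $S$ correspond to commuting $^*$-subalgebras. We have $O=\mathds{1}_R\otimes O_S\in S$, and the Kraus operators, hence also the effect density $E_x$, are taken to lie in $R$, i.e.\ of the form $(\cdot)\otimes\mathds{1}_S$. Setting $x=e$ and $\Omega=\cg$ with $x_0=e$, so that $gx=ge=g$, this formula gives
\[
   \text{\euro}^{\mathcal{I}}_e(O)=\int_\cg d\mu(g)\, E_{g}\, U_g^B(\mathds{1}_R\otimes O_S){U_g^B}^\dagger .
\]

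Next, I would insert the product form $U_g^B=U_R(g)\otimes U_S(g)$. Conjugating $\mathds{1}_R\otimes O_S$ then yields $\mathds{1}_R\otimes U_S(g)O_SU_S(g)^\dagger$, because $U_R(g)U_R(g)^\dagger=\mathds{1}_R$. Using the identification $d\mu(g)\,E_g=E_R(dg)\otimes\mathds{1}_S$ stated in the lemma, the integrand becomes
\[
   (E_R(dg)\otimes\mathds{1}_S)(\mathds{1}_R\otimes U_S(g)O_SU_S(g)^\dagger)=E_R(dg)\otimes U_S(g)O_SU_S(g)^\dagger .
\]
After reordering the tensor factors to match the $\yen$ convention, this is exactly the integrand defining $\yen^R(O_S)$, so integrating over $\cg$ gives the claim.

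The only point needing care is making sense of the measure-valued identification $d\mu(g)E_g=E_R(dg)\otimes\mathds{1}_S$. This is where Lemma~\ref{LemPOVM} enters: with $\Omega=\cg$ and $\nu=\mu$, a covariant POVM has a density, $E_R(X)=\int_X E^R_g\,d\mu(g)$. So the identity is to be read as $E_g=E^R_g\otimes\mathds{1}_S$, and the $\yen$ integral as the integral against this density. With that reading the equality holds for every Borel set, and therefore for the full integral over $\cg$.

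I expect no substantive obstacle; the argument amounts to unpacking definitions. The step most likely to require a remark is the consistency of conventions: covariance $E_{gx}=U_gE_xU_g^\dagger$ was already used in Eq.~(\ref{eq:rel_S_effect}), and the tensor ordering $R\otimes S$ must be matched with the ordering in the definition of $\yen^R$.
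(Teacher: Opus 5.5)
Your proposal is correct and takes the paper's approach: the paper's proof substitutes $O=\mathds{1}_R\otimes O_S$, $U_g^B=U_R(g)\otimes U_S(g)$, $x=e$ and $d\mu(g)\,E_g=E_R(dg)\otimes\mathds{1}_S$ into Eq.~(\ref{eq:rel_S_effect}), which is what you do. Your added remarks spell out details the paper leaves implicit: the requirement that the Kraus operators lie in $R$, the density reading via Lemma~\ref{LemPOVM}, and the matching of the tensor ordering.
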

\begin{proof}
Simply substitute the expressions into Eq.~(\ref{eq:rel_S_effect}).
\end{proof}
Consider the simple special case where $\cg$ is a finite group and $\ch_R \cong \Cl^{|\cg|}$ carries the left regular representation, i.e.\ the QRF on $R$ is ideal in this sense. Let $M_g := \sqrt{|\mathcal{G}|}\ketbra{g}{g}_R \otimes \mathds{1}_S$, and the observable of interest to be a system observable, $O = \mathds{1}_R \otimes O_S$. Then
    \begin{align}
        \text{\euro}^\ci_e(O) =& \frac 1 {|\mathcal{G}|}\sum_{g \in \cg}  U_R(g) \otimes U_S(g)\sqrt{|\mathcal{G}|} (\ketbra{e}{e}_R \otimes \mathds{1}_S )( \mathds{1}_R \otimes O_S ) \sqrt{|\mathcal{G}|}(\ketbra{e}{e}_R \otimes \mathds{1}_S)(U_R(g)^\dagger \otimes U_S(g)^\dagger) \\
        =&  \sum_{g \in \cg} \ketbra{g}{g} \otimes U_S(g) O_S U_S(g)^\dagger = \yen^R(O_S).
    \end{align}
The $\yen$ map generalizes the $\cg$-twirl $O\mapsto \int_{\mathcal{G}} d\mu(g)\, U_g^B O {U_g^B}^\dagger$, and so does the $\text{\euro}$ map. This can be seen by choosing a trivial instrument $x = e$,  $M_e = \mathds{1}$ in the definition of $\text{\euro}_x^{\mathcal{I}}(O)$, and allowing arbitrary observables $O$ on $\ch_{\rm kin}$.

\section{Relation to the previous perspective-neutral framework}
\label{SecRelPrevious}

The perspective-neutral approach for general symmetry groups~\cite{Hamette2021} starts from a kinematical Hilbert space
\begin{align*}
\mathcal{H}_{\mathrm{kin}}
=
\mathcal{H}_R \otimes \mathcal{H}_S
\end{align*}
carrying a tensor product representation of a unimodular Lie group $\mathcal{G}$, $U_{RS}(g)=U_R(g)\otimes U_S(g)$. The PN framework of~\cite{Hamette2021} hence assumes that the symmetries act independently on associated subsystems, which is the main assumption that is dropped in our approach. We interpret $R$ as the reference system and $S$ as the system described relative to it. On $\mathcal{H}_R$, we have a distinguished coherent state system $\{|\phi(g)\rangle\}_{g\in\mathcal{G}}$, where all $|\phi(g)\rangle$ are normalized state vectors, such that
\begin{align*}
|\phi(g)\rangle_R
=
U_R(g)|\phi(e)\rangle_R,
\end{align*}
yielding a resolution of the identity
\begin{align}
\int_{\mathcal{G}} d'g\,|\phi(g)\rangle\langle\phi(g)|_R
=
\mathds{1}_R.
\label{EqResolutionIdentity}
\end{align}
In this notation, the normalization of the Haar measure is deliberately chosen such that Eq.~(\ref{EqResolutionIdentity}) is true; consequently, we may have $\int_{\mathcal{G}}d'g\neq 1$, and the notation indicates that $d'g$ of~\cite{Hamette2021} and our normalized Haar measure $dg$ will typically differ by a constant factor. In general, the states $|\phi(g)\rangle_R$ are not orthogonal, so the frame is not in general ideal. Here and in all of the examples to follow, the topological space $\Omega$ representing classical outcomes is given by the group $\mathcal{G}$ itself. We will now see that this formalism reduces to a special case of ours if the Lie group $\mathcal{G}$ is compact.

In~\cite{Hamette2021}, the physical Hilbert space is defined exactly as ours, i.e.\ $\ch_{\rm phys}=\{|\psi\rangle\in\ch_{\rm kin}\,\,|\,\,U_{RS}|\psi\rangle=|\psi\rangle\}$. However, for general unimodular Lie groups, this is not in general a proper subspace of $\ch_{\rm kin}$. Among other facts, this has the consequence that their projector $\Pi'_{\rm phys}$ into $\ch_{\rm phys}$, defined as $\Pi'_{\rm phys}:=\int_{\mathcal{G}}d'g U_{RS}(g)$, agrees in the compact case with ours, $\Pi_{\rm phys}=\int_{\mathcal{G}}dg\,U_{RS}(g)$, only up to a constant factor. We can interpret this as a convention of~\cite{Hamette2021} to choose a different inner product on $\ch_{\rm phys}$ rather than the one that is induced from $\ch_{\rm kin}$. Let us denote the physical Hilbert space with this modified inner product $(\cdot,\cdot)$ by $\ch'_{\rm phys}$. Internal frame descriptions are obtained from $\mathcal{H}'_{\mathrm{phys}}$ by reduction maps
\begin{align}
\label{eq:redMapsOld}
    R^{(g)}: \mathcal{H}'_\mathrm{phys} \rightarrow \mathcal{H}_{S,g}^\mathrm{phys}, \quad  R^{(g)}=\langle\phi(g)|_R\otimes\mathds{1}_S,
\end{align}
namely by conditioning on a frame orientation $g$ and eliminating the subsystem carrying the reference frame. Changes of frame are implemented as unitary quantum coordinate transformations between such reduced descriptions:
\begin{align*}
    V_{R_i \rightarrow R_j}(g_i,g_j)
=
R_{R_j}^{(g_j)}\,\cdot\,\left( R_{R_i}^{(g_i)}\right)^{-1}.
\end{align*}
We now have two reference frames as subsystem factors $\mathcal{H}_{R_i}$, with $i\in \{1,2\}$, i.e.\ the total Hilbert space is
\begin{equation}
\ch_{\rm kin}=\ch_{R_1}\otimes\ch_{R_2}\otimes\ch_S.
\label{EqTwoQRFs}
\end{equation}
The inverse of the reduction maps can be obtained as
\begin{align}
    \left( R_i^{(g)}\right)^{-1}: \mathcal{H}_{R_j,S,g}^\mathrm{phys}\rightarrow \mathcal{H}_\mathrm{phys}, \quad \left(R_i^{(g)}\right)^{-1}=\Pi'_\mathrm{phys}(|\phi(g)\rangle_{R_i}\otimes\mathds{1}_{R_j,S}).
    \label{EqOldQRFTrafo}
\end{align}
In \cite{Hamette2021}, the perspectival data are tied to a distinguished tensor-factor subsystem $\mathcal{H}_R$ carrying a coherent state system. By contrast, in our generalized setting, neither do we require the frame to be encoded in a subsystem of the form $\mathcal{H}_R$, nor that the relevant perspectives must be encoded by a coherent state system on such a factor. Instead, we set out with a Kraus seed on the full kinematical space, together with its orbit generated by the group action. This preserves the central theme of the perspective-neutral approach: a subspace encoding physical states together with unitary maps into internal perspectives, while dropping the assumption that the frame must arise from a distinguished subsystem carrying a coherent state POVM. In that sense, the construction in \cite{Hamette2021} is recovered as the special case in which the seed-generated orbit is realized by a coherent state system on $\mathcal{H}_R$.

Formally, the relation between the two formalisms is as follows. In our formalism, the reduction maps $R^{(g)}$ are replaced by the isometries
\begin{align}
\label{eq:ConnectionOldNewPN}
    M_g = \sqrt{n}|\phi(g)\rangle\langle\phi(g)|_R\otimes \mathds{1}_S\upharpoonright\ch_{\rm phys}=\sqrt{n}(|\phi(g)\rangle_R \otimes \mathds{1}_S)R^{(g)},
\end{align}
where $n>0$ is the constant such that $\int_{\mathcal{G}}|\phi(g)\rangle\langle\phi(g)|_Rdg=(1/n)\cdot\mathds{1}$, or equivalently, $n=d'g/dg$. Similarly, $M_g^\dagger=\Pi_{\rm phys}M_g$, where $\Pi_{\rm phys}$ is interpreted as a map from $\ch_{\rm kin}$ to $\ch_{\rm phys}$. That is, when passing to a perspective with the reduction maps, rather than eliminating the subsystem carrying the reference frame, we keep the associated factor intact in the perspectival description. Eq.~(\ref{EqResolutionIdentity}) then ensures that the coherent state system defines a valid QRF $\mathcal{I}$ in the sense of our Definition~\ref{DefQRF}, with trivial one-dimensional representation $\chi=1$. The representation spaces in both approaches are hence related by
\[
   \ch_{\rm phys}^{\mathcal{I}}(g)=|\phi(g)\rangle_R\otimes \ch_{S,g}^{\rm phys}.
\]
When comparing the reduction maps, care has to be taken that the inner product on $\ch'_{\rm phys}$ is different in~\cite{Hamette2021}, namely $(\varphi,\psi)=\langle \varphi|\psi\rangle/n$ for $\varphi,\psi\in\ch'_{\rm phys}$. Therefore, if we pick a normalized vector $\psi\in\ch_{\rm phys}$, i.e.\ $\langle\psi|\psi\rangle=1$, then the associated normalized vector in $\ch'_{\rm phys}$ is $\psi':=\sqrt{n}\psi$, and we obtain
\begin{equation}
   M_g|\psi\rangle=|\phi(g)\rangle_R\otimes (R^{(g)}\psi')_S.
   \label{eqFactorLeftIn}
\end{equation}
In this sense, our reduction maps fully recover those of~\cite{Hamette2021} in the case of compact Lie groups. Moreover, our QRF transformations of Definition~\ref{DefQRFTrafo} are equivalent to those of Eq.~(\ref{EqOldQRFTrafo}). However, they are strictly more general even in the special case of tensor product subsystem QRFs: for example, they allow us to switch between perspectival descriptions associated to two different choices of coherent state system $\{|\phi(g)\rangle\}_{g\in\mathcal{G}}$ and $\{|\phi'(g)\rangle\}_{g\in\mathcal{G}}$ for a single subsystem QRF $R$, and they admit Kraus rank 1 covariant instruments where the associated operators on $R$ are not projectors onto coherent states, but more general matrices allowing the associated POVMs on the reference to be of rank greater than 1. A concrete instance of this is shown in Example~\ref{ExLossUnitarity} below.

Furthermore, our Lemma~\ref{LemOurTheorem1} generalizes Theorem 1 of~\cite{Hamette2021}. We expect that several further results of~\cite{Hamette2021} have precise analogues in our more general framework, but leave an elaboration of this to future work.

\section{Examples of the generalized framework}
\label{SecExamples}

\subsection{Generalized aspects of the perspective-neutral formalism for finite Abelian groups}
\label{SubsecAbelian}

We will begin to illustrate examples of the generalized approach within the perspective-neutral setting for finite Abelian groups \cite{KrummHoehnMueller, HoehnKrummMueller}. While this is in principle covered by the formalism of \cite{Hamette2021}, it already exhibits instances of the main aspects that are generalized in the present framework.

We consider the group $\mathbb{Z}_n=\{0,1,\ldots,n-1\}$ with addition modulo $n$ as the group operation. We interpret the numbers as the elements of a discretized position space for a particle along a circle, and addition as translation. For a single particle, the associated Hilbert space is $\mathcal{H}_1=\mathbb{C}^{|\mathcal{G}|}$, i.e.\ an $n$-dimensional Hilbert space with orthonormal basis $\{|g\rangle\}_{g\in\mathcal{G}}$. For $N$ particles, the total (kinematical) Hilbert space is $\ch_{\rm kin}=\ch_1^{\otimes N}$.

\begin{figure}[hbt]
\includegraphics[width=.35\columnwidth]{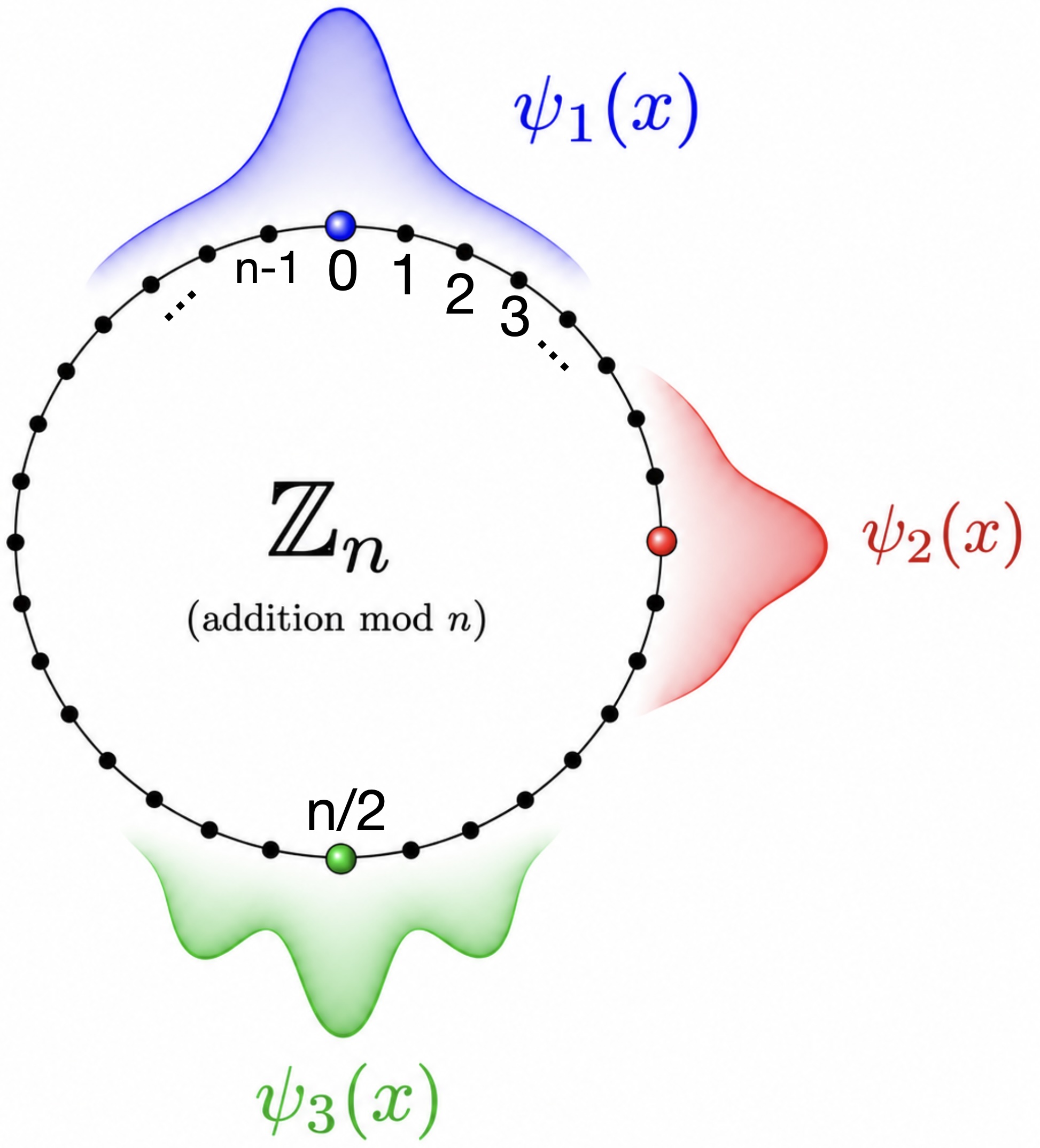}
\caption{The finite Abelian group $\mathbb{Z}_n$ describes a discretized position space for $N$ particles (here $N=3$), with addition modulo $n$ as translation. There is also an associated notion of momentum, and a discrete Fourier transform that relates the two. Hence, this provides a mathematically simple discrete version of the usual ``particle on a line'', such that our results are directly applicable in a mathematically rigorous way. \aidisclosure}
\label{fig:abelian}
\end{figure}
The representation of the translation group $\mathcal{G}$ is $U_g^B=U_g^{\otimes N}$, such that $U_g^B|x_1,x_2,\ldots,x_N\rangle=|x_1+g,x_2+g,\ldots,x_N+g\rangle$, where $+$ denotes addition modulo $n$. An important role is played by the state vectors $|\mathbf{d};p\rangle$, labelled by relative distances $\mathbf{d}=(d_1,\ldots,d_{N-1})$ with $d_i=x_{i+1}-x_1\mod n$, and a generalized notion of momentum $p\in\{0,1,\ldots,n-1\}$. Formally, the expressions
\[
   \chi_p(g)=\exp\left(\frac{2\pi i} n p g\right)\qquad (p\in\{0,1,\ldots,n-1\})
\]
are characters of $\mathcal{G}$, and they appear in the discrete Fourier transform definition
\begin{align*}
|\mathbf{d};p\rangle
:=
\frac{1}{\sqrt{n}}
\sum_{g=0}^{n-1}
\overline{\chi_p(g)}\, |g,\mathbf{d}+g\rangle,
\end{align*}
where $|g,\mathbf{d}+g\rangle=|g,d_1+g,d_2+g,\ldots,d_{N-1}+g\rangle$. Note that $\chi_{p=0}=\mathbf{1}$, the constant-one function, and the physical Hilbert space is the momentum-zero subspace
\begin{align*}
\mathcal{H}_{\mathrm{phys}}
=
\mathrm{span}\bigl\{\, |\mathbf{d};0\rangle \;\big|\; \mathbf{d} \in \mathcal{G}^{N-1} \,\bigr\}.
\end{align*}

The first aspect of relevance for the present paper is that already in the finite Abelian setting -- where particles are reference systems associated to subsystems -- one does \emph{not} need to implement the passage to the perspective of one particle as eliminating the corresponding tensor factor. Earlier works~\cite{KrummHoehnMueller,HoehnKrummMueller} have defined Schrödinger reduction maps
\begin{align*}
R_{S,i}^{(g)}
:
\mathcal{H}_{\mathrm{phys}}
\to
\mathcal{H}_{\bar{i}},
\qquad
R_{S,i}^{(g)}
=
\sqrt{n}\,\langle g|_i \otimes \mathds{1}_{\bar{i}},
\end{align*}
where $\ch_{\bar i}\simeq\ch_1^{\otimes(N-1)}$ is the Hilbert space of all particles but $i$. This has been interpreted as conditioning on the gauge-fixing condition of particle $i$ being in orientation $g$. However, if one wishes to keep the frame subsystem explicit, then the isometries of a QRF according to our Definition~\ref{DefQRF} are instead
\begin{align*}
M^{(i)}_g
:=
\sqrt{n}|g\rangle\langle g|_i \otimes \mathds{1}_{\bar{i}} \upharpoonright \ch_{\rm phys} \qquad
(g \in \mathcal{G}).
\end{align*}

\begin{example}[Frame of the first particle]
Take $\mathcal{G} = \mathbb{Z}_n$ and choose particle $i$ as the frame. Then the aligned seed is $M^{(i)}_e=\sqrt{n}|0\rangle\langle 0|_i \otimes \mathds{1}_{\bar{i}}\upharpoonright \ch_{\rm phys}$, while the translated family is $M^{(i)}_g=\sqrt{n}|g\rangle\langle g|_i \otimes \mathds{1}_{\bar{i}}\upharpoonright\ch_{\rm phys}$ for $g\in\mathbb{Z}_n$. For $i=1$, we can also write 
\begin{align*}
M_g^{(1)}=&\sqrt{n}|g\rangle\langle g|_1 \otimes \mathds{1}_{\bar{1}}\upharpoonright \mathcal{H}_\mathrm{phys}
=\sqrt{n}|g\rangle\langle g|_1 \otimes \sum_\mathbf{d}|\mathbf{d}+g\rangle\langle \mathbf{d}+g|_{\bar{1}} \Pi_\mathrm{phys} \upharpoonright\ch_{\rm phys}=\sum_{\mathbf{d}}|g,\mathbf{d}+g\rangle \langle \mathbf{d};0|
\end{align*}
for the form of the isometry corresponding to the frame of the first particle at position $g$. Setting $g=0$ as well, we see that $M_e^{(1)}$ acts on the basis elements of the physical Hilbert space as
\begin{align*}
|\mathbf{d};0\rangle \longmapsto |0\rangle_1 \otimes |\mathbf{d}\rangle_{\bar 1}=|0,d_1,...,d_{N-1}\rangle,
\end{align*}
so the physical relational state is mapped to the description of the remaining particles relative to particle $1$ being at the origin. As in our generalized construction, the initial orientation is selected by $M^{(i)}_e$, and the family of internal perspectives is generated by its orbit under the group action.

The finite Abelian framework also makes the linking structure between different internal perspectives explicit. The quantum coordinate change from the perspective of particle $i$ at position $g_i$ to that of particle $j$ at position $g_j$ is
\begin{align}
\label{eq:TransBetwParticles}
V_{i \to j}(g_i,g_j)
=
\sum_{g \in \mathcal{G}}
|g_i-g\rangle\langle g_i|_i
\otimes
|g_j\rangle\langle g+g_j|_j
\otimes
U_{-g}^{\otimes (N-2)}
\qquad
(i \neq j),
\end{align}
where all additions and subtractions are, as always here, modulo $n$. For example, applying $V_{1 \rightarrow 2}(e,e)$ to a state describing the first particle's perspective at $0$ will map
\begin{align*}
    |0\rangle_1 \otimes |\mathbf{d}\rangle_{\bar 1} \longmapsto |0\rangle_2 \otimes |\mathbf{d}-d_1\rangle_{\bar 2},
\end{align*}
and switch to the second particle's perspective at $0$. It is straightforward to check that the quantum coordinate changes of Eq.~(\ref{eq:TransBetwParticles}) are equivalent to our QRF transformations of Definition~\ref{DefQRFTrafo} for the special case of the QRFs that are defined by the $M_g^{(i)}$ above.
\end{example}

In the following example, we consider a scenario which goes beyond the formalism of~\cite{Hamette2021}. Instead of a single coherent state system on the reference system, we introduce a linear combination of two coherent state systems -- essentially constructing a smeared version of the initially sharp frame leading to superposed perspectival states, which yields an isometry defining a QRF in the sense of Definition~\ref{DefQRF}. While still fulfilling our definition of being a valid seed for a QRF, this does not represent a valid coherent state system any more. An implication of this is that reduction maps which eliminate the reference system, as in (\ref{eq:redMapsOld}), may lead to a loss of unitarity. While the following example illustrates the phenomenon purely mathematically, we give a physically motivated case of a similar construction (for relativistic clocks) in Example~\ref{ExPWdouble} below.

\begin{example}[Frame smearing via superposition of perspectives]
\label{ExLossUnitarity}
Consider the above scenario with $N=2$ particles on $n\geq 3$ discrete positions, where the first particle is interpreted as the reference $R$, and the second particle as the system $S$. With the $M_g^{(1)}$ of above, describing a state $\rho\in\mathcal{L}(\ch_{\rm phys})$ relative to ``$R$ being in position $g$'' yields, according to the PN framework of~\cite{Hamette2021,HoehnKrummMueller,KrummHoehnMueller},
\[
   R_{S,1}^{(g)}\rho {R_{S,1}^{(g)}}^\dagger=n\langle g|_R\otimes\mathds{1}_S\rho |g\rangle_R\otimes\mathds{1}_S=n\,{\rm Tr}\left(|g\rangle\langle g|_R\otimes\mathds{1}_S \rho |g\rangle\langle g|_R\otimes\mathds{1}_S\right)={\rm Tr}_R\left(M_g^{(1)}\rho{M_g^{(1)}}^\dagger\right).
\]
The associated map $R_{S,1}^{(g)}$ is an isometry despite the partial trace since the result of conjugating with $M_g^{(1)}$ -- a coherent state system -- disentangles $R$ from $S$ by projecting it into a local pure state. In our framework, and as shown in Eq.~(\ref{eqFactorLeftIn}) above, we leave the factor $|g\rangle_R$ in the perspectival state rather than tracing it out.

Omitting the partial trace is indeed necessary for our framework to admit unitary QRF changes, as we will now see. Consider the linear combination of the two seeds at $g=0$ and $g=2$,
\begin{align*}
    M'_0:=&\frac{1}{\sqrt{2}}(M_0^{(1)}+M_2^{(1)}) \upharpoonright \ch_\phys = \frac{1}{\sqrt{2}}\left(\sum_{\mathbf d}\ketbra{0, {\mathbf d}}{\mathbf{d};0} + \sum_{\mathbf d}\ketbra{2, {\mathbf d + 2}}{\mathbf{d};0}\right) = \frac{1}{\sqrt{2}}\sum_{\mathbf d}(\ket{0,{\mathbf d}} + \ket{2, {\mathbf d + 2}})\bra{\mathbf{d};0}.
\end{align*}
Since $M_0^{(1)}$ and $M_2^{(1)}$ are orthogonal projectors, $M'_0$ is still a valid Kraus seed for a covariant instrument, even though it is not a projector on a coherent state. Furthermore, the cross terms vanish after tracing out $R$, and so
\begin{equation}
   {\rm Tr}_R\left(M'_0\rho{M'_0}^\dagger\right)=\frac n 2 {\rm Tr}_R\left(|0\rangle\langle 0|_R\otimes\mathds{1}_S\rho|0\rangle\langle 0|_R\otimes\mathds{1}_S+|2\rangle\langle 2|_R\otimes\mathds{1}_S\rho|2\rangle\langle 2|_R\otimes\mathds{1}_S\right)=\frac 1 2 R_{S,1}^{(0)}\rho R_{S,1}^{(0)} + \frac 1 2 R_{S,1}^{(2)}\rho R_{S,1}^{(2)}.
   \label{EqNotUnitary}
\end{equation}
Thus, the generalized reduction associated with $M'_0$ and subsequent elimination of the reference factor induces the uniform mixture of the two projecting reduction maps associated to each coherent state system, which indicates that $M'_0|\psi\rangle$ for $|\psi\rangle\in\ch_{\rm phys}$ is typically entangled between $R$ and $S$. It is clear that the map in Eq.~(\ref{EqNotUnitary}) hence does not in general correspond to a unitary map. In our framework, this loss of unitarity is avoided since the reference system $R$ is not traced out. 

Essentially, we can view $M'_0$ as a smearing of the frame generated by $M^{(1)}_0$. While for the former we always have orthogonal, and thus sharply resolved perspectival states $\langle \psi |(M^{(1)}_g)^\dagger M^{(1)}_{g'}|\psi \rangle = \delta_{g,g'}$, states such as $M'_0|\psi\rangle$ and $M'_2|\psi\rangle$ are overlapping. For $\mathcal{G}=\mathbb{Z}_4$, for example, these would coincide entirely, and there are only two distinct perspectival states available to resolve the four possible frame orientations in that case.
\end{example}

In the finite Abelian setting, the authors of~\cite{KrummHoehnMueller,HoehnKrummMueller} have identified a group of symmetry transformations which is much larger than the ordinary translation group $\mathcal{G}$, which the authors have interpreted as a group of QRF transformations. With the orthogonal projectors $\Pi_{\mathbf{d}}$ onto the subspaces of constant particle distances, $\ch_{\mathbf{d}}={\rm span}\{|\mathbf{d};p\rangle\,\,|\,\,0\leq p \leq n-1\}$, this group is
\begin{align*}
\mathcal{U}_{\mathrm{sym}}
=
\left\{
\sum_{\mathbf{d}\in\mathcal{G}^{N-1}} U_{g(\mathbf{d})}^{\otimes N}\Pi_{\mathbf{d}}
\,\,\middle|\,\,
g(\mathbf{d})\in\mathcal{G}
\right\}.
\end{align*}
These are relation-conditional translations: on each sector of fixed particle distances, one applies a global translation that may depend on the distances themselves. Hence, $\mathcal{U}_{\mathrm{sym}}$ is strictly larger than the original translation group $\mathcal{G}$, and it contains it when we set $g(\mathbf{d})=g$ for all $\mathbf{d}$. At the same time, this larger group already contains all transformations of the form (\ref{eq:TransBetwParticles}) between the internal particle perspectives. 

Furthermore, we can see immediately that $\mathcal{U}_\mathrm{sym}$ can be recovered from our definition of quantum coordinate changes, given by (\ref{eq:QuantCoordChanges}). Pick any seed $M_0$ such as the one associated to passing to the frame of the first particle. Now, for any $U \in \mathcal{U}_\mathrm{sym}$, observe that $M_0':=UM_0$ (or $M'_0:=U M_0 U^\dagger$ which is the same since $U^\dagger$ acts as the identity on $\ch_{\rm phys}$) is still a valid seed according to our Definition~\ref{DefQRF}, because $[U_g^B,U]=0$ for all $U\in\mathcal{U}_{\rm sym}$ and all $g\in\mathcal{G}$. It is then clear that we recover every such $U$ as a QRF transformation in the sense of our Definition~\ref{DefQRFTrafo}, if we denote the QRFs associated to the $M_g$ and the $M'_g$ by $\mathcal{I}$ and $\mathcal{I}'$, respectively:
\[
   U\upharpoonright \ch_{\rm phys}^{\mathcal{I}}(0)=U_{0,0}^{\mathcal{I},\mathcal{I}'} = M'_0 M_0^\dagger\upharpoonright \ch_{\rm phys}^{\mathcal{I}}(0).
\]
Now we give an example of a QRF in our sense which cannot be associated with any of the canonical subsystems in this setting, even though it is physically very well-motivated: the description relative to the center of mass of the particles.

\begin{example}[Center-of-mass frame]
\label{ex:centOfMass}
In the scenario described above, let $m_1,\dots,m_N$ be non-negative real numbers with $m := m_1 + \cdots + m_N > 0$, which we interpret as the particle masses. For relative positions $\mathbf{d}=(d_1,\dots,d_{N-1})=(x_2-x_1,x_3-x_1,\ldots,x_N-x_1)$, define
\begin{align*}
g(\mathbf{d})
:=
-
\left\lfloor
\frac{1}{m}
\left(
m_2 d_1 + \cdots + m_N d_{N-1}
\right)
\right\rfloor,
\end{align*}
which we can interpret as the center of mass, up to rounding the numbers with the floor function which is necessary because we have discrete integer positions. Set
\begin{align*}
U:=\sum_{\mathbf{d}\in\mathcal{G}^{N-1}} U_{g(\mathbf{d})}^{\otimes N}\Pi_{\mathbf{d}},
\end{align*}
and note that this is a special case of the QRF transformations that we have described in Lemma~\ref{LemCC}. This conditional translation $U$ describes a change of quantum coordinates from the perspective of particle $1$ to the perspective of the center of mass.

Now, we start from the usual seed for the frame of particle $1$,
\begin{align*}
M^{(1)}_0 = \sqrt{n}|0\rangle\langle 0|_1 \otimes \mathds{1}_{\bar 1}\upharpoonright\ch_{\rm phys} =\sqrt{n}\sum_\mathbf{d}|0,\mathbf{d}\rangle\langle0,\mathbf{d}|\upharpoonright\ch_{\rm phys}, 
\end{align*}
and define
\begin{align*}
   M_0^{(\mathrm{cm})}:=UM_0^{(1)}=U\sqrt{n}|0\rangle\langle 0|_1 \otimes \mathds{1}_{\bar 1}U^\dagger\upharpoonright \ch_{\rm phys}=\sqrt{n}\sum_{d}|g(\mathbf{d}),\mathbf{d}+g(\mathbf{d})\rangle\langle g(\mathbf{d}),\mathbf{d}+g(\mathbf{d})|\upharpoonright\ch_{\rm phys}.
\end{align*}
Since $U$ acts as a translation $U_{g(\mathbf{d})}^{\otimes N}$ on each fixed particle distance subspace $\ch_{\mathbf{d}}$, the projector $|0\rangle\langle 0|_1$ is shifted sector by sector to the projector onto the $\mathbf{d}$-dependent position $g(\mathbf{d})$ of particle $1$ that centers the coordinates at the center of mass. For the simplest case of two particles with equal masses, where $\mathbf{d}=d$ is a single distance $d$, we obtain
\begin{align*}
    M^{(\mathrm{cm})}_0 = \sqrt{n}
    \sum_{d}|-\lfloor d/2\rfloor,d-\lfloor d/2\rfloor\rangle\langle-\lfloor d/2\rfloor,d-\lfloor d/2\rfloor|\,\upharpoonright\ch_{\rm phys}.
\end{align*}
Thus, rather than conditioning on particle $1$ being at a fixed orientation $0$, one conditions on particles $0$ and $1$ having mutually inverse positions (only for even $d$, due to the discreteness). This already shows, within the finite Abelian setting, that the relevant seed need not be associated with a fixed subsystem factor alone.

It is true that the total kinematical Hilbert space $\ch_{\rm kin}$ can be refactored into the center-of-mass and the relational degrees of freedom, and the QRF associated to the center of mass can hence be interpreted as pertaining to a tensor product subsystem in this more general sense (this is shown in more detail in Example~\ref{ex:PageWoottersContinuum} below). However, this refactorization will not be possible in all cases where our Definition~\ref{DefQRF} applies. And even in the center of mass example where it \emph{is} possible, the two QRFs associated to the seeds $M_0^{(1)}$ and $M_0^{\rm (cm)}$ are associated to \emph{distinct} factorizations of the Hilbert space $\ch_{\rm kin}$. Hence, the associated QRF transformations $U$ map between two QRFs that are \emph{not} associated to two tensor factors of a \emph{single} factorization as in Eq.~(\ref{EqTwoQRFs}) of the earlier PN framework of~\cite{Hamette2021}. This shows yet another way in which our framework is more general, even in the simple case of finite Abelian groups.
\end{example}

\subsection{Describing bosons and fermions within the generalized PN framework}
\label{SubsecBosonsFermions}

We will now show that our generalized PN framework is a natural formalism to describe bosons and fermions; and that, indeed, the symmetrization postulate can be seen as a particular instance of this.

For simplicity, let us work in the formalism of first quantization, and consider a total Hilbert space for $N$ (initially thought of being distinguishable) particles, $\ch_{\rm kin}=\ch^{\otimes N}$, where $\ch$ is a single-particle Hilbert space.
The symmetry group is the symmetric group of $N$ elements, i.e.\ $\mathcal{G}=S_N$, where $S_N$ contains all permutations of $\{1,2,\ldots,N\}$. Now we go beyond the earlier PN frameworks by considering a representation of $\cg$ that does not act independently on each factor, but rather permutes the tensor factors:
\[
   U_\pi^B|x_1,\ldots,x_n\rangle=|x_{\pi^{-1}(1)},\ldots, x_{\pi^{-1}(N)}\rangle \qquad (\pi\in S_N).
\]
It now turns out that the physical Hilbert space becomes
\[
   \ch_{\rm phys}=\{|\psi\rangle\in\ch_{\rm kin}\,\,|\,\, U_\pi^B|\psi\rangle=|\psi\rangle\}=\vee^N\ch,
\]
i.e.\ the symmetric subspace. Alternatively, we could have started with the projectively equivalent representation
\[
   {U_\pi'}^B|x_1,\ldots,x_n\rangle={\rm sgn}(\pi)|x_{\pi^{-1}(1)},\ldots, x_{\pi^{-1}(N)}\rangle={\rm sgn}(\pi)U_\pi^B \qquad (\pi\in S_N),
\]
and the associated physical Hilbert space is
\[
   \ch'_{\rm phys}=\{|\psi\rangle\in\ch_{\rm kin}\,\,|\,\, U_\pi^B|\psi\rangle={\rm sgn}(\pi)|\psi\rangle\}=\wedge^N\ch,
\]
i.e.\ the antisymmetric subspace. Hence, whatever leads one to the postulate that $\ch_{\rm phys}$ describes the physically relevant states (for example, our motivation of Subsection~\ref{SubsecOpMot}) will also lead one to arrive at the symmetrization postulate for bosons resp.\ fermions. Note that the PN formalism does \textit{not} apply to the case of parastatistics, i.e.\ higher-dimensional irreducible representations of $S_N$ that appear in the decomposition of $\ch_{\rm kin}$. This inapplicability can be lifted to a general physical argument for the absence of paraparticles in nature, as some of us have shown in~\cite{mekonnen_invariance_2026}. In this sense, our generalized PN framework allows us to understand important aspects of indistinguishable particles in physics, which can be seen as a part of a larger research program to formulate empirically adequate and predictively powerful \textit{quantum covariance principles}~\cite{GiacominiBrukner2020,GiacominiBrukner2022,QRFIndefiniteMetric,Kabel2024,hardy_implementation_2020}. 
If the symmetric and antisymmetric subspaces describe the perspective-neutral states of bosons and fermions, then what does it mean to ``jump into a perspective''? Let us first consider a simple example, and then discuss what one could do with it.

\begin{example}[Labeling frames for two bosons]
\label{ExLabeling}
    We consider a system of two bosons in first quantization whose positional degrees of freedom are described by a Hilbert space $\mathcal{H}_\mathrm{kin}\cong\mathcal{H}_1 \otimes \mathcal{H}_1 \cong \mathds{C}^2 \otimes \mathds{C}^2$, with position basis $\{\ket{0},\ket{1}\}$ for each factor. The symmetry group in this example is $\mathcal{S}_2=\{e,\mathrm{swap}\}$, and its unitary action exchanges tensor factors as $U(\mathrm{swap})\ket{\phi}\otimes \ket{\psi}=\ket{\psi}\otimes \ket{\phi}$. The perspective-neutral states of the physical Hilbert space satisfy $U(\mathrm{swap})\ket{b}=\ket{b}$; they correspond exactly to the symmetrized states of bosons. In this scenario, the physical Hilbert space $\mathcal{H}_\mathrm{phys}$ is
    \begin{align*}
        \ch_{\rm phys}=\ch_1\vee\ch_1=\mathrm{span}\left\{\ket{0,0},\ket{1,1},\frac{1}{\sqrt{2}}\left(\ket{0,1}+\ket{1,0}\right)\right\}.
    \end{align*}
We associate a QRF for indistinguishable particles to a choice of a labeling convention for the particles. To this end, we can start with a set of isometries $\{M_e,M_\mathrm{swap}\}:\ch_{\rm phys}\to\ch_{\rm kin}$ given by
    \begin{align*}
        M_e &= |0,0\rangle\langle0,0|+|1,1\rangle\langle1,1|+\sqrt{2}|0,1\rangle\langle0,1|\upharpoonright\ch_{\rm phys} \\
        &= |0,0\rangle\langle 0,0|+|1,1\rangle\langle1,1|+ |0,1\rangle\langle\Phi^+|,\\
        M_\mathrm{swap} &= |0,0\rangle\langle0,0|+|1,1\rangle\langle1,1|+\sqrt{2}|1,0\rangle\langle1,0| \upharpoonright\ch_{\rm phys}=U_\mathrm{swap}M_eU_\mathrm{swap}^\dagger\upharpoonright \ch_{\rm phys}\\
        &=|0,0\rangle\langle0,0|+|1,1\rangle\langle1,1|+|1,0\rangle\langle\Phi^+|=U_{\rm swap} M_e,
    \end{align*}
where $\ket{\Phi^+} = \frac{1}{\sqrt{2}}(\ket{01}+\ket{10})$.
These define a QRF $\mathcal{I}$ according to Definition~\ref{DefQRF}. Intuitively, this QRF breaks permutation symmetry: depending on the associated instrument's outcome, it yields a description that labels the particles either in ascending or descending order, as we will now see. For instance, the ascending frame description
    \begin{align*}
    \ket{\psi}_\uparrow:=\ket{0,1}=M_e\frac{1}{\sqrt{2}}\left(|0,1\rangle +|1,0\rangle\right)    
    \end{align*}
would be interpreted as labeling the particle at position 0, say Alice's position in a lab, as the ``first'' particle, while the one at position 1 (say, Bob's) is what we refer to as the ``second'' particle. For the descending frame description
\begin{align*}
    \ket{\psi}_\downarrow:=\ket{1,0}=M_\mathrm{swap}\frac{1}{\sqrt{2}}\left(|0,1\rangle +|1,0\rangle\right)    
    \end{align*}
    it is exactly the other way around. Note that this is also an instance of a non-ideal quantum reference frame. As a consequence of bosons occupying multiple positions, we would have $\langle\psi|M_\mathrm{swap}^\dagger M_e|\psi\rangle\neq 0$ for some $\ket{\psi}\in \mathcal{H}_\mathrm{phys}$ (e.g.\ for $|\psi\rangle=|00\rangle$), since doubly occupied states imply that different frame descriptions are overlapping. In contrast, the fermionic analog corresponds to an ideal QRF. For fermions, $\ch_{\rm phys}$ is spanned by $\ket{\Phi^-}$, so we could have $M_e = \ketbra{01}{\Phi^{-}}$  and $M_{\rm swap} = \ketbra{10}{\Phi^{-}}$ which would correspond to an ideal QRF in the sense that $M_g \ket{\psi} \perp M_{g'} \ket{\psi}$ for $g \neq g'$ and all $|\psi\rangle\in\ch_{\rm phys}$. For the general definition of an ``ideal QRF'' in this context, we refer the reader to~\cite{upcoming}.
\end{example}
To see what this can be used for, we have to go slightly beyond this simple example. Suppose, for example, that the single-particle Hilbert space is spanned by orthonormal states $|0\rangle,|1\rangle,\ldots,|9\rangle$ rather than only $|0\rangle,|1\rangle$. Consider the two symmetric states
\[
   |\psi\rangle=\frac 1 {\sqrt{2}}\left(|0,8\rangle+|8,9\rangle\right),\qquad |\psi'\rangle=\frac 1 2 \left(|0,8\rangle+|8,0\rangle+|1,9\rangle+|9,1\rangle\right)
\]
which are both in the physical Hilbert space. Treated as elements of $\ch_{\rm kin}$ with respect to its two-particle tensor product factorization, both are entangled states, but this statement is not operationally meaningful. Suppose that we interpret the numbers $0,1,\ldots, 9$ as spatial positions, and consider two agents, Alice and Bob, at opposite ends of these possible positions. That is, we assume that the places $0$ and $1$ are located in Alice's lab, and the places $8$ and $9$ in Bob's. This gives us \emph{operationally well-defined notions of subsystems}, and the question of entanglement now becomes physically meaningful. To answer this question, we can use the ``labeling QRF'' from Example~\ref{ExLabeling} above and compute
\begin{eqnarray*}
    |\psi\rangle_{\uparrow}=M_e|\psi\rangle=|0,8\rangle,\qquad |\psi'\rangle_{\uparrow}=\frac 1 {\sqrt{2}}\left(|0,8\rangle+|1,9\rangle\right).
\end{eqnarray*}
Relative to our ``choice of perspective'', resembling the internal breaking of permutation symmetry induced by Alice's and Bob's labs, we obtain state descriptions that allow us to conclude that we should regard $\psi$ as separable and $\psi'$ as entangled. In upcoming work~\cite{upcoming}, we explain in more detail how labeling QRFs in the PN framework can be used to describe the entanglement of bosons and fermions, and how this relates to the standard description in terms of subalgebras~\cite{benatti_entanglement_2020}.

\subsection{Page-Wootters mechanism with frames for non-local clocks and constraints with degenerate spectra}
\label{SubsecPW}

We will now consider an example involving two particles of the same mass which are acted on by one-dimensional translations, together with a harmonic-oscillator interaction term between them. If the interaction vanishes the usual Page-Wootters mechanism \cite{PageWootters} emerges: conditioned on the position of the first particle taking the role of a clock, the evolution of the second one is generated by a time-independent Hamiltonian. With the non-vanishing interaction term, on the other hand, the effective generating Hamiltonian becomes time-dependent, which does not result in a decoupling between the dynamics of the clock and the system. However, we will see that by using the center-of-mass frame as a non-local clock, we recover the Page-Wootters mechanism again, and a relational description in terms of a time-\emph{independent} Hamiltonian, despite the presence of the interaction term. We refer the reader to Section \ref{sec:PageWoottersReview} in the appendix for a review of the Page-Wootters formalism and its extension in the interaction case \cite{Smith_2019}.

Technically, the present setting lies outside of our framework, since we will be assuming the locally compact group $(\mathbb{R},+)$ represented regularly on the infinite-dimensional space $L^2(\mathbb{R})$, in order to underline the importance of this non-local clock in a setting where the time parameter is continuous and aperiodic. As a consequence, we will be ignoring the correct normalization of states and operators in this setting, while remaining precise otherwise. In addition to the example below, however, Example \ref{ex:PageWoottersFiniteAbelian} in the appendix gives the finite Abelian analogue of this example, showing that our generalized formalism can still accommodate such a scenario rigorously.

\begin{example}[Non-local Page-Wootters clock]
\label{ex:PageWoottersContinuum}
    The kinematical Hilbert space is given by $\mathcal{H}_{\mathrm{kin}} = \mathcal{H}_1 \otimes \mathcal{H}_2=L^2(\mathbb{R}) \otimes L^2(\mathbb{R})$, with position operators $X_i$ acting by multiplication, momentum operators $P_i = -i\partial_{x_i}$ in the position basis, satisfying $[X_i, P_j] = i\delta_{ij}$, and the translation operator acts as
\begin{align*}
    e^{-iP_i a}|x\rangle_i = |x + a\rangle_i.
\end{align*}
Note that in an expression such as $e^{-iP_1a}|x\rangle_1$ we only consider the action of $P_1$ on the first tensor factor, even though $P_1$ as an operator acts on both factors, albeit trivially on the second. Similarly simplified notation will be used in all of the following.

The constraint Hamiltonian is the continuous version of the finite Abelian generator of global translations together with a harmonic oscillator potential term:
\begin{align}
\label{eq:HContinuum}
    H = P_1 + P_2 + K(X_1 - X_2)^2, \qquad K > 0.
\end{align}

We note that the relative-distance coordinate is invariant under global translations, so $[P_1+P_2,(X_1-X_2)^2] = 0$ and $e^{-iHt}$ factorizes as a product of these parts. However, the evolution generated by $H$ does not factorize in the $1\otimes 2$ tensor decomposition, because the interaction $K(X_1-X_2)^2$ mixes the two factors.

In order to determine the physical states, we will solve the analog of the constraint equation $H|\psi\rangle = 0$. Introducing center-of-mass and relative-distance coordinates $X = (x_1 + x_2)/2$ and $r = x_1 - x_2$, with $x_1 = X+r/2$ and $x_2 = X-r/2$, the chain rule gives $\partial_{x_1} + \partial_{x_2} = \partial_X$, so in these coordinates we obtain
\begin{align*}
    H = -i\partial_X + Kr^2.
\end{align*}
The constraint $H\psi = 0$ becomes a first-order ordinary differential equation in $X$ at each fixed $r$,
\begin{align*}
    \partial_X\psi(X, r) = -iKr^2\psi(X, r),
\end{align*}
with solution
\begin{align}
\label{eq:PhysStateContinuum}
    \psi(X, r) = e^{-iKr^2 X}f(r)
\end{align}
The physical Hilbert space is thus parametrized by functions $f(r)$, depending only on the relative-distance coordinate. In the original coordinates the same physical state translates back to
\begin{align}
\label{eq:PhysStateContinuumPosCoord}
    \psi(x_1, x_2) = e^{-iK(x_1-x_2)^2(x_1+x_2)/2}f(x_1-x_2).
\end{align}
The gauge action given by $U_t = e^{-i(P_1+P_2)t}\cdot e^{-iK(X_1-X_2)^2 t}$, acts on a position basis state as
\begin{align}
\label{eq:UContinuum}
    U_t|x_1, x_2\rangle = e^{-iK(x_1-x_2)^2 t}|x_1+t,x_2+t\rangle,
\end{align}
and we find $U_t|\psi\rangle = |\psi\rangle$ for every $|\psi\rangle \in \mathcal{H}_{\mathrm{phys}}$, as expected.

Now, with the local  frame of the first particle given by
\begin{align*}
    M_0^{(1)} &= |x_1=0\rangle\langle x_1=0|_1 \otimes \mathds{1}_2 \upharpoonright\mathcal{H}_\mathrm{phys},\\
    M^{(1)}_t&:=U_tM_0^{(1)}=U_t |x_1=0\rangle\langle x_1=0|_1 \otimes \mathds{1}_2 U_t^\dagger \upharpoonright\mathcal{H}_\mathrm{phys}=|x_1=t\rangle\langle x_1=t|_1 \otimes \mathds{1}_2 \upharpoonright\mathcal{H}_\mathrm{phys},
\end{align*}
the reduced state at $t = 0$ follows from restricting the physical wavefunction \eqref{eq:PhysStateContinuumPosCoord}
to $x_1 = 0$. This gives $\psi(0, x_2) = e^{-iKx_2^3/2}f(-x_2)$, so the wavefunction of the reduced state in the local frame at $t = 0$ is $(M^{(1)}_0\psi)(x_1,x_2)=\delta(x_1)e^{-iKx_2^3/2}f(-x_2)$, i.e.\ ``particle 1 at the origin'' multiplied with the system wavefunction
\begin{align}
\label{eq:PhiLocalContinuumZero}
    \psi_S^{(1)}(x_2; t=0) := e^{-iKx_2^3/2}f(-x_2)
\end{align}
on $\mathcal{H}_2$. 

We evolve the reduced state with $U_t$ by first acting with  the translation $e^{-i(P_1+P_2)t}$
\begin{align*}
    e^{-i(P_1+P_2)t}\left(\delta(x_1)e^{-iKx_2^3/2}f(-x_2)\right)
    = \delta(x_1-t)e^{-iK(x_2-t)^3/2}f(t-x_2),
\end{align*}
and  then the interaction term $e^{-iK(X_1-X_2)^2 t}$:
\begin{align*}
    e^{-iK(X_1-X_2)^2 t} \delta(x_1-t)e^{-iK(x_2-t)^3/2}f(t-x_2) &= \delta(x_1-t) e^{-iK(X_1-X_2)^2 t} e^{-iK(x_2-t)^3/2}f(t-x_2) \\ &= \delta(x_1-t) e^{-K(t-x_2)^2(t+x_2)/2}f(t-x_2),
\end{align*}
where the multiplicative phase comes from $e^{-iK(X_1-X_2)^2 t}$ evaluated at the support of the delta, $x_1 = t$, yielding the additional factor $e^{-iK(t-x_2)^2 t}$.
The two phases combine as $-K(t-x_2)^2 t - K(x_2-t)^3/2 = -K(t-x_2)^2(t+x_2)/2$, and the perspectival state in the first particle's local frame at time $t$ amounts to
\begin{align}
\label{eq:PhiLocalContinuum}
    \psi^{(1)}(x_1, x_2;t)
    =
    \delta(x_1 - t)\cdot\psi_S^{(1)}(x_2;t),
    \qquad
    \psi_S^{(1)}(x_2;t) = e^{-iK(t-x_2)^2(t+x_2)/2}f(t-x_2).
\end{align}
Thus, as the first particle (viewed as our clock) advances from $x_1 = 0$ to $x_1 = t$, the relational evolution of the system's wavefunction on $\mathcal{H}_2$ can be tracked.

To find the form of the Schrödinger equation that $\psi^{(1)}_S$ satisfies, we define $A(x_2, t) := (t-x_2)^2(t+x_2)/2$ so that $\psi_S^{(1)} = e^{-iKA}f(t-x_2)$, and compute the partial derivatives
\begin{align*}
    \partial_t A &= \frac{1}{2}\left(2(t-x_2)(t+x_2) + (t-x_2)^2\right) = \frac{(t-x_2)(3t+x_2)}{2}, \\
    \partial_{x_2}A &= \frac{1}{2}\left(-2(t-x_2)(t+x_2) + (t-x_2)^2\right) = -\frac{(t-x_2)(t+3x_2)}{2}.
\end{align*}
For the time derivative of the system wavefunction, we then obtain
\begin{align}
\label{eq:idtPsiSLocal}
    i\partial_t\psi_S^{(1)}
    =
    i\partial_t\left(e^{-iKA}f(t-x_2)\right)
    =
    K\frac{(t-x_2)(3t+x_2)}{2}\psi_S^{(1)} + ie^{-iKA}f'(t-x_2),
\end{align}
while the derivative with respect to $x_2$ reads
\begin{align}
\label{eq:dx2PsiSLocal}
    i\partial_{x_2}\psi_S^{(1)}
    =
    -K\frac{(t-x_2)(t+3x_2)}{2}\psi_S^{(1)} - ie^{-iKA}f'(t-x_2).
\end{align}
Adding \eqref{eq:idtPsiSLocal} and \eqref{eq:dx2PsiSLocal} and rearranging yields
\begin{align*}
    i\partial_t\psi_S^{(1)}
    &=
    K\frac{(t-x_2)}{2}\left((3t+x_2) - (t+3x_2)\right)\psi_S^{(1)} - i\partial_{x_2}\psi_S^{(1)} \\
    &= K(t-x_2)^2\psi_S^{(1)} -i\partial_{x_2}\psi_S^{(1)}.
\end{align*}
Viewed as acted on by an operator, the system wavefunction in the local frame therefore obeys the Schrödinger equation
\begin{align}
\label{eq:TDSELocalContinuum}
    i\partial_t \psi_S^{(1)}(x_2;t) = \left(P_2 + K(t-X_2)^2\right)\psi_S^{(1)}(x_2;t)=:H^{(1)}_{\mathrm{eff}, S}(t)\psi_S^{(1)}(x_2;t),
\end{align}
 with an effective Hamiltonian
\begin{align}
    H^{(1)}_{\mathrm{eff}, S}(t):= P_2 + K(t-X_2)^2
\end{align}
that depends explicitly on the time parameter.

We also note that the unitary operator evolving the system state from time $0$ to $t$ can be read off from the action of $U_t$ on \eqref{eq:PhiLocalContinuumZero} leading to \eqref{eq:PhiLocalContinuum}. We find
\begin{align*}
    \psi^{(1)}_S(x_2;t)=e^{-iK(t-X_2)^2t}e^{-iP_2t}\psi^{(1)}_S(x_2;0)=:U^{(1)}_{\mathrm{eff},S}(t,0)\psi^{(1)}_S(x_2;0)
\end{align*}
and thus $U^{(1)}_{\mathrm{eff},S}(t_2,t_1):=U^{(1)}_{\mathrm{eff},S}(t_2,0)\left(U^{(1)}_{\mathrm{eff},S}(t_1,0)\right)^\dagger$.

We now turn to the center-of-mass  frame, the continuum analog of the non-local projectors that Example \ref{ex:centOfMass} introduced for the finite Abelian setting:
\begin{align}
\label{eq:COMSeedContinuum}
M_0^{(\mathrm{cm})} &= \int_{\mathbb{R}}
    dx|x,-x\rangle\langle x,-x|_{12}
    \upharpoonright \mathcal{H}_{\rm{phys}}, \\ M^{(\mathrm{cm})}_t&:=U_tM_0^{(\mathrm{cm})}=\int_{\mathbb{R}}
    dxU_t|x,-x\rangle\langle x,-x|_{12}U_t^\dagger
    \upharpoonright \mathcal{H}_{\rm{phys}}=\int_{\mathbb{R}}
    dx|t+x,t-x\rangle\langle t+x,t-x|_{12}
    \upharpoonright \mathcal{H}_{\rm{phys}}. \nonumber
\end{align}
It is the restriction of the projector onto the subspace $\{(x_1, x_2)\in\mathbb{R}^2 : x_1 + x_2 = 0\}$ of position states, which is non-local in the $1\otimes 2$ tensor decomposition, in the sense that it cannot be written as $A_1\otimes B_2$ for any operators $A_1, B_2$ acting on $\mathcal{H}_1, \mathcal{H}_2$ separately. Applying it to the physical state $\psi(x_1, x_2) = e^{-iK(x_1-x_2)^2(x_1+x_2)/2}f(x_1-x_2)$ restricts the wavefunction such that the phase $e^{-iK(x_1-x_2)^2(x_1+x_2)/2}$ vanishes. The reduced state at $t = 0$ in the center-of-mass frame is therefore
\begin{align*}
    \left(M_0^{(\mathrm{cm})}\psi\right)(x_1, x_2) = \delta(x_1 + x_2)f(x_1 - x_2).
\end{align*}
 
Evolving this state with \eqref{eq:UContinuum}, the translation $e^{-i(P_1+P_2)t}$ shifts both coordinates by $+t$ and sends $\delta(x_1+x_2)$ to $\delta(x_1 + x_2 - 2t)$ while leaving the relative argument $x_1-x_2$ of $f$ unchanged, and the operator $e^{-iK(X_1-X_2)^2 t}$ acts as $e^{-iK(x_1-x_2)^2 t}$ on the wavefunction. The perspectival state in the center-of-mass frame at time $t$ is therefore
\begin{align}
\label{eq:PhiCMContinuum}
    \psi^{(\mathrm{cm})}(x_1, x_2;t) = \delta(x_1 + x_2 - 2t)e^{-iK(x_1-x_2)^2 t}f(x_1-x_2).
\end{align}
As the non-local center-of-mass clock advances from $x_1+x_2 = 0$ to $x_1+x_2 = 2t$, we can now track the relational evolution of the system, i.e. the part of the wavefunction which is parameterized by the relative-distance coordinate. Expressing \eqref{eq:PhiCMContinuum} in center-of-mass/relative-distance coordinates, we obtain
\begin{align}
\label{eq:PhiCMContinuumCMrel}
    \psi^{(\mathrm{cm})}(X, r;t) = \frac{1}{2}\delta(X - t)e^{-iKr^2 t}f(r)
\end{align}
with the system wavefunction
\begin{align*}
    \psi_S^{(\mathrm{cm})}(r;t) := \frac{1}{2}e^{-iKr^2 t}f(r),
\end{align*}
which depends on $t$ only linearly in the phase. The equation of motion is then simply evaluated as
\begin{align*}
    i\partial_t\psi_S^{(\mathrm{cm})}(r;t)
    =
    i\partial_t\left(\frac{1}{2}e^{-iKr^2 t}f(r)\right)
    =
    i(-iKr^2)\frac{1}{2}e^{-iKr^2 t}f(r)
    =
    Kr^2\psi_S^{(\mathrm{cm})}(r;t),
\end{align*}
i.e.\ the Schrödinger equation
\begin{align}
\label{eq:TISECMContinuum}
    i\partial_t\psi_S^{(\mathrm{cm})}(r;t)
    =
    K(X_1-X_2)^2\psi_S^{(\mathrm{cm})}(r;t)
\end{align}
generated by the time-independent Hamiltonian $H_{\mathrm{eff},S}^{(\rm cm)} = K(X_1-X_2)^2$, which coincides with the interaction term of the constraint in \eqref{eq:HContinuum} and generates the time evolution operator $U^{(\mathrm{cm})}_{\mathrm{eff},S}(t_2,t_1):=e^{-iH^{(\mathrm{cm})}_{\mathrm{eff},S}(t_2-t_1)}$.

Furthermore, one could also express the center-of-mass seed \eqref{eq:COMSeedContinuum} in the center-of-mass/relative-distance tensor decomposition $\mathcal{H}_\mathrm{cm}\otimes \mathcal{H}_\mathrm{rel} \cong \mathcal{H}_1\otimes \mathcal{H}_2$, in which it takes the form
\begin{align*}
    M^{\mathrm{(cm)}}_0=|X=0\rangle\langle X=0|_\mathrm{cm}\otimes \mathds{1}_\mathrm{rel}\upharpoonright\mathcal{H}_\mathrm{phys},
\end{align*}
such that \eqref{eq:PhiCMContinuumCMrel} splits naturally into clock and system factors in that basis. Writing the frame seed in this particular factorization allows us to see clearly why we recover the usual form of the Page-Wootters mechanism despite adding an interaction term. Since $H$ in \eqref{eq:HContinuum} splits into local, commuting terms $H_\mathrm{cm} \otimes \mathds{1}_\mathrm{rel}:=P_1+P_2$ and $ \mathds{1}_\mathrm{cm}\otimes H_\mathrm{rel}:=K(X_1-X_2)^2$, we obtain 
\begin{align*}
    i\partial_t|\psi^{(\mathrm{cm})}_S(r;t)\rangle&=i\partial_t (\langle X=t|_\mathrm{cm}\otimes \mathds{1}_\mathrm{rel})|\psi\rangle 
    = (\langle X=t|_\mathrm{cm}\otimes \mathds{1}_\mathrm{rel})(-H_\mathrm{cm}\otimes \mathds{1}_\mathrm{rel})|\psi\rangle \\
    &= (\langle X=t|_\mathrm{cm}\otimes \mathds{1}_\mathrm{rel})(\mathds{1}_\mathrm{cm}\otimes H_\mathrm{rel})|\psi\rangle 
    = H_\mathrm{rel}|\psi^{(\mathrm{cm})}_S(r;t)\rangle.
\end{align*}
\end{example}

In the previous example, we have seen that in the presence of the interaction term $K(X_1-X_2)^2$, taking the first particle as a clock will not result in its  decoupling from the rest of the configuration, as opposed to the non-interacting case. However, the center of mass as a non-interacting clock will still lead to a  decoupling between it and the relative-distance part of the configuration, due to the associated split of the constraint Hamiltonian. Our framework allows us to make use of this by choosing the non-local reference frame corresponding to the center of mass, which is aligned with the clean decomposition of the Hamiltonian, thereby again recovering the relational evolution of the Page-Wootters construction. While reduction maps to the center-of-mass frame could not be defined previously, we note that refactorizations of the kinematical Hilbert space within the PN framework were investigated in \cite{carrozzaCorrespondenceQuantumError2025}. In our generalization, however, the center of mass can be understood -- with respect to  the fixed initial tensor factorization -- as a non-local frame.

Now, while a time-dependence in the Hamiltonian generator of the time evolution is usually understood as coming from some external, potentially unspecified influence on the system of interest, the relational picture offers an explanation of this effect from a different angle. In the Schrödinger picture, an explicit dependence of a system's observable on the clock's parameter is the result of the interaction between that clock and the system. This kind of interaction was understood as ``kick back'' in \cite{Angelo_2012}, where it is connected to fictitious forces of non-inertial frames. Thus, such an interacting clock's perspectival state does not only track the system's evolution, but also its own correlation with the system, leading to an \textit{indirect} occurrence of self-reference. This effect, as well as the conditions for it to arise for switches between non-interacting clocks, are described in \cite{Trinity,Equivalence_Hoehn_2021}. Furthermore, we note that this is independent of the ideality properties of the clock, as our example illustrates. Both the first particle's and the center-of-mass system have clock states which are pairwise orthogonal and generated by Hamiltonian constraints $P_1$ and $P_1+P_2$, respectively, and with non-degenerate spectra given by $\mathbb{R}$. 

A more general treatment of non-ideal Page-Wootters clock constraints, such as quadratic Hamiltonians with kinetic-potential terms, can be found in \cite{hoehnHowSwitchRelational2020,Trinity, Equivalence_Hoehn_2021, hosseini2026timearrivalproblempagewootters}, and we will see in Example \ref{ExPWdouble} below that our formalism gives such clocks a clear interpretation as QRFs as well. Besides the connection to the Page-Wootters formalism, constraints with degeneracies are, for instance,  discussed in \cite{Ahmad2022} in the context of the quantum relativity of subsystems. For a treatment of the center-of-mass and relative-distance partition in connection with (non)-inertial QRFs, we refer the reader to~\cite{Angelo_2012} and \cite{Vanrietvelde2020}, the latter of which explicitly includes harmonic oscillator potentials between systems of particles.

We will now discuss how clock constraints with two-fold degenerate spectra, such as $H_C=\frac{P^2}{2m}$, can be accommodated naturally within our framework. This has been treated in \cite{Equivalence_Hoehn_2021,hosseini2026timearrivalproblempagewootters}, where separate reduction maps were defined on the positive and negative frequency sectors of the physical Hilbert space. Following closely the analysis of \cite{Equivalence_Hoehn_2021}, the example below will show that this sector-wise construction, while going beyond the standard PN formalism, defines a valid QRF in our generalized approach. Similarly as Example~\ref{ex:PageWoottersContinuum}, it is slightly beyond our mathematical framework because it involves a non-compact group acting on an infinite-dimensional Hilbert space. In particular, we will ignore all normalization factors for states in the following example, because most states appearing in the calculations are not even normalizable. Similarly as in~\cite{Equivalence_Hoehn_2021} on which this rests, we expect that the details can be filled in (e.g.\ via rigged Hilbert spaces) without fundamental difficulties.
\begin{example}[Page-Wootters clock as QRF from quadratic constraint]
\label{ExPWdouble}
    We consider a constraint equation of the form $H|\psi\rangle=\left(\frac{P^2}{2}\otimes \mathds{1}_S+\mathds{1}_C\otimes H_S\right)|\psi\rangle=0$, assuming the spectrum of $H_S$ to have non-positive elements such that $\mathcal{H}_\mathrm{phys}$ is non-empty. Kinematical states are of the form
    \begin{align*}
        |\psi_\mathrm{kin}\rangle = \sumint_E\int_\mathbb{R}dp\; \psi_\mathrm{kin}(p,E)|p\rangle_C|E\rangle_S,
    \end{align*}
    and by rewriting the constraint as
    \begin{align*}
        H=H_+\cdot H_- \quad \mathrm{with} \quad H_\sigma = \frac{P}{\sqrt{2}}+\sigma\sqrt{-H_S}, 
    \end{align*}
    where $[H_+,H_-]=0$ and $\sigma \in \{\pm\}$ labels the frequency degeneracies of $H_C=P^2/2m$, one obtains a decomposition of the physical Hilbert space as $\mathcal{H}_\mathrm{phys}=\bigoplus_{\sigma=\pm}\mathcal{H}_\sigma$ into sectors. Physical states are then given by  
    \begin{align*}
        |\psi_\mathrm{phys}\rangle = \sum_\sigma \sumint_{E \in \mathrm{\sigma_{SC}}}\frac{\psi_\sigma(E)}{(2|E|)^{1/4}}|p_\sigma(E)\rangle_C|E\rangle_S,
    \end{align*}
    where $\psi_\sigma(E):=\psi(p_\sigma(E),E)/(2|E|)^{1/4}$, $p_\sigma(E):=-\sigma\sqrt{2|E|}$ and $\sigma_{SC}:=\{E \in \mathrm{Spec}(H_S)|E\leq 0\}$. Since we only consider the non-positive part of the spectrum of $H_S$ on $\mathcal{H}_\mathrm{phys}$, $\sqrt{-H_S}\upharpoonright\ch_{\rm phys}$ above is self-adjoint.
    
    In the momentum basis the clock states are defined as
    \begin{align*}
        |t,\sigma \rangle := \int_\mathbb{R}dp\sqrt{|p|}\theta(-\sigma p)e^{-itp^2/2}|p\rangle_C, \quad \mathrm{with} \quad |t+t',\sigma\rangle = e^{-itP^2/2}|t',\sigma\rangle,
    \end{align*}
    where $\theta$ denotes the Heaviside step function. While these states are only mutually orthogonal for differing values of $\sigma$, they do form resolutions of the identity (up to a normalization) on each frequency sector of $\mathcal{H}_C$, respectively. Thus, their sum fulfills 
    \begin{align*}
        \frac{1}{2\pi}\sum_\sigma\int_\mathbb{R}dt |t,\sigma\rangle\langle t,\sigma|=\mathds{1}_C.
    \end{align*}
    With the help of sector-wise reduction maps $\mathcal{R}^\sigma_\mathrm{PW}:\mathcal{H}_\mathrm{phys}\rightarrow\mathcal{H}^\mathrm{phys}_{S,\sigma}$ defined as $\mathcal{R}^\sigma_\mathrm{PW}(\tau):= \langle \tau,\sigma| \otimes \mathds{1}_S$, as well as their inverses $(\mathcal{R}^\sigma_\mathrm{PW}(\tau))^{-1}:\mathcal{H}^\mathrm{phys}_{S,\sigma}\rightarrow\mathcal{H}_\sigma \subset \mathcal{H}_\mathrm{phys}$, the authors of \cite{Equivalence_Hoehn_2021} recover exactly the standard PN-formalism per frequency sector, and a Page-Wootters relational evolution law. With the reduced states
    \begin{align*}
    \frac{1}{\sqrt{2}}|\psi^\sigma_S(\tau)\rangle := \mathcal{R}^\sigma_\mathrm{PW}(\tau)|\psi_\mathrm{phys}\rangle 
    = \sumint_{E\in \sigma_{SC}}\psi_\sigma(E)e^{-i\tau E}|E\rangle_S,
    \end{align*}
    they obtain
    \begin{align*}
        i\frac{d}{d\tau}|\psi^\sigma_S(\tau)\rangle=H_S|\psi^\sigma_S(\tau)\rangle.
    \end{align*}
    Now, as the authors of~\cite{Equivalence_Hoehn_2021} explain (in their Footnote 16), the standard PN framework of \cite{Hamette2021} as outlined in Sec. \ref{SecRelPrevious} does not admit a notion of a frame that describes the combined reduction for both frequency sectors, as the superposition of seed states such as $|t\rangle:=1/\sqrt{2}(|t,+\rangle + |t,-\rangle)$ is not a coherent seed state any more. A reduction map of the form $\mathcal{R}_\mathrm{PW}:=\langle \tau | \otimes \mathds{1}_S$ would not lead to a resolution of the identity as it mixes contributions from both frequency sectors, and it would not be invertible. However, with the identification given by Eq. (\ref{eq:ConnectionOldNewPN}), we find that
    \begin{align*}
        M_t := &\left(|t,+\rangle\langle t,+|+|t,-\rangle\langle t,-|\right)\otimes \mathds{1}_S\upharpoonright\ch_{\rm phys} \\
        = &\left(|t,+\rangle \otimes \mathds{1}_S \right) \mathcal{R}^+_\mathrm{PW}(t)+\left(|t,-\rangle \otimes \mathds{1}_S \right) \mathcal{R}^-_\mathrm{PW}(t)
    \end{align*}
    is a Kraus seed corresponding to a covariant instrument. The QRF defined by this construction can be interpreted as the clock that conditions on the rank-2 coherent POVM coming from both frequency states. With the results for the Page-Wootters reduction maps of~\cite{Equivalence_Hoehn_2021}, we find
    \begin{align*}
        M_t|\psi_\mathrm{phys}\rangle=\left(|t,+\rangle \otimes |\psi^+_S(t)\rangle+|t,-\rangle \otimes |\psi^-_S(t)\rangle \right).
    \end{align*}
This result shows how the two reduction maps for the frequency sectors are combined into a single isometry in our generalization of the PN framework. The two solutions $|\psi_S^\sigma(t)\rangle$, obtained separately in~\cite{Equivalence_Hoehn_2021}, are now jointly encoded into the two orthogonal frequency sectors (since $\langle t,+|t,-\rangle=0$). What has been an ad hoc generalization in~\cite{Equivalence_Hoehn_2021} is now (up to the infinite-dimensionality issue) a systematic part of our framework.
\end{example}
Note that the construction of the frame given by $\{M_t\}_t$ above is similar to the finite, Abelian Example \ref{ExLossUnitarity}, where a linear combination of two coherent state seed projectors was used to construct a new QRF. While we identified a type of smearing of the frame in Example \ref{ExLossUnitarity}, this would not be an appropriate interpretation here. Due to the combination of frames $M^{(\sigma)}_t:=\left(|t,\sigma\rangle \otimes \mathds{1}_S \right) \mathcal{R}^\sigma_\mathrm{PW}(t)$ operating entirely on orthogonal subspaces, i.e.\ the frequency sectors $\mathcal{H}_\sigma$ and their images, no additional overlap of perspectival states can be created in this case. For simplicity, consider the above example in an ideal scenario (for instance for finite groups), where all different perspectival states are orthogonal: $\langle \psi_\mathrm{phys}|(M^{(\sigma)}_{\mathrm{id},g})^\dagger M^{(\sigma')}_{\mathrm{id},g'}|\psi_{\mathrm{phys}}\rangle=\delta_{\sigma,\sigma'}\delta_{g,g'}$. Then, a combination such as $M_{\mathrm{id},g}:=M^{(+)}_{\mathrm{id},g}+M^{(-)}_{\mathrm{id},g}$ would preserve the sharpness of perspectival states $M_{\mathrm{id},g}|\psi_\mathrm{phys}\rangle$ exactly, and not introduce any additional non-ideality. Furthermore, it is shown in \cite{Equivalence_Hoehn_2021} that Dirac observables are superselected across the frequency sectors; hence coherences of physical states $|\psi_\mathrm{phys}\rangle \in \mathcal{H}_\mathrm{phys}=\mathcal{H}_\mathrm{+}\oplus \mathcal{H}_\mathrm{-}$ between the sectors do not influence expectation values, and superpositions of different sector states are indistinguishable from the corresponding mixed state. Tracing out the reference system would thus not necessarily lead to a loss of physically relevant information, as opposed to Example \ref{ExLossUnitarity}.

\section{Outlook and Conclusions}
\label{SecConclusions}
Our contribution in this paper is threefold: first, we have given a resource-theoretic and operational reconstruction of the perspective-neutral (PN) framework. Second, based on this, we have generalized the PN framework beyond the case of quantum reference frames (QRFs) as subsystems on which the symmetry acts independently. Third, we have shown that this generalized framework unlocks a variety of physically relevant applications.

Let us dicsuss these three aspects separately.\\

\textbf{Resource-theoretic motivation.} It has previously been noticed that the notion of QRFs is intimately linked to operational constraints --- concretely, to the question of which (algebra of) observables is accessible to an experimenter~\cite{Bartlett2007,Doat2025, garmier_perspectives_2025,CastroRuiz2025}. For example, some agents will only have access to certain subsystems. Moreover, in the presence of a $\cg$-symmetry, one might conclude that only $\cg$-invariant observables are measurable, a postulate that has a prominent motivation via the WAY theorem~\cite{wigner_messung_1952,araki_measurement_1960,yanase_optimal_1961}. However, previous works tended to postulate the set of accessible observables, or the set of associated QRF transformations, in an ad hoc manner, based on specific examples.

We think that resource-theoretic notions are the right tools to study such restrictions in a  systematic, clear and consistent way. Resource theories have first been introduced in quantum information theory, where the resource theory of entanglement describes what observers can accomplish if they are restricted to LOCC operations (local operations and classical communication)~\cite{horodecki_quantum_2009}, they have been used to put thermodynamics on a new rigorous foundation~\cite{horodecki_fundamental_2013,brandao_second_2015}, and they have led to the study of a \textit{resource theory of asymmetry}~\cite{Bartlett2007,MarvianThesis} where the free operations are restricted to be covariant with respect to the symmetry group. Resource-theoretic notions have been used to study the asymmetry induced by an internal QRF~\cite{Ludescher2022}, and applying resource theories to quantum field theory has been described as one frontier on the way to a the successful application of quantum information theory to high-energy physics~\cite{goto_rethinking_2026}.

In Subsection~\ref{SubsecOpMot}, we have given a simple operational scenario for which resource-theoretic considerations, understood in a broad sense, lead naturally to several aspects of the PN framework. In more detail, we identified the notion of \emph{completely covariant operations} as the free operations of relevance in this context, and this naturally leads to the definition of the physical Hilbert space. Expressed in technical jargon, this gives an operational motivation for the appearance of \emph{coherent twirling}, $\Pi_{\rm phys}:=\int_\cg U_g^B\, dg$, as opposed to the \emph{incoherent twirling} that appears naturally in the resource theory of asymmetry, $\rho\mapsto\int_\cg U_g^B \rho {U_g^B}^\dagger$.

From a physical point of view, complete covariance is a more ``paranoid'' version of symmetry than plain covariance: it expresses the postulate that gauge transformations on a system $S$ are not only required to preserve all physical predictions for measurements performed on $S$, but also for correlated measurements performed on $SE$, where $E$ is any external quantum system. If the symmetry group $\mathcal{G}$ is a connected Lie group, then $|\psi\rangle\in\ch_{\rm phys}$ is the same as $C|\psi\rangle=0$ for all generators $C$ of the representation. Hence, insofar as complete covariance serves as a motivation of the PN framework and its physical Hilbert space, it has \emph{also} the potential to suggest an explanation of why one would want to write down constraint equations like the ones that appear in Dirac quantization.

An early formulation of (a special case of) complete covariance in~\cite{HoehnKrummMueller}, arguing for the invariance of purifications of states, has been criticized in~\cite{Doat2025}: \textit{``[...] it would appear more logical to ban these `bad purifications', rather than all the states to which they could potentially be applied.''} Let us briefly respond to this criticism, which will also make the analogy with \textit{complete positivity} from quantum information theory clearer from which the name derives.

On any quantum system $X$, a linear trace-preserving positive map $T:X\to X$ is called \textit{completely positive} if $T_X\otimes {\rm Id}_Y:XY\to XY$ is also positive for every finite-dimensional quantum system $Y$. The usual argumentation for why all physical maps must be completely positive is as follows. Suppose that $T$ is positive, but not completely positive, such as the transposition map on a qubit $X$. Then we could prepare an entangled two-qubit state $|\psi\rangle$ on $XY$ (where $Y$ is another qubit) such that $T_X\otimes {\rm Id}_Y$ maps this state to a non-state, i.e.\ to an operator $T_X\otimes {\rm Id}_Y(|\psi\rangle\langle\psi|)$ that has negative eigenvalues, which is absurd.

Now, similarly as the authors of~\cite{Doat2025}, one could raise the following objection: \textit{``It would appear more logical to ban these pure entangled states, rather than some of the transformations that could potentially be applied on them.''} And indeed, this is a valid objection: our world \textit{could} potentially contain qubit systems which can never become entangled with other quantum systems, and on which the transpose map can be physically implemented. However, experience tells us that most qubit systems are not of this sort. Regardless of the empirical question of which types of qubits actually exist in the world, this objection does not invalidate the significance of complete positivity: \textit{if} there exist qubit systems that may be entangled with other quantum systems, \textit{then} complete positivity is a crucial constraint.

Similarly, complete covariance is a stronger version of covariance that applies to physical systems for which it turns out to be possible to hold a purification on an external system that is consistent with the local reduced state. Our claim is \textit{not} that all systems are necessarily of this type, or that the perspective-neutral approach is the only valid formulation of quantum reference frames -- only that \textit{if} we have systems of this type, \textit{then} the perspective-neutral approach is a natural framework to apply. Our reconstruction of the PN framework adds further evidence that it is a well-motivated approach, but this is consistent with our view that all approaches to QRFs have significance in their own context of applicability.

Despite the insights that we think our motivating scenario of Subsection~\ref{SubsecOpMot} provides, we have not yet given a definition of a complete, compositional resource theory where the completely covariant operations are the free operations. It would be interesting to construct a resource theory of this kind in future work. Furthermore, our motivation of the PN framework in Subsection~\ref{SubsecOpMot} still rested on some plausibility arguments and was not based on purely axiomatic, logical deductions. Further and more rigorous insights might be gained by analyzing the scenario as part of a multipartite system or of an operational theory, which again motivates the search for a complete associated resource theory and the analysis of its compositional structure.  \\

\textbf{Generalizing QRFs beyond subsystems.} Previous approaches have assumed that QRFs are subsystems on which the relevant symmetry group acts independently. For example, in earlier formulations of the PN framework including~\cite{Hamette2021}, a tensor product factorization of the kinematical Hilbert space into $R$ and $S$ is postulated, and the symmetry group is assumed to act as $U(g)=U_R(g)\otimes U_S(g)$. Similarly, the Page-Wootters construction~\cite{PageWootters} assumes that system and clock do not interact. However, these assumptions are arguably unrealistic, and some approaches to quantum gravity suggest that independent subsystems (say, described by local commuting subalgebras) do not even fundamentally exist~\cite{donnelly_observables_2016}.

This has motivated us to go beyond QRFs as subsystems when reconstructing the PN framework. We think that the restriction to subsystems is not only unwarranted, but that it also indicates a conceptual glitch in the physical intuition that underlies some approaches to QRFs: that we should think of a QRF as a ``thing'', or as a material ``object'' (more operationally, a ``system'') such as a particle; and that we should be interested in ``the perspective taken by that thing'', describing e.g.\ ``how the particle (which might be itself in superposition) sees the world''. We believe that this is a naive framing of the internal QRF research program. A perhaps clearer and more fruitful way to think of an internal QRF is to see it as \emph{an internal structural feature of the quantum system that allows us to break the gauge symmetry, and to choose one particular representation (among many equivalent ones) of the quantum state or observable}. This can mean to look at a non-interacting subsystem and declaring it to be the clock, as in the previous approaches; but it can also mean, for example, to track the evolution of the center of mass and use it as a clock, or to sort identical particles according to their spatial location and use this as a ``labeling frame'' that breaks the permutation symmetry.

When QRFs are subsystems $R$ (as in previous approaches), then measuring them produces a well-defined conditional state on the other subsystems $S$. In the previous formulation of the PN framework~\cite{HametteGalley}, this measurement was assumed to be given by a coherent state system on $R$. Even with the subsystem assumption (say, that $\ch_{\rm kin}$ factorizes into $R$ and $S$ as tensor product subsystems), our framework already provides a substantial generalization: it admits general covariant POVMs on the QRF subsystem (Lemma~\ref{LemSubsystemQRFs}), therefore lifting the PN framework to a level of generality (for compact Lie groups) that is already present in other approaches such as the operational approach to QRFs~\cite{Loveridge2018,Loveridge2017,Carette2025,glowacki_quantum_2024,glowacki_w-algebraic_2026}. However, when dropping the subsystem assumption, one does not only have to specify the POVM that is measured (and whose result will break the gauge symmetry), but also what the associated post-measurement states are. This is what has forced us to consider \textit{covariant instruments} as QRFs. As we have shown in Lemma~\ref{LemEuroYen}, this leads to a natural generalization of the relationalization map (a $\text{\euro}$ map, generalizing previous $\$$ and $\yen$ maps) to instrument QRFs that applies not only to the PN framework, but more generally. In the PN framework, this will be a covariant instrument with a density that has Kraus rank $1$ (Definition~\ref{DefQRF}). The associated Kraus operators play a double role: they describe both the action of the measurement on the input state, and they are isometries from the physical to the kinematical Hilbert space. They lead to unitary (that is, physically invertible) QRF transformations between different perspectives (Definition~\ref{DefQRFTrafo}), and the relationalization map $\text{\euro}$ becomes invertible on the image of the physical Hilbert space (Lemma~\ref{LemOurTheorem1}), generalizing results on relational Dirac observables of the previous approach~\cite{Hamette2021}.

Furthermore, we have given examples of general QRF transformations, including gauge transformations that are coherently controlled by (i.e.\ conditioned on) gauge-invariant observables~(Lemma~\ref{LemCoherentlyControlled}), which have previously been studied as prototypical QRF transformations~\cite{QRFIndefiniteMetric,GiacominiBrukner2022,GiacominiBrukner2020,hardy_implementation_2020,Giacomini,HametteGalley,mekonnen_invariance_2026}.\\

\textbf{Applications.} We have shown that our generalization of the PN framework to instruments enables a variety of physically relevant applications. For example, it allows us to introduce ``frames of labeling'' for bosons and fermions (see Subsection~\ref{SubsecBosonsFermions}). Here, the relevant symmetry is permutation symmetry, and the associated physical Hilbert space is either the symmetric (for bosons) or antisymmetric subspace (for fermions). Hence, whatever motivates the physical Hilbert space, either in the PN framework or in Dirac quantization, also motivates the well-known symmetrization postulate for indistinguishable particles, and our operational reconstruction in Subsection~\ref{SubsecOpMot} may perhaps play this role. In another work~\cite{mekonnen_invariance_2026}, some of us have shown how this observation can be turned into arguments for ruling out fundamental parastatistics in nature, in agreement with empirical results.

In Example~\ref{ExLabeling} and following, we have argued that this formalism is useful for studying entanglement of bosons and fermions: the form of the state vector in the physical Hilbert space (symmetrized or antisymmetrized) says nothing about its entanglement; but ``jumping into the labeling perspective'' that is associated with the operational scenario of interest will lead to a representation of the state that does. We will elaborate on this in future work~\cite{upcoming}, which will also discuss its relation to the standard algebraic approaches such as~\cite{benatti_entanglement_2020}.

Our generalization of QRFs to covariant instruments is also relevant for quantum clocks and the Page-Wootters framework. In a world where all subsystems are interacting (as expected in quantum gravity~\cite{donnelly_observables_2016}), it is usually assumed that we would still pick a subsystem as a relational clock, but that this results in corrections to the Schr\"odinger equation~\cite{Smith_2019}. However, our approach points at an alternative possibility: choose a (``non-local'') instrument QRF as the clock, similarly as the global distribution of dust is used as a clock in cosmology~\cite{brown_dust_1995}, and if this instrument is covariant, it defines time in a way that reproduces the Schr\"odinger equation exactly. In Example~\ref{ex:PageWoottersContinuum}, we have given a simple toy model that does this by jumping into the center-of-mass frame. One may worry that this is still a ``tensor product factorization in disguise'', because the kinematical Hilbert space can be refactored into the center-of-mass and relational degrees of freedom. However, our approach allows us to transform between QRFs that are associated to \emph{different factorizations of the kinematical Hilbert space}, which goes beyond earlier formulations of the PN framework. Furthermore, it is clear that our approach via covariant instruments applies in principle much more generally, even in cases where no associated Hilbert space factorization exists. It would be interesting to study this in more detail, e.g.\ in the toy model case of quantum spin chains where the subsystems of all sites are interacting.

In Example~\ref{ExPWdouble}, we have shown that our approach covers naturally the case of quantum clocks in the presence of frequency superselection sectors. This case has been treated in an ad hoc way in~\cite{hohn_equivalence_2021}, where it was explicitly mentioned that it goes beyond the standard PN framweork. This indicates that our generalization is natural, given that it has been implicitly used in previous works.\\

\textbf{Concluding remarks.} While our work has focused on the PN framework, the main idea to generalize QRFs beyond subsystems to covariant instruments applies more generally. An example of this is the relationalization map $\text{\euro}$ of Definition~\ref{DefRel} and Lemma~\ref{LemRel}. It would hence be interesting to study this generalization to instruments also in the context of the operational~\cite{Loveridge2018,Loveridge2017,Carette2025,glowacki_quantum_2024,glowacki_w-algebraic_2026} or extra particle~\cite{CastroRuiz2021,garmier_perspectives_2025} approaches to QRFs. From a more technical point of view, our work has focused on the case of compact Lie groups to avoid technicalities and to be fully mathematically rigorous. A generalization to unimodular Lie groups and beyond, as in the earlier framework of~\cite{Hamette2021}, would be interesting for several physical applications, and could be pursued to varying degrees of mathematical rigor. In particular, we hope that our insights motivate further study of the relation between resource theories, internal QRFs and constraint quantization, and thus contributes to the application of quantum information theory and its operational concepts to quantum field theory and quantum gravity.

\section*{Acknowledgments}
We are grateful to Philipp A.\ H\"ohn for helpful discussions on the relevance of the charge-zero sector, properties of relational clock systems, as well as various aspects of the PN framework and the associated literature. Funded in whole or in part by the Austrian Science Fund (FWF) 10.55776/COE1 (Quantum Science Austria) and the European Union -- NextGenerationEU. For open access purposes, the authors have applied a CC BY public copyright license to any author accepted manuscript version arising from this submission. Artificial intelligence (ChatGPT 5.5) has been used for generating the illustrations, but not for any of the text, arguments, or mathematical results.

\bibliography{references}

\appendix

\section{Proof of \texorpdfstring{Lemma~\ref{LemCompleteCovariance}}{Lemma 2} and its implications}
\label{AppProofLemma1}
In this section, Lemma \ref{LemCompleteCovariance} is shown in detail below, followed by a proof of its implications for compositions, isometries and inverses.
\begin{proof}
    Clearly, $(i)\Rightarrow (ii)\Rightarrow (iii)$ holds, and we will now show that $(iii)\Rightarrow (iv)$. Let $V: X \rightarrow YE$ be an isometry for some external system $E$, such that $T': \rho \mapsto V\rho V^\dagger$ is a covariant purification of $T$. From $\mathcal{U}^{Y}_g\otimes {\rm Id}_E \circ T'=T' \circ \mathcal{U}^X_g$, we must thus have $U^Y_g\otimes \mathds{1}_E V = e^{i\phi(g)}VU^X_g$, where $e^{i\phi(e)}=1$ follows. Furthermore,
    \begin{align*}
        e^{i\phi(gh)}VU^X_{gh}&= U^Y_{gh}\otimes\mathds{1}_E V 
        = \left(U^Y_{g}\otimes\mathds{1}_E \right) \left(U^Y_{h}\otimes\mathds{1}_E\right)V 
        =e^{i\phi(h)}\left(U^Y_{g}\otimes\mathds{1}_E\right)VU^X_h
        =e^{i\phi(h)}e^{i\phi(g)}VU^X_gU^X_h \\
        &= e^{i\phi(g)}e^{i\phi(h)}VU^X_{gh}
    \end{align*}
    then shows that $\chi(g):=e^{i\phi(g)}$ is a one-dimensional representation of $\mathcal{G}$. Now, given a basis $\{|j\rangle\}_j$ of $\mathcal{H}_E$, define $K_j:=\mathds{1}_Y \otimes \langle j |_EV$ with $K_j:X \rightarrow Y$. Due to the form of
    \begin{align*}
        T(\rho) = \operatorname{Tr}_E\left(V\rho V^\dagger \right) = \sum_j \mathds{1}_Y \otimes \langle j |_EV \rho V^\dagger\mathds{1}_Y \otimes | j \rangle_E=\sum_j K_j \rho K^\dagger_j,
    \end{align*}
    we see that $\{K_j\}_j$ is a valid Kraus representation of $T$, for which
    \begin{align*}
        U^Y_gK_j = \left(\mathds{1}_Y \otimes \langle j|_E\right) \left(U^Y_g \otimes \mathds{1}_E\right) V = \mathds{1}_Y \otimes \langle j|_E \chi(g)VU^X_g = \chi(g)K_j U^X_g
    \end{align*}
    follows immediately.
    
    Let us now show that $(iv)\Rightarrow (v)$. Let $\{K_j\}_j$ be some Kraus representation that transforms as required. Given a minimal Kraus representation $\{\tilde K_i\}_i$, we can write any other Kraus representation $\{K'_l\}_l$ as $K'_l=\sum_iu'_{li}\tilde K_i$, where $u'_{li}$ with $i=1,...,d_i$ and $l=1,...,d_l \geq d_i$ are the entries of an isometry $u'$ such that $(u')^\dagger u'=\mathds{1}_{d_i}$ \cite[Theorem 8.2]{nielsen_chuang_2010}. In particular, with $K_j=\sum_iu_{ji}\tilde K_i$ for an isometry $u$, we obtain
    \begin{align*}
        U^Y_g K'_l = U^Y_g\sum_{i}u'_{li}\tilde K_i =U^Y_g\sum_{i,j}u'_{li}\bar u_{ij}K_j=\chi(g)\sum_{i,j}u'_{li}\bar u_{ij}K_jU^X_g=\chi(g)K_l'U^X_g,
    \end{align*}
    for any Kraus representation of $T$.

    Finally, let us show that $(v)\Rightarrow (i)$. Consider a minimal Kraus representation $\{K_j\}_j$ of the channel $T$. Due to $(v)$, we have $\chi(g)K_j U^X_g = U^Y_gK_j$. Given any extension $T'$ on an arbitrary external system $E$ with orthonormal basis states $\{|j\rangle_E\}_j$, let a Kraus representation of $T'$ be given by operators $\{K^E_i:X \rightarrow YE\}_i$. By defining $\bar K_{i,j}:=\mathds{1}_Y\otimes \langle j |_E K^{E}_i$,
    \begin{align*}
        T(\rho) =\operatorname{Tr}_E(T'(\rho)) 
        = \operatorname{Tr}_E\left(\sum_iK^{E}_i\rho (K^E_i)^\dagger\right) = \sum_{i,j} \mathds{1}_Y \otimes \langle j |_E K_i^{E}\rho (K^E_i)^\dagger \mathds{1}_Y \otimes |j \rangle_E = \sum_{i,j} \bar K_{i,j}\rho (\bar K_{i,j})^\dagger
    \end{align*}
    shows that $\{\bar K_{i,j}\}_{i,j}$ is a Kraus representation of $T$. As before, since there exists an isometry $u$ such that $\bar K_{i,j}=\sum_{l}u_{(i,j),l}K_l$ holds, we obtain $\bar K_{i,j}U^X_g=\chi(g)U^Y_g\bar K_{i,j}$, and thus
    \begin{align*}
        K_i^EU^X_g &= \sum_j \mathds{1}_Y \otimes |j\rangle \langle j|_EK_i^E U^X_g
        =\sum_j \left(\mathds{1}_Y \otimes |j\rangle_E \right)\bar K_{i,j}U^X_g
        = \sum_j \left(\mathds{1}_Y \otimes |j\rangle_E \right)\chi(g) U^Y_g \bar K_{i,j} \\
        &= \chi(g)U^Y_g \otimes \mathds{1}_E\sum_j\left(\mathds{1}_Y \otimes |j\rangle_E\right)\bar K_{i,j}
        = \chi(g)U^Y_g \otimes \mathds{1}_E K^E_i,
    \end{align*}
    which immediately implies that $T'$ is covariant. As both the extension $T'$ and the external system $E$ are arbitrary, we conclude that $T$ is completely covariant.
\end{proof}

\begin{lemma}
\label{app:CompInvLemma}
    Let $T:X \rightarrow Y$ and $T': Y \rightarrow Z$ be completely covariant quantum operations. Then the composition $T' \circ T$ and the adjoint $T^\dagger$ are completely covariant quantum operations. Furthermore, every isometric quantum operation $T:X\to Y$ which is covariant is also completely covariant, and its inverse $T^{-1}:{\rm im}(T)\to X$ is completely covariant too.
\end{lemma}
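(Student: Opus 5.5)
The plan is to work throughout with characterization (iv) of Lemma~\ref{LemCompleteCovariance}. Then every claim reduces to checking that some Kraus representation consists of operators intertwining the representations up to a one-dimensional character.

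\textbf{Composition.} Take Kraus operators $\{K_j\}$ of $T$ with $U^Y_g K_j=\chi(g)K_jU^X_g$, and $\{L_k\}$ of $T'$ with $U^Z_g L_k=\chi'(g)L_kU^Y_g$. The products $\{L_kK_j\}_{j,k}$ form a Kraus representation of $T'\circ T$, and
\[
U^Z_g L_kK_j=\chi'(g)\chi(g)\,L_kK_jU^X_g .
\]
Since the pointwise product $\chi\chi'$ of one-dimensional representations is again one, (iv) holds and $T'\circ T$ is completely covariant.

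\textbf{Adjoint.} Here $T^\dagger$ has Kraus operators $\{K_j^\dagger\}$. Taking the adjoint of the intertwining relation gives $K_j^\dagger {U^Y_g}^\dagger=\overline{\chi(g)}\,{U^X_g}^\dagger K_j^\dagger$. Replacing $g$ by $g^{-1}$ and using $U_{g^{-1}}=U_g^\dagger$ and $\chi(g^{-1})=\overline{\chi(g)}$, we get
\[
U^X_g K_j^\dagger=\overline{\chi(g)}\,K_j^\dagger U^Y_g,
\]
so $\bar\chi$ plays the role of the character. One caveat: $T^\dagger$ need not be trace preserving. I would state that the claim concerns complete covariance of the CP map via (iv), which is all the argument uses.

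\textbf{Isometries.} For an isometric operation $T=V\bullet V^\dagger$, covariance means $U^Y_gV\rho V^\dagger {U^Y_g}^\dagger=VU^X_g\rho {U^X_g}^\dagger V^\dagger$ for all $\rho$. I plan to argue this step exactly as in the proof of (iii)$\Rightarrow$(iv). The two rank-one Kraus representations $\{U^Y_gV\}$ and $\{VU^X_g\}$ of the same map must agree up to a phase, $U^Y_gV=e^{i\phi(g)}VU^X_g$; concretely, applying both sides to vectors shows the operators are proportional, and unitarity of the phase follows from norms. The cocycle computation from that proof then shows $e^{i\phi}$ is a one-dimensional representation. Since $\{V\}$ is a Kraus representation with $V$ an intertwiner up to $\chi$, (iv) gives complete covariance.

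\textbf{Inverse.} The inverse on ${\rm im}(T)$ is implemented by $V^\dagger$ restricted to ${\rm im}(V)$. First, ${\rm im}(V)$ is $U^Y_g$-invariant because $U^Y_gV=\chi(g)VU^X_g$, so the representation restricts there. Then the adjoint computation above gives $U^X_gV^\dagger=\overline{\chi(g)}V^\dagger U^Y_g$, and since $V^\dagger\upharpoonright{\rm im}(V)$ is an isometry (indeed unitary onto $X$), it is trace preserving on that domain. Hence (iv) applies.

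The main obstacle is the isometry step: extracting a genuine phase relation between $U^Y_gV$ and $VU^X_g$ from equality of the induced superoperators. Everything else is direct manipulation of (iv).
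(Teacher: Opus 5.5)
Your proposal is correct and follows the paper's proof essentially step by step: products of Kraus operators for the composition, adjoint Kraus operators for $T^\dagger$, and, for the inverse, $U^Y_g$-invariance of ${\rm im}(V)$ followed by restricting $T^\dagger$ to that subspace. The one difference is in the isometry step: the paper simply cites item (iii) of Lemma~\ref{LemCompleteCovariance}, treating $T$ as its own purification on a trivial one-dimensional external system, whereas you re-run the (iii)$\Rightarrow$(iv) argument directly (including the proportionality step the paper leaves implicit). Your remark that $T^\dagger$ is unital rather than trace preserving is a fair caveat that the paper glosses over.
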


\begin{proof}
    Let $T:X \rightarrow Y$ and $T': Y \rightarrow Z$ be completely covariant quantum operations with associated representations $U^X_g,U^Y_g,U^Z_g$ on the input and output spaces and with Kraus representations $\{K_j\}_j$ and $\{K'_i\}_i$ respectively. Then we have $(T' \circ T)(\rho)=\sum_{ij}K'_iK_j\rho K^\dagger_j K^\dagger_i$ such that $\{M_{ij}:=K'_iK_j\}_{ij}$ defines a Kraus representation of $T'\circ T$. Then
\begin{align*}
    M_{ij}U^X_g = K'_iK_jU^X_g=\chi(g)K'_iU^Y_gK_j=\chi(g)\chi'(g)U^Z_gK'_iK_j=(\chi \circ \chi')(g)U^Z_g M_{ij},
\end{align*}
and thus item $(iv)$ of Lemma \ref{LemCompleteCovariance} holds for the composition $T' \circ T$, since $\chi \circ \chi'$ is a one-dimensional representation as well, showing that the composition is completely covariant too. The adjoint map $T^\dagger:Y\to X$ is given by $T^\dagger(\rho)=\sum_j K_j^\dagger \rho K_j$, and we have $K_j^\dagger U^Y_g=(U^Y_{g^{-1}}K_j)^\dagger=( \chi (g^{-1})K_j U^X_{g^{-1}})^\dagger=\chi(g)U^X_g K_j^\dagger$. Hence $(iv)$ holds for $T^\dagger$, showing that it is completely covariant.

To show the second part of the lemma, let $T:X \rightarrow Y$ be covariant and implemented by an isometry $K$ as $T(\rho)=K\rho K^\dagger$. Since $T$ can be viewed as its own purification on the trivial external system with $\mathrm{dim}(E)=1$, item $(iii)$ of Lemma~\ref{LemCompleteCovariance} immediately implies that $T$ is completely covariant. The left-inverse of $T$ is $T^{-1}: \mathrm{im}(T) \rightarrow X$ with $T^{-1}:=T^\dagger \upharpoonright \mathrm{im}(T)$. To say that $T^{-1}$ is (completely) covariant, its domain of definition must carry a proper representation of $\cg$. Indeed, the space ${\rm im}(T)$ consists of the self-adjoint operators fully supported on ${\rm im}(K)$, and this carries $U_g^Y\upharpoonright {\rm im}(K)$ as a representation of $\cg$, which can be seen as follows. Let $|\varphi\rangle\in{\rm im}(K)$, then there is some $|\psi\rangle\in X$ such that $|\varphi\rangle=K|\psi\rangle$, and so
\[
   U_g^Y|\varphi\rangle=U_g^Y K |\psi\rangle=\overline{\chi(g)} K U_g^X |\psi\rangle\in{\rm im}(K),
\]
where $\chi$ is some one-dimensional representation of $\cg$. Hence, as a restriction of $T^\dagger$ to the operators on an invariant subspace of $U_g^Y$, the map $T^{-1}$ is completely covariant too.
\end{proof}

\section{On the Page-Wootters mechanism}

\subsection{Original construction and extension to interacting case}
\label{sec:PageWoottersReview}

We start with a review of the original Page-Wootters mechanism \cite{PageWootters, Smith_2019}, assuming a tensor-product decomposition
\begin{align*}
\mathcal H_{\mathrm{kin}}=\mathcal H_C\otimes \mathcal H_S
\end{align*}
into a clock Hilbert space and a system Hilbert space, together with a physical state satisfying
\begin{align}
\label{eq:PageWoottersConstraint}
H |\psi\rangle:=(H_C\otimes \mathds{1}_S+\mathds{1}_C\otimes H_S)|\psi\rangle=0
\end{align}
in the non-interacting case. The clock states are generated by the clock Hamiltonian,
\begin{align*}
|t\rangle_C=e^{-itH_C}|0\rangle_C,
\end{align*}
and the system state at time $t$ is defined by conditioning on the clock,
\begin{align*}
|\psi_S(t)\rangle=(\langle t|_C\otimes \mathds{1}_S)|\psi\rangle.
\end{align*}
The central point is that the global state may be timeless or stationary while the conditioned subsystem state behaves dynamically with respect to internal clock readings. We also require the clock states to form a resolution of the identity: $\int dt\ket{t}\bra{t}_C=n\mathds{1}_C$. In our examples we may assume the group to be $\mathcal{G}=\mathrm{U}(1)$, and the constant $n$ to be absorbed into the Haar measure. 

The Page-Wootters reduction maps project onto the clock subsystem as \begin{align*}
R^{(t)}_C=\langle t|_C\otimes \mathds{1}_S,
\end{align*}
and we shall see that they are exactly the reduction maps of the perspective-neutral framework \cite{Hamette2021}.

It is also useful to note that one does not need to eliminate the clock factor, as it is done for our generalized framework that goes beyond the subsystem structure. Instead of the reduced state $|\psi_S(t)\rangle$, we may keep the full perspectival state
\begin{align}
\label{eq:DefFullPerspState}
|\Phi(t)\rangle=(|t\rangle\langle t|_C\otimes \mathds{1}_S)|\psi\rangle,
\end{align}
resulting in
\begin{align*}
|\Phi(t)\rangle=|t\rangle_C\otimes |\psi_S(t)\rangle.
\end{align*}

From here we can derive the ordinary Schrödinger evolution in two ways. First, if we eliminate the clock and use
\begin{align}
\label{eq:DiffCohState}
|\psi_S(t)\rangle=(\langle t|_C\otimes \mathds{1}_S)|\psi\rangle,
\end{align}
then differentiating and using
\begin{align*}
-i\frac{d}{dt}\langle t|_C=\langle t|_CH_C
\end{align*}
gives
\begin{align*}
i\frac{d}{dt}|\psi_S(t)\rangle
=
-(\langle t|_CH_C\otimes \mathds{1}_S)|\psi\rangle.
\end{align*}
With the constraint (\ref{eq:PageWoottersConstraint}) we obtain
\begin{align}
\label{eq:SchrEvol}
i\frac{d}{dt}|\psi_S(t)\rangle
=
H_S|\psi_S(t)\rangle.
\end{align}
Therefore, the reduced perspectival state evolves with the time-independent system Hamiltonian. This is the basic Page-Wootters result in the non-interacting case. Second, if instead one keeps the clock factor, then we can define
\begin{align*}
|\Phi(t)\rangle:=|t\rangle_C\otimes |\psi_S(t)\rangle.
\end{align*}
Differentiating yields
\begin{align*}
    i\frac{d}{dt}|\Phi(t)\rangle &= i\frac{d}{dt}(|t\rangle\langle t|_C\otimes \mathds{1}_S)|\psi\rangle \\
    &= i\frac{d}{dt}(|t\rangle_C \otimes \mathds{1}_S)(\langle t|_C\otimes \mathds{1}_S)|\psi\rangle)+i(|t\rangle_C \otimes \mathds{1}_S)\frac{d}{dt}(\langle t|_C\otimes \mathds{1}_S)|\psi\rangle) \\
    &=H_C\otimes \mathds{1}_S |\Phi(t)\rangle + \mathds{1}_C\otimes H_S|\Phi(t)\rangle \\
    &= H|\Phi(t)\rangle,
\end{align*}
by using equations (\ref{eq:DefFullPerspState}), (\ref{eq:DiffCohState}) and (\ref{eq:SchrEvol}). Note that the full perspectival state $|\Phi(t)\rangle$ evolves with respect to the full Hamiltonian $H$. This is expected since we can write $|t\rangle\langle t|_C\otimes \mathds{1}_S=e^{-itH_C}|0\rangle\langle 0|_Ce^{itH_C}\otimes \mathds{1}_S=e^{-itH}(|0\rangle\langle 0|_C \otimes \mathds{1}_S)e^{itH}$, showing that we can equally view the clock states as generated by $H$. This is necessarily true whenever we have no interaction term between the clock and the system. In that case, the clock and the physical Hilbert space are generated by the same constraint $H$, showing that the Page-Wootters reduction of \eqref{eq:DiffCohState} really coincides with the frame reduction of the perspective-neutral approach. It is exactly this feature which generally breaks down, once an interaction term $H_\mathrm{int}$ is introduced and the clock remains the same, as we shall see now.    

Following the work of Smith and Ahmadi \cite{Smith_2019}, the constraint $H$ with interaction term is given by
\begin{align*}
(H_C\otimes \mathds{1}_S+\mathds{1}_C\otimes H_S+H_{\mathrm{int}})|\psi\rangle=0.
\end{align*}
The clock states are still generated solely by $H_C$,
\begin{align*}
|t\rangle_C=e^{-itH_C}|t_0\rangle_C,
\end{align*}
and the conditional system state is still
\begin{align*}
|\psi_S(t)\rangle=(\langle t|_C\otimes \mathds{1}_S)|\psi\rangle.
\end{align*}
As stated, with this the mechanism goes beyond the usual perspective-neutral framework formalism. The physical Hilbert space is now determined by the full interacting constraint, while the clock coherent state system is still generated only by the clock Hamiltonian. These constraints now generate different structures in general, which is neither the case (for the version of the reduction maps) in \cite{Hamette2021}, nor in our generalized framework. 

However, this departure is not a problem for deriving how the resulting evolution deviates from the non-interacting, relational time evolution. In \cite{Smith_2019} it is shown that one still gets a relational evolution law. Differentiating the conditional state exactly as before gives
\begin{align*}
i\frac{d}{dt}|\psi_S(t)\rangle
=
-(\langle t|_CH_C\otimes \mathds{1}_S)|\psi\rangle.
\end{align*}
Using the full constraint, this becomes
\begin{align*}
i\frac{d}{dt}|\psi_S(t)\rangle
=
H_S|\psi_S(t)\rangle
+
(\langle t|_C\otimes \mathds{1}_S)H_{\mathrm{int}}|\psi\rangle.
\end{align*}
Inserting a resolution of the identity in terms of the clock states then yields the modified Schrödinger equation
\begin{align*}
i\frac{d}{dt}|\psi_S(t)\rangle
=
H_S|\psi_S(t)\rangle
+
\int dt'K(t,t')|\psi_S(t')\rangle,
\end{align*}
where
\begin{align*}
K(t,t')=(\langle t|_C\otimes \mathds{1}_S)H_{\mathrm{int}}(|t'\rangle_C \otimes \mathds{1}_S).
\end{align*}
We may also write down the interacting result without eliminating the clock. With
\begin{align*}
|\Phi(t)\rangle=(|t\rangle\langle t|_C\otimes \mathds{1}_S)|\psi\rangle
=
|t\rangle_C\otimes |\psi_S(t)\rangle,
\end{align*}
differentiating gives
\begin{align}
\label{eq:SmithAhmadiFinalForm}
i\frac{d}{dt}|\Phi(t)\rangle
=
(H_C\otimes \mathds{1}_S+\mathds{1}_C\otimes H_S)|\Phi(t)\rangle
+
|t\rangle_C\otimes \int dt'K(t,t')|\psi_S(t')\rangle.
\end{align}
Thus, relational time evolution is now expressed in the form of a time-non-local Schrödinger equation. From the form above we see that this modification comes directly from the interaction term and is not created by projecting out the clock. Projecting out only removes the explicit clock factor and leaves us with the system's part of the same relational evolution. Note that the evolution generated by the time-dependent Hamiltonian in our Examples \ref{ex:PageWoottersContinuum} and \ref{ex:PageWoottersFiniteAbelian} can be recovered as a special case of (\ref{eq:SmithAhmadiFinalForm}) in the $\mathcal{H}_1\otimes \mathcal{H}_2$-factorization for $H_C\otimes \mathds{1}_S=P_1$, $\mathds{1}_C \otimes H_S=P_2$ and $H_\mathrm{int}=K(X_1-X_2)^2$.

Now, our generalized formalism in the context of Examples \ref{ex:PageWoottersContinuum} and \ref{ex:PageWoottersFiniteAbelian} makes it possible to introduce a valid QRF describing a non-local clock system whose dynamics is decoupled despite the presence of certain interaction terms in the Hamiltonian. Thus, we are able to remove the time-dependence of the Hamiltonian and possibly non-local part from the relational evolution and recover the original Page-Wootters construction. While this is the primary contribution added by our formalism, we remark that it also provides the tools to generalize the form of Eq.~\eqref{eq:SmithAhmadiFinalForm} derived in \cite{Smith_2019}, by extending it to more general clock seeds $M_e$ on $\mathcal{H}_C\otimes \mathcal{H}_S$ instead of remaining with the decomposition $|t\rangle\langle t|_C \otimes \mathds{1}_S$.  

\subsection{Non-local clock example in the finite Abelian setting}

In this subsection we will give a treatment of Example \ref{ex:PageWoottersContinuum} in the context of our generalized formalism, which introduced a non-local clock in the center-of-mass frame in order to recover the standard Page-Wootters mechanism. We will give a version of that example in the finite Abelian setting of \cite{KrummHoehnMueller, HoehnKrummMueller}.

\begin{example}[Non-local Page-Wootters clock for finite Abelian groups]
\label{ex:PageWoottersFiniteAbelian}
We now extend the finite Abelian setting of the earlier examples from a pure translation constraint to a Page-Wootters-type
constraint that carries an interaction term between the particles. Two particles live on $\mathcal{H}_{\mathrm{kin}}=\left(\mathbb{C}^{|\mathbb{Z}_n|}\right)^{\otimes 2}\cong\mathbb{C}^n \otimes \mathbb{C}^n=:\mathcal{H}_1 \otimes \mathcal{H}_2$,
with position and momentum operators
\begin{align*}
    X_i = \sum_{g\in \mathbb{Z}_n}g|g\rangle\langle g|_i \otimes \mathds{1}_{\bar i}, \qquad P_i = \sum_{p\in \mathbb{Z}_n}p|p\rangle\langle p|_i\otimes \mathds{1}_{\bar i} \quad \mathrm{with} \quad \ket{p}_i=\frac{1}{\sqrt{n}}\sum_{g\in \mathbb{Z}_n}\bar \chi_p(g)\ket{g}_i,
\end{align*}
where $\chi_p(g)=e^{i2\pi pg/n}$ as defined in Subsection \ref{SubsecAbelian}, and $i \in \{1,2\}$.

We now add to the gauge generator of global translations $P_1+P_2$ a harmonic oscillator interaction-term and obtain
\begin{align*}
    H = P_1+P_2 + K(X_1-X_2)^2,
    \qquad K\in\mathbb{Z}_n
\end{align*}
as our new constraint. Since $(X_1-X_2)^2$ commutes with $P_1+P_2$, this yields a valid
$\mathbb{Z}_n$-action 
\begin{align}
\label{eq:ActionU^K}
    U_{g}^{(K)}|x_1,x_2\rangle &= e^{-i\frac{2\pi}{n}H}|x_1, x_2\rangle \nonumber \\
    &= e^{-i\frac{2\pi g}{n}K(X_1-X_2)^2}e^{-i\frac{2\pi g}{n}(P_1+P_2)}|x_1, x_2\rangle \nonumber \\
    &=
    e^{-i\frac{2\pi g}{n}K(x_1-x_2)^2}
    |x_1+g, x_2+g\rangle
\end{align}
on $\mathcal{H}_{\mathrm{kin}}$.
Setting $K=0$ recovers the action $U_g$ of the previous finite Abelian case.

The states $|d;P\rangle$
\begin{align}
    |d;P\rangle=\frac{1}{\sqrt{n}}\sum_{g\in \mathbb{Z}_n}\bar\chi_P(g)|g,g+d\rangle
\end{align}
with total momentum $P=p_1+p_2$, as defined in Subsection \ref{SubsecAbelian} and
\cite{KrummHoehnMueller,HoehnKrummMueller} ($\mathbf{d}\rightarrow d$ for two particles), are joint eigenstates of $P_1+P_2$ and
$X_1-X_2$. Using
\begin{align*}
    (X_1-X_2)|d;P\rangle = -d|d;P\rangle,
    \qquad
    e^{-i\frac{2\pi g}{n}(P_1+P_2)}|d;P\rangle
    = \chi_P(g)|d;P\rangle,
\end{align*}
one finds
\begin{align*}
    H|d;P\rangle= (-P+Kd^2)|d;P\rangle.
\end{align*}
The kernel of $H$ is therefore
\begin{align*}
    \mathcal{H}_{\mathrm{phys}}
    =
    \mathrm{span}\left\{ |d;Kd^2\rangle
    \big| d\in\mathbb{Z}_n\right\},
\end{align*}
i.e.\ for each relative-distance coordinate $d$, the corresponding $\chi$ is locked to it by $P=Kd^2$. At $K=0$ this collapses to
$\mathcal{H}_{\mathrm{phys}}=\mathrm{span}\{|d;0\rangle\}_d$
as in the non-interacting case.

Now, with the seed
\begin{align*}
    M_e^{(1)}= \sqrt{n}|0\rangle\langle 0|_1\otimes\mathds{1}_2\upharpoonright\mathcal{H}_\mathrm{phys},
\end{align*}
together with $M_g^{(1)}:= U_g^{(K)}M^{(1)}_0= \sqrt{n}|g\rangle\langle g|_1\otimes\mathds{1}_2\upharpoonright\mathcal{H}_\mathrm{phys}$ and $\frac 1 n \sum_g(M_g^{(1)})^\dagger M_g^{(1)}=\mathds{1}_\mathrm{phys}$ representing the frame of the first particle, the perspectival state at step $g$ is, for a general physical
state $|\psi\rangle=\sum_d c_d|d;Kd^2\rangle$,
\begin{align*}
    |\psi(g)\rangle^{(1)}
    = U^{(K)}_gM_0^{(1)}\,|\psi\rangle = \frac{1}{\sqrt{n}}\sum_d c_dU^{(K)}_g |0,d\rangle = \frac{1}{\sqrt{n}}\sum_d c_d\,
    e^{-i\frac{2\pi g}{n}Kd^2}
    |g,d+g\rangle = |g\rangle_1\otimes|\psi_S(g)\rangle_2,
\end{align*}
with system state
\begin{align*}
    |\psi_S(g)\rangle_2=
   \frac{1}{\sqrt{n}}\sum_y c_{y-g}\,
    e^{-i\frac{2\pi g}{n}K(y-g)^2}\,
    |y\rangle_2,
\end{align*}
parametrized by the second particle's position $y:=d+g$.

Comparing this with the expression
\begin{align*}
    |\psi_S(g+1)\rangle_2
    &=
    \frac{1}{\sqrt{n}}\sum_y c_{y-(g+1)}
    e^{-i\frac{2\pi(g+1)}{n}K(y-(g+1))^2}
    |y\rangle_2 \\
    &= e^{-i\frac{2\pi}{n}K((g+1)-X_2)^2}\frac{1}{\sqrt{n}}\sum_y c_{y-(g+1)}
    e^{-i\frac{2\pi g}{n}K(y-(g+1))^2}
    |y\rangle_2 \\
    &= e^{-i\frac{2\pi}{n}K((g+1)-X_2)^2}e^{-i\frac{2\pi}{n} P_2}\frac{1}{\sqrt{n}}\sum_{y} c_{y-(g+1)}
    e^{-i\frac{2\pi g}{n}K(y-(g+1))^2}
    |y-1\rangle_2 \\
    &= e^{-i\frac{2\pi}{n}K((g+1)-X_2)^2}e^{-i\frac{2\pi}{n} P_2}\frac{1}{\sqrt{n}}\sum_{y'} c_{y'-g}
    e^{-i\frac{2\pi g}{n}K(y'-g)^2}
    |y'\rangle_2 \\
    &=e^{-i\frac{2\pi}{n}K((g+1)-X_2)^2}e^{-i\frac{2\pi}{n} P_2}|\psi_S(g)\rangle_2 \\
    &=: U^{\mathrm{eff}}_{S,2}(g+1,g)|\psi_S(g)\rangle_2
\end{align*}

after another time step, we see that we cannot in general write $|\psi_S(g+1)\rangle_2 = U^{\mathrm{eff}}_{S,2}
|\psi_S(g)\rangle_2$ for any fixed unitary $U^{\mathrm{eff}}_{S,2}$
on $\mathcal{H}_2$, which is generated by some $g$-independent Hamiltonian $H_{S,2}^\mathrm{eff}$. The perspectival
system state does evolve, but not under a
time-independent Schrödinger generator. Thus, choosing the first particle as our clock does not result in a decoupling between it and the rest of the configuration.

We now use the same non-local projectors that Example \ref{ex:centOfMass} introduced as the center-of-mass-frame for the $K=0$ case:
\begin{align*}
    M_e^{(\mathrm{cm})}&= \sqrt{n}\sum_{x\in\mathbb{Z}_n}
    |-\lfloor x/2\rfloor,x-\lfloor x/2\rfloor\rangle\langle -\lfloor x/2\rfloor,x-\lfloor x/2\rfloor|\upharpoonright \mathcal{H}_\mathrm{phys}, \\
 M_g^{(\mathrm{cm})}&:= U_g^{(K)}M^{(\mathrm{cm})}_0=\sqrt{n}\sum_{x\in\mathbb{Z}_n}
    |-\lfloor x/2\rfloor+g,x-\lfloor x/2\rfloor+g\rangle\langle -\lfloor x/2\rfloor+g,x-\lfloor x/2\rfloor+g|\upharpoonright \mathcal{H}_\mathrm{phys}, \\
    &\frac 1 n\sum_g(M_g^{(\mathrm{cm})})^\dagger M_g^{(\mathrm{cm})}=\mathds{1}_\mathrm{phys}.
\end{align*}

Given our basis states of $\mathcal{H}_\mathrm{phys}$,
$|d;Kd^2\rangle=\tfrac{1}{\sqrt n}\sum_g e^{-i2\pi Kd^2g/n}
|g,h+g\rangle$, we obtain
\begin{align*}
    M_e^{(\mathrm{cm})}|d;Kd^2\rangle
    =
   \frac{1}{\sqrt{n}}
    e^{+i\frac{2\pi}{n}K \lfloor d/2\rfloor d^2}
    |-\lfloor d/2\rfloor,d-\lfloor d/2\rfloor\rangle,
\end{align*}
and the following for a generic physical state $|\psi\rangle=\sum_d c_d\,
|d;Kd^2\rangle$:
\begin{align*}
    M_e^{(\mathrm{cm})}|\psi\rangle
    =
    \frac{1}{\sqrt{n}}\sum_d c_d\,
    e^{+i\frac{2\pi}{n}K\lfloor d/2\rfloor d^2}|-\lfloor d/2\rfloor,d-\lfloor d/2\rfloor\rangle.
\end{align*}
Evolving as in Eq. \eqref{eq:ActionU^K} and using $(x_1-x_2)^2=d^2$, we find
\begin{align*}
    |\psi(g)\rangle^{(\mathrm{cm})}
    &= U^{(K)}_gM_e^{(\mathrm{cm})}|\psi\rangle \\
    &=  \frac{1}{\sqrt{n}}\sum_d c_d
    e^{+i\frac{2\pi}{n}K\lfloor d/2\rfloor d^2}
    e^{-i\frac{2\pi g}{n}Kd^2}|g-\lfloor d/2\rfloor,g+d-\lfloor d/2\rfloor\rangle.
\end{align*}
Each ket has center-of-mass coordinate $X:=\lfloor (x_1+x_2)/2\rfloor=g$ and
relative coordinate $r:=x_2-x_1=d$. Writing $|g-\lfloor d/2\rfloor\rangle_1 \otimes |g+d-\lfloor d/2\rfloor\rangle_2
=|X=g\rangle_\mathrm{cm}\otimes|r=d\rangle_\mathrm{rel}$ in the center-of-mass/relative-distance factorization yields
\begin{align*}
    |\psi(g)\rangle^{(\mathrm{cm})}
    =
    |X=g\rangle_\mathrm{cm}
    \otimes
    |\psi_S(g)\rangle_{\mathrm{rel}},
\end{align*}
with the system state on the relative coordinate being
\begin{align*}
    |\psi_S(g)\rangle_{\mathrm{rel}}
    =
    \frac{1}{\sqrt{n}}\sum_d c_de^{+i\frac{2\pi}{n}K\lfloor d/2\rfloor d^2}
    e^{-i\frac{2\pi g}{n}Kd^2}
    |d\rangle_{\mathrm{rel}}.
\end{align*}
The one-step difference is therefore
\begin{equation}
\label{eq:cleanSchrödinger}
    |\psi_S(g+1)\rangle_{\mathrm{rel}}
    =
    e^{-i\frac{2\pi}{n}K(X_1-X_2)^2}
    |\psi_S(g)\rangle_{\mathrm{rel}}.
\end{equation}
i.e.\ created by a fixed, $g$-independent unitary
$U_{S,\mathrm{rel}}^{\mathrm{eff}}=e^{-i2\pi H_{S,\mathrm{rel}}^{\mathrm{eff}}/n}$ generated by the
time-independent, effective Hamiltonian
\begin{align*}
    H_{S,\mathrm{rel}}^{\mathrm{eff}}=K(X_1-X_2)^2,
\end{align*}
acting non-trivially on the relative-distance factor alone. Eq.
(\ref{eq:cleanSchrödinger}) is the discrete Schrödinger equation
in the center-of-mass perspective. The center-of-mass frame
therefore constitutes a decoupled clock, while the local seed does not.
\end{example}

\end{document}